 \newcommand{\bs}{\bigskip}
 \newcommand{\ms}{\medskip}
 \newcommand{\n}{\noindent}
 \newcommand{\s}{\smallskip}
 \newcommand{\hs}[1]{\hspace*{ #1 mm}}
 \newcommand{\vs}[1]{\vspace*{ #1 mm}}
 \newcommand{\setempty}{\varnothing}
 \newcommand{\real}{\mathbb{R}}
 \newcommand{\nat}{\mathbb{N}}
 \newcommand{\integer}{\mathbb{Z}}
 \newcommand{\complex}{\mathbb{C}}
 \newcommand{\etalc}{\textrm{et al.}}
 \newcommand{\AAA}{{\cal A}}
 \newcommand{\FF}{{\cal F}}
 \newcommand{\KK}{{\cal K}}
 \newcommand{\LL}{{\cal L}}
 \newcommand{\NN}{{\cal N}}
 \newcommand{\GG}{{\cal G}}
 \newcommand{\MM}{{\cal M}}
 \newcommand{\TT}{{\cal T}}
 \newcommand{\SSS}{{\cal S}}
 \newcommand{\PP}{{\cal P}}
 \newcommand{\np}{\mathrm{NP}}
 \newcommand{\poly}{\mathrm{poly}}
 \newcommand{\reg}{\mathrm{REG}}
 \newcommand{\cfl}{\mathrm{CFL}}
 \newcommand{\ucfl}{\mathrm{UCFL}}
 \newcommand{\matrices}[4]{\left(\:\begin{array}{cc} #1 & #2 \\%
      #3 & #4   \end{array} \right)}
\theoremstyle{plain}
 \newtheorem{theorem}{Theorem}[section]
 \newtheorem{lemma}[theorem]{Lemma}
 \newtheorem{proposition}[theorem]{{\bf Proposition}}
  \newtheorem{definition}[theorem]{Definition}}
\newtheorem{example}[theorem]{Example}}
 \newenvironment{proofof}[1]{\vspace*{5mm} \par \noindent
         {\bf Proof of #1.\hs{2}}}{\hfill$\Box$ \vspace*{3mm}}
 \newenvironment{proof}{\par \noindent
            {\bf Proof. \hs{2}}}{\hfill$\Box$ \vspace*{3mm}}
 \newcommand{\ceilings}[1]{\lceil #1 \rceil}
 \newcommand{\pair}[1]{\langle #1 \rangle}
 \newcommand{\qubit}[1]{| #1 \rangle}
 \newcommand{\bra}[1]{\langle #1 |}
 \newcommand{\ket}[1]{| #1 \rangle}
 \newcommand{\measure}[2]{\langle #1 | #2 \rangle}
 \newcommand{\density}[2]{| #1 \rangle \! \langle #2 |}
\newcommand{\ignore}[1]{}
\newcommand{\track}[2]{[\:\begin{subarray}{c} #1 \\%
      #2 \end{subarray} ]}
\newcommand{\cent}{|\!\! \mathrm{c}}
\newcommand{\dollar}{\$}
 \newcommand{\bql}{\mathrm{BQL}}
 \newcommand{\aeqs}[1]{\mathrm{AEQS}( \mathrm{ #1 } )}
 \newcommand{\diag}{\mathrm{diag}}
 \newcommand{\trace}{\mathrm{tr}}
 \newcommand{\ilog}{\mathrm{ilog}\:}
\begin{document}

\pagestyle{plain}
\pagenumbering{arabic}
\setcounter{page}{1}
\setcounter{footnote}{0}

\begin{center}
{\Large {\bf How Does Adiabatic Quantum Computation \s\\ Fit into Quantum Automata Theory?}}\footnote{This current article extends and corrects its preliminary report that has appeared in the Proceedings of the
21st IFIP WG 1.02 International Conference on  Descriptional Complexity of Formal Systems (DCFS 2019), Ko\v{s}ice, Slovakia, July 17--19, 2019, Springer-Verlag,
Lecture Notes in Computer Science, vol. 11612, pp. 285--297, 2019.}
\bs\s\\

{\sc Tomoyuki Yamakami}\footnote{Present Affiliation: Faculty of Engineering, University of Fukui, 3-9-1 Bunkyo, Fukui 910-8507, Japan}
\end{center}
\ms


\begin{abstract}
Quantum computation has emerged as a powerful computational medium of our time, having demonstrated the remarkable efficiency in factoring a positive integer and searching databases faster than any currently known classical computing algorithm. Adiabatic evolution of quantum systems have been studied as a potential means that physically realizes quantum computation. Up to now, all the research on adiabatic quantum systems has dealt with polynomial time-bounded computation and little attention has been paid to, for instance, adiabatic quantum systems consuming only constant memory space. Such quantum systems can be modeled in a form similar to quantum finite automata. This exposition dares to ask a bold question of how to make adiabatic quantum computation fit into the rapidly progressing framework of quantum automata theory. As our answer to this eminent but profound question, we first lay out a fundamental platform to carry out adiabatic evolutionary quantum systems (AEQSs) with limited computational resources (in size, energy, spectral gap, etc.) and then establish how to construct such AEQSs by operating suitable families of quantum finite automata.
We further explore fundamental structural properties of decision problems (as well as promise problems) solved quickly by the appropriately constructed AEQSs.

\ms
\n{\bf Key words.} {adiabatic quantum computation, quantum finite automata, Hamiltonian, Schr\"{o}dinger equation, decision problem, promise problem}
\end{abstract}

\sloppy
\section{Background, Motivations, and Challenges}\label{sec:introduction}

We will explain historical background, motivational discussions, and challenging open questions, which trigger our intensive research of this exposition.

\subsection{Adiabatic Quantum Computation}\label{sec:quant-oracle}

The primary purpose of \emph{computation} is to solve given computational problems efficiently with as fewer resources as possible. Most of the computing devices at present time are in fact \emph{programmable machines}  that perform basic operations mechanically in enormous speed and precision.
As a new paradigm founded solely on quantum mechanics, \emph{quantum computation} has gained large popularity over the past few decades through numerous physical experiments. There are already several important milestones in our time that indicate the supremacy of quantum computation over the existing computers. Shor \cite{Sho97}
discovered a polynomial-time quantum algorithm to factor any positive integer and compute discrete logarithms  whereas Grover
\cite{Gro96,Gro97}
presented a quantum way to search for a given key in an unstructured database quadratically faster than the traditional search.

Early quantum-mechanical models of computation were proposed in the 1980s by Benioff \cite{Ben80} and Deutsch
\cite{Deu85} and these computational models were later refined by Yao \cite{Yao93} and Bernstein and Vazirani \cite{BV97} as \emph{quantum circuit} and \emph{quantum Turing machine} (QTM), which respectively  extend the classical models of Boolean circuit and Turing machine.
Beyond a single-tape QTM model of Bernstein and Vazirani,
a multi-tape QTM model was further studied in \cite{Yam99,Yam03} as well as \cite{ON00}.
A recursion-theoretic formulation  was also proposed in \cite{Yam17} with no use of machinery to capture quantum computation.
Those models are powerful enough to implement the quantum algorithms of Shor and Grover but are hard to realize physically as real-life computing devices.
Quantum Turing machines seem to be slightly more contrive than quantum circuits but they are a better manifestation of ``programmable'' computers because they are made of input/work tapes, tape heads, and finite-control units.
\emph{Quantum finite(-state) automata} can be seen as quantum Turing machines operating with  only constant work space.
Early restrictive models of quantum finite automata were studied by Moore and Crutchfield \cite{MC00} and by Kondacs and Watrous \cite{KW97}.

There also exists another model for classical computation, known as \emph{simulated annealing} (or thermal annealing), which has been implemented as a physical system performing various computations based on thermodynamics. \emph{Quantum annealing} was proposed based on quantum mechanics to extend simulated annealing and a tunneling effect in quantum mechanics makes quantum annealing outperform simulated annealing (see, e.g., \cite{SMTC02}).
In quantum annealing, computation is viewed as a process of evolution of a quantum state $\qubit{\psi(t)}$ at time $t$ in a given quantum system    according to the Schr\"{o}dinger equation $\imath\hbar \frac{d}{dt}\qubit{\psi(t)} = H(t)\qubit{\psi(t)}$ using a specified time-dependent \emph{Hamiltonian} $H(t)$ (which is a Hermitian matrix), where $\hbar$ is the \emph{reduced Planck constant} (i.e., Plank's constant $\approx 6.63\times 10^{-34}$ joule/second divided by $2\pi$).
In early 2000s,  Farhi, Goldstone, Gutmann, and Sipser \cite{FGGS00} and Farhi, Goldstone, Gutmann, Lapan, Lundgren, and Preda \cite{FGG+01} developed quantum algorithms based on a variant of quantum annealing, called \emph{adiabatic quantum computation}, in which an initial quantum system evolves to find a unique solution, which is represented by the  ground state (i.e., an eigenstate of the smallest eigenvalue) of a final quantum system. The adiabatic quantum algorithm of Farhi \etalc~\cite{FGGS00}, for instance, solves Search-2SAT (i.e., a search version of the satisfiability problem for 2CNF Boolean formulas). Another paper by Farhi \etalc~\cite{FGG+01} demonstrated how to solve an $\np$-complete problem, known as the Exact Cover Problem.
Later, van Dam, Mosca, and Vazirani \cite{DMV01} showed an exponential lower-bound for adiabatic quantum computation to solve the Minimum
Hamming Weight Problem.

To be more precise, \emph{adiabatic quantum computation} is dictated by  two Hamiltonians $H_{ini}$ and $H_{fin}$ of dimension $2^n$ and a closeness bound $\varepsilon$ such that the start of the system is $H_{ini}$'s unique ground state and the outcome of the system becomes the unique ground state of $H_{fin}$, provided that the ``uniqueness'' condition is guaranteed for $H_{ini}$ and $H_{fin}$.
The time-dependent Hamiltonian $H(t)$ is given as a linear combination $(1-\frac{t}{T})H_{ini}+\frac{t}{T}H_{fin}$ of $H_{ini}$ and $H_{fin}$.
Such a quantum system starts with the ground state $\qubit{\psi_g(0)}$ of the initial Hamiltonian $H(0)=H_{ini}$ at time $t=0$. If $H(t)$ changes sufficiently slowly, the evolving quantum state $\qubit{\psi(t)}$ stays close to the ground state $\qubit{\psi_g(t)}$ of $H(t)$. At time $t=T$, the ground state of the quantum system becomes $\qubit{\psi_g(T)}$ of the final Hamiltonian $H(T)=H_{fin}$. We demand that this ground state is sufficiently close to the quantum state encoding the desired solution of a target computational problem.

For the efficiency of adiabatic quantum computation, we are primarily concerned with the \emph{evolution time} $T$ of the quantum system and the \emph{structural complexity} of two Hamiltonians $H_{ini}$ and $H_{fin}$ of the system. The \emph{running time} of the system, which is determined by the minimal evolution time of the system as well as the maximal matrix norm of $H(t)$, is basically proportional to the reciprocal of the \emph{spectral gap} of $H_{ini}$ and $H_{fin}$ according to the so-called \emph{adiabatic theorem} \cite{Kat51,Mes58}. For the details, see Section \ref{sec:system-evolution}. A crucial point is how fast the evolution of the quantum system takes place. Unfortunately, it turns out that the aforementioned algorithm of Farhi $\etalc$ requires exponential time to execute \cite{DMV01}. Nevertheless, Aharonov, van Dam, Kemp, Landau, Lloyd, and Regev \cite{ADK+07} demonstrated how adiabatic quantum computation can simulate quantum-circuit computation efficiently.
In addition, van Dam, \etalc~\cite{DMV01} gave a detailed analysis of adiabatic quantum computation and presented how to simulate adiabatic quantum computation on quantum circuits.  The simulations of  Aharonov $\etalc$ and van Dam $\etalc$ together establish the (polynomial) equivalence between adiabatic quantum computation and machine-based  quantum computation.
Although adiabatic quantum computation is no more powerful than standard quantum computation, it seems to remain as significant potentials to realize restricted variants of quantum computation. With the current technology, however, it still seems to be difficult to build a large-scale adiabatic quantum computing device since making local evolution in a large system is quite sensitive to \emph{decoherence}. It is rather better to make global evolution in a small-scale quantum system.

\subsection{Complexity of Hamiltonians}

In an early stage, adiabatic quantum computing was used to solve optimization problems; in contrast, this exposition seeks to solve \emph{decision problems}  using adiabatic quantum computation. Decision problems can be treated as (formal) languages by identifying yes/no answers for the decision problems with inclusion/exclusion of inputs to languages. For such languages, we are more concerned with adiabatic quantum computation that can determine the acceptance/rejection of inputs rather than finding solutions.

Since the eigenvalue properties of two Hamiltonians dictate the performance of the adiabatic quantum systems, the key to adiabatic computation is how well we can prepare two essential Hamiltonians before the start of the computation.
Our task is therefore to find out how to encode appropriate witnesses of  answers to each membership question of a target language into Hamiltonians so that the constructed Hamiltonians automatically lead to the desired answers by adiabatic quantum evolution.

From a practical viewpoint, we are more interested in generating
Hamiltonians with fewer resources in the encoding process of the actual adiabatic quantum computation. The generation of Hamiltonians can be done in several different ways. A quantum Ising model, for example, provides a particular framework for constructing such Hamiltonians in terms of linear forms of Pauli matrices (see, e.g., \cite{Sac11}).
Unlike the quantum Ising model, Farhi \etalc~\cite{FGG+01} and van Dam \etalc~\cite{DMV01} presented a natural way to construct Hamiltonians  using quantum circuits as well as QTMs.
Since QTMs are a quite powerful computational model, it is desirable to place reasonable restrictions on their resources needed for the construction of the Hamiltonians.
It therefore remains more realistic to build Hamiltonians using ``resource-bounded'' quantum-mechanical devices.
In particular, we pay our attention to a \emph{constant-memory model} of quantum Turing machine, which is conceptually realized by \emph{quantum finite(-state) automata}, because, in general, quantum finite automata  have been considered as one of the most fundamental machine models of algorithmic computation.
To seek for future potentials of adiabatic quantum computation in such a realistic setting, it is worth considering quantum finite automata as an algorithmic tool in generating the desired Hamiltonians.

\subsection{Models of Quantum Finite Automata}\label{sec:automata-theory}

The \emph{theory of quantum finite automata} has made a remarkable progress  since the first installment of quantum finite automata in late 1990s
(see, e.g.,  \cite{AY15,Gru00}). The early models of quantum finite automata were rather simple in their mechanism.
In general, a quantum finite automaton takes an input string written on its read-only input tape and, as reading input symbols one by one, it changes its inner states in a quantum fashion until it finally terminates. This entire process can be described as a series of quantum transitions of inner states determined by scanned input symbols.
\emph{One-way measure-once quantum finite automata} of Moore and Crutchfield \cite{MC00} operate by applying predetermined sets of unitary transforms to superpositions of inner states as the machine scans input symbols one by one until reaching the endmarker, and then perform projective  measurements to determine the outcome of the quantum computation.
Since a tape head always moves to the next tape cell after reading each  tape symbol, the tape head can be viewed as a classical device.
In contrast, Kondacs and Watrous \cite{KW97} studied \emph{two-way measure-many quantum finite automata}, which make quantum moves and take projective measurements at  each application step of unitary transformations.
As variants and extensions of those one-way and two-way quantum finite automata, numerous models have been proposed in the literature.
To empower the early models of quantum finite automata, a more general model, known as \emph{two-way quantum finite automata with mixed states and quantum operations}, for example, was studied under various names \cite{ABG+06,FOM09,YS11}. This general model is (computationally) equivalent to a garbage-tape model of two-way quantum finite automata \cite{Yam19b}.
Recently, a ``nonuniform'' analogue of a quantum automata family has been discussed in \cite{VY15,Yam19b,Yam19c}. In particular, a nonuniform family of polynomial-size two-way quantum finite automata with garbage tapes nicely captures nonuniform logarithmic-space quantum computation \cite{Yam19b}.

\subsection{Our Challenges in This Exposition}\label{sec:challenge}

Since a new paradigm of adiabatic quantum computation looks quite different from the standard framework of quantum finite automata described in Section \ref{sec:automata-theory}, we face the following challenging question. Is it possible for us to make adiabatic quantum computation fit into the framework of quantum automata theory?
This exposition attempts to answer this question affirmatively by proposing a reasonable platform for an automata-theoretic discussion on the efficient construction of Hamiltonians that are necessary to carry out the desired  adiabatic quantum  computation.

For a further discussion, nevertheless, it is quite useful to set up a formal quantum system that realizes adiabatic quantum computation in such a way that we can handle it using even memory-restricted computing devices, in particular,
quantum finite automata.
Aiming at capturing an essence of adiabatic quantum computation in terms of quantum finite automata, we first lay out a scaled-down model of adiabatic quantum computation, which we call an \emph{adiabatic evolutionary quantum system} (AEQS, pronounced as ``eeh-ks''). An AEQS $\SSS$ is composed of an input alphabet $\Sigma$, a system size parameter $\mu$ mapping $\Sigma^*$ to natural numbers, an accuracy bound $\varepsilon\in[0,1]$, two $2^{m(x)}$-dimensional Hamiltonians $H^{(x)}_{ini}$ and $H^{(x)}_{fin}$ for every input $x$, and an acceptance/rejection criteria pair $(S^{(n)}_{acc},S^{(n)}_{rej})$ for each system size $n$.
Given every input $x\in\Sigma^*$, $\SSS$ uses $H^{(x)}_{ini}$ and $H^{(x)}_{fin}$ to perform adiabatic quantum computation. When the computation terminates, $\SSS$ accepts or rejects the input if the ground state of $H^{(x)}_{fin}$ is close enough to a Hilbert space spanned by the vectors in  $S^{(m(x))}_{acc}$ or $S^{(m(x))}_{rej}$, respectively.
The precise definition of AEQS will be given in Section \ref{sec:AEQS}.

This exposition proposes the use of quantum finite automata as a mechanical tool to construct the Hamiltonians of AEQSs step by step as we read input symbols one by one.
For this purpose, we need to modify the original form of quantum finite automata by removing initial inner states and accepting/rejecing inner states.
The modified automata are conventionally dubbed as \emph{quantum quasi-automata} and we need to discuss how to design (or program) such machines to produce the desired Hamiltonians.

The rest of this exposition will be organized as follows.
After giving in Section \ref{sec:preparation} the basic notions and notation necessary for later discussions,  we will explain the mechanism of an adiabatic evolution of quantum systems in Section \ref{sec:system-evolution}. As a model of such adiabatic evolutionary system, we will describe AEQSs in details and show  in Section \ref{sec:AEQS} that AEQSs are powerful enough to solve all decision problems.
Our tools of quantum quasi-automata will be explained in Section \ref{sec:quasi-automata}.
In Section \ref{sec:example}, we will demonstrate how to design (or program) AEQSs for six simple examples of decision problems (some of which are actually promise problems), indicating the adequacy of the formulation of our AEQSs. In addition, basic structural properties of AEQSs will be briefly discussed.
Section \ref{sec:simulation} will present more general simulation processes of four types of finite automata on appropriate AEQSs. In contrast, we will show upper bounds of certain conditional AEQSs in Section \ref{sec:complexity-AEQS}.

This exposition is merely an initial attempt to expand the scope of adiabatic quantum computability of the past literature and to relate it to quantum finite automata through the fundamental framework of AEQSs.
Our new approach is likely to open a door to a further exploration of the ``practical'' use of adiabatic quantum computation under various natural restrictions imposed by, for instance, quantum finite automata.
We strongly expect our work to mark the beginning of a series of exciting research, aiming at the deeper understanding of adiabatic quantum computation in theory and in practice.

\section{Preparation: Notions and Notation}\label{sec:preparation}

We will provide fundamental notions and notation necessary to read through the subsequent sections.
Some notation slightly differs from the standard one but will prove itself to be more convenient for our arguments.

\subsection{Numbers, Vectors, and Matrices}\label{sec:numbers}

The notation $\nat$ expresses the set of all \emph{natural numbers} (that is, nonnegative integers) and we denote $\nat-\{0\}$ by $\nat^{+}$. Given two integers $m$ and $n$ with $m\leq n$, the \emph{integer interval} $[m,n]_{\integer}$ is the set $\{m,m+1,m+2,\ldots,n\}$, which is compared to a real interval $[a,b]$. For simplicity, we abbreviate $[1,n]_{\integer}$ as $[n]$ if $n\geq1$.
In contrast, $\complex$ denotes the set of all \emph{complex numbers}; in particular, we set $\imath = \sqrt{-1}$.
The \emph{complex conjugate} of a complex number $\alpha$ is expressed as $\alpha^*$.
Throughout this exposition, all \emph{polynomials} are assumed to have nonnegative integer coefficients and all \emph{logarithms} are taken to the base $2$. For convenience, we set $\log{0}=0$ and $\ilog{x} = \ceilings{\log{x}}$ for any $x\geq0$.
Given a finite set $S$, the notation $\PP(S)$ expresses the \emph{power set} of $S$, namely, the set of all subsets of $S$, and $|S|$ denotes the \emph{cardinality} of $S$.

For the sake of convenience, a function $f$ on $\nat$ (i.e., from $\nat$ to $\nat$) is said to be \emph{inverse-polynomially large} if $f(n)$ is at least the reciprocal of a certain polynomial $p$, namely, $f(n)\geq 1/p(n)$ for every number $n\in\nat$.
Similarly, $f$ is \emph{inverse-exponentially large} if there is a  polynomial $p$ satisfying $f(n)\geq 1/2^{p(n)}$ for all $n\in\nat$,  whereas $f$ is \emph{inverse-constantly large} if an appropriate constant $c\geq1$ ensures $f(n)\geq 1/c$ for any $n\in\nat$.
We assume the existence of an efficient bijection $\pair{\cdot,\cdot}$ from $\nat\times\nat$ to $\nat$ so that (i) we can easily encode $x$ and $y$ to $\pair{x,y}$ and (ii) we can easily decode $x$ and $y$ from $\pair{x,y}$. Such a function is known as a \emph{pairing function}. We can easily expand it to a bijection from $\nat^k$ to $\nat$ by setting $\pair{x_1,x_2,\ldots,x_k} = \pair{\pair{\pair{\pair{x_1,x_2},x_3},\ldots},x_k}$ for a fixed constant $k\geq2$.

In this exposition, we deal only with \emph{finite-dimensional Hilbert spaces}. To express (column) vectors of such a Hilbert space, we use Dirac's ``ket'' notation $\qubit{\cdot}$. The \emph{dual vector} of $\qubit{\phi}$ is denoted by $\bra{\phi}$ (the ``bra'' notation). A \emph{density operator} (or a \emph{density matrix}), which  is used to express a \emph{mixed quantum state}, is a positive operator whose trace equals $1$.
Given a real number $\varepsilon\in[0,1]$ and two vectors $\qubit{\phi}$ and $\qubit{\psi}$ in the same Hilbert space, we say that $\qubit{\phi}$ is \emph{$\varepsilon$-close to} $\qubit{\psi}$ if $\|\qubit{\phi}-\qubit{\psi}\|_2\leq\varepsilon$, where $\|\cdot\|_2$ indicates the $\ell_2$-norm, i.e., $\|\qubit{\phi}\|_2 = \sqrt{\measure{\phi}{\phi}}$.
For readability, we often abbreviate, e.g., the tensor product $\qubit{p}\otimes \qubit{q}$ of two basis vectors $\qubit{p}$ and $\qubit{q}$ as $\qubit{p}\qubit{q}$, $\qubit{p,q}$, or even $\qubit{pq}$.

The special notation $O$ denotes the all-zero square matrix of an arbitrary dimension and $I$ denotes the \emph{identity matrix} of an arbitrary dimension.  The \emph{commutator} $[A,B]$ of two square matrices $A$ and $B$ is defined as $AB-BA$. Given a complex matrix $A$, the notation $A^T$ denotes the \emph{transpose} of $A$, and $A^{\dagger}$ expresses the \emph{(Hermitian) adjoint} (i.e., the  complex conjugate transpose of $A$). A complex matrix $A$ is \emph{Hermitian} if $A$ is equal to its adjoint, namely, $A^{\dagger} = A$. For any  matrix $A$ and its index pair $(q,r)$, the notation $A[q,r]$ indicates the \emph{$(q,r)$-entry} of $A$. Similarly, for a vector $v$, $v[i]$ denotes the $i$th entry of $v$.

The notation $diag(a_1,a_2,\ldots,a_n)$ denotes an $n\times n$ matrix whose diagonal entries are $a_1,a_2,\ldots,a_n$ and the other entries are all $0$. The \emph{trace} $\trace(A)$ of an $n\times n$ matrix $A=(a_{ij})_{i,j\in[n]}$ is $\sum_{i=1}^{n}a_{ii}$.
Given any square complex matrix $A$, the notation $e^{A}$ expresses a \emph{matrix exponential}, defined by $e^A = \sum_{k=0}^{\infty} \frac{1}{k!}A^k$ (where $0!=1$ and $A^0 = I$) and  the \emph{spectral norm} $\|A\|$ is a matrix norm defined by $\|A\|= \max_{\qubit{\phi}\neq0} \{\frac{\|A\qubit{\phi}\|_2}{\|\qubit{\phi}\|_2}\}$.
A matrix $A$ is \emph{positive semidefinite} if  $\bra{\phi}A\ket{\phi}\geq0$ holds for any nonzero vector $\qubit{\phi}$. In this exposition, whenever we discuss eigenvectors of square matrices, we implicitly assume that all eigenvectors are \emph{normalized} (i.e., taken to have $\ell_2$-norm $1$).
Given two matrices $A$ and $B$ of the same dimension, we say that $A$ is \emph{approximated by $B$ to within $\varepsilon$} if $\|A-B\|\leq \varepsilon$.

A \emph{quantum bit} (or a \emph{qubit}) is a normalized quantum state in the $2$-dimensional Hilbert space, expressed as a linear combination of two designated basis vectors $\qubit{0}=(1\;\; 0)^T$ and $\qubit{1}=(0\;\; 1)^T$.
For those qubits $\qubit{0}$ and $\qubit{1}$, let $\qubit{\hat{0}}=\frac{1}{\sqrt{2}}(\qubit{0}+\qubit{1})$ and $\qubit{\hat{1}}=\frac{1}{\sqrt{2}}(\qubit{0}-\qubit{1})$.
The sets $\{\qubit{\hat{0}},\qubit{\hat{1}}\}$  and $\{\qubit{0},\qubit{1}\}$ are respectively called the \emph{Hadamard basis} and the \emph{computational basis}.
We use the notation $W$ for the \emph{Walsh-Hadamard transform} $\frac{1}{\sqrt{2}} \matrices{1}{1}{1}{-1}$. Notice that $W$ transforms the computational basis to the Hadamard basis.
For any fixed number $N\in\nat^{+}$, the \emph{quantum Fourier transform} $F_N$ changes $\qubit{j}$ to $\frac{1}{\sqrt{N}}\sum_{k=0}^{N-1}\omega_N^{jk}\qubit{k}$, where $j\in[0,N-1]_{\integer}$ and $\omega_N= e^{\imath\frac{2\pi}{N}}$.
In particular, $F_2$ coincides with $W$.

A \emph{Hamiltonian} is a complex Hermitian matrix. In this exposition,  eigenvectors are called \emph{eigenstates}. For any  Hamiltonian $H$, we set $\Delta(H)$ to be the \emph{spectral gap} of $H$, which is the difference between the lowest eigenvalue and the second lowest
eigenvalue of $H$. An eigenvalue of $H$ is also called an \emph{energy} of $H$. The lowest eigenvalue is particularly called the \emph{ground energy} of $H$ and its associated eigenstate is called the \emph{ground state} of $H$.

\subsection{Languages and Quantum Finite Automata}\label{sec:QFA}

An \emph{alphabet} is a finite nonempty set of ``symbols'' or ``letters''.  A \emph{string} over an alphabet $\Sigma$ is a finite sequence of symbols in $\Sigma$. The \emph{length} of a string $x$ is the total number of symbols appearing in $x$ and is denoted by $|x|$. In particular, the \emph{empty string}, denoted by $\lambda$, is a unique string of length $0$.
The notation $\Sigma^*$ stands for the set of all strings over $\Sigma$ and $\Sigma^{+}$ expresses $\Sigma^*-\{\lambda\}$.
Given a string $x$ and a symbol $a$, $\#_a(x)$ expresses the total number of $a$ in $x$.
In addition, for any number $i\in [|x|]$, $x_{(i)}$ denotes the $i$th symbol of $x$.
For any number $n\in\nat$ and any index $i\in[0,\bar{n}]_{\integer}$ with $\bar{n} = 2^{{\ilog{n}}}$ (i.e., $2$ to the power of  ${\ilog{n}}$), the notation $s_{n,i}$ denotes the lexicographically $i$th string of length ${\ilog{n}}$ over the binary alphabet $\{0,1\}$; in particular, $s_{n,0}=0^{{\ilog{n}}}$ and $s_{n,\bar{n}} = 1^{{\ilog{n}}}$. Each string $s_{n,i}$ can be viewed as the binary  encoding of number $i$ expressed by exactly ${\ilog{n}}$ bits. The aforementioned notion $\pair{\cdot,\cdot}$ for the pairing function is naturally adapted to a bijection from $\Sigma^*\times\Sigma^*$ to $\Sigma^*$.

A \emph{language} over $\Sigma$ is a subset of $\Sigma^*$. Hereafter, we freely identify a \emph{decision problem} with its associated language.
Given a number $n\in\nat$, $\Sigma^n$ (resp., $\Sigma^{\leq n}$) denotes the set of all strings of length exactly $n$ (resp., at most $n$) over $\Sigma$. Notice that $\Sigma^{\leq n}= \bigcup_{k\leq n} \Sigma^k$ and $\Sigma^*=\bigcup_{n\geq0} \Sigma^n$.
The set $\Sigma^*-L$ is the \emph{complement} of $L$ and it is often written as $\overline{L}$ as long as $\Sigma$ is clear from the context. Given a string $x=x_1x_2\cdots x_{n-1}x_n$ with $x_i\in\Sigma$ for any $i\in[n]$, $x^R$ denotes the \emph{reversal} of $x$, i.e., $x^R = x_nx_{n-1}\cdots x_2x_1$.
If $L$ is a language over $\Sigma$, then we also use the same symbol $L$ to denote its \emph{characteristic function}; that is, $L(x)=1$ for any $x\in L$ and $L(x)=0$ for any $x\in \Sigma^*-L$.
A function $f$ on $\Sigma^*$ (i.e., from $\Sigma^*$ to $\Sigma^*$) is \emph{length preserving} if $|f(x)|=|x|$ holds for any string $x\in\Sigma^*$, and $f$ is \emph{polynomially bounded} if there exists a polynomial $p$ satisfying  $|f(x)|\leq p(|x|)$  for any string $x\in\Sigma^*$. This last notion is easily expanded to functions $f$ from $\Sigma^*$ to $\nat$ by requiring $f(x)\leq p(|x|)$ in place of $|f(x)|\leq p(|x|)$.
For any two languages $A,B\subseteq \Sigma^*$, the notation $AB$ denotes the language $\{xy\mid x\in A,y\in B\}$.
Given $k$ symbols $y_1,y_2,\cdots, y_k$ in $\Sigma$, we customarily abbreviate a sequential multiplication $A_{y_k}\cdot A_{y_{k-1}} \cdots A_{y_2}\cdot A_{y_1}$
of square matrices as $A_{y_1y_2\cdots y_{k-1}y_k}$.

A \emph{promise problem} is a pair $(A,B)$ of disjoint sets over alphabet $\Sigma$ such that $A$ and $B$ consist of \emph{accepting instances} and \emph{rejecting instances}, respectively, and all instances $x$ given to this problem  are always ``promised'' to be in $A\cup B$. The promise problem $(A,B)$ asks to determine whether $x\in A$ or $x\in B$ for any promised instance $x$. In particular, when $B=\Sigma^*-A$, $(A,B)$ coincides with $(A,\overline{A})$, which is customarily identified with the language $A$.

We assume the reader's familiarity with one-way deterministic finite(-state) automata (or 1dfa's), one-way deterministic pushdown automata (or 1dpda's), and one-way nondeterministic pushdown automata (or 1npda's). Refer to, e.g., \cite{HU79}.
The \emph{state complexity} of a finite automaton is the total number of the inner states of the finite automaton.
In what follows, we explain the models of one-way and two-way quantum finite automata because they are the core notions of this exposition.

A \emph{two-way quantum finite automaton} (or a 2qfa, for short) is expressed as a septuple $(Q,\Sigma,\{\cent,\dollar\}, \delta,q_0,Q_{acc},Q_{rej})$, where $Q$ is a finite set of inner states, $\Sigma$ is an (input) alphabet, $\cent$ and $\dollar$ are respectively the \emph{left-endmarker} and the \emph{right-endmarker}, $\delta$ is a (quantum) transition function from $Q\times\check{\Sigma}\times Q\times D$ to $\complex$ with $\check{\Sigma}=\Sigma\cup\{\cent,\dollar\}$ and $D=\{-1,0,+1\}$, $q_0$ is the initial state in $Q$, and $Q_{acc}$ and $Q_{rej}$ are subsets of $Q$, consisting of \emph{accepting states} and \emph{rejecting states}, respectively. The values $-1$, $0$, and $+1$ in $D$ respectively indicate ``to the left,'' ``staying still,'' and ``to the right.''
For convenience, we write $Q^{(-)}$ for $Q-\{q_0\}$ and $Q_{halt}$ for $Q_{acc}\cup Q_{rej}$.
When the direction set $D$ is restricted to $\{0,+1\}$, $M$ is particularly called a \emph{1.5-way quantum finite automaton} (or a 1.5qfa).
A read-only input tape is indexed by natural numbers from left to right.
An input string $x=x_1x_2\cdots x_n$ of length $n$ is written on this tape, surrounded by $\cent$ and $\dollar$, so that $\cent$ is in cell $0$, each $x_i$ is in cell $i$, and $\dollar$ is in cell $n+1$. For convenience, we set $x_0=\cent$ and $x_{n+1}=\dollar$ and we call $x_0x_1\cdots x_nx_{n+1}$ ($=\cent{x}\dollar$) an \emph{extended input}.
As customary, the input tape is assumed to be \emph{circular}; that is, the both ends of the tape are glued together so that the right of $\dollar$ is $\cent$ and the left of $\cent$ is $\dollar$.

A \emph{configuration} of a 2qfa $M$ is a triplet $(q,x,i)\in Q\times\Sigma^*\times\nat$, which expresses a circumstance that $M$ is in inner state $q$, scanning the $i$th location of an extended input $\cent{x}\dollar$, whereas a \emph{surface configuration} is a pair $(q,i)$, excluding the input $x$.
The \emph{configuration space} is the Hilbert space spanned by $\{\qubit{q,x,i}\mid (q,x,i)\;\; \text{is a configuration of $M$}\:\}$; in contrast, the \emph{surface configuration space} is spanned by the vectors $\qubit{q,i}$ with $(q,i)\in Q\times [0,|x|+1]_{\integer}$.
The complex value $\delta(q,x_i,p,d)$ is called a \emph{transition amplitude}.
When we apply a transition $\delta(q,x_i,p,d)=\alpha$ to a configuration $(q,x,i)$, in a single step, the configuration is changed to $(p,x,i+d\:(\mathrm{mod}\:|x|+2))$ with transition amplitude $\alpha$.
The transition function $\delta$ induces a linear operator (called a \emph{time-evolution operator}) $U^{(x)}_{\delta}$ defined by $U^{(x)}_{\delta}\qubit{q,i} = \sum_{(p,d)\in Q\times D} \delta(q,x_i,p,d) \qubit{p,i+d\:\mathrm{mod}\:(|x|+2)}$ for every surface configuration $(q,i)\in Q\times [0,|x|+1]_{\integer}$. Occasionally, we restrict these transition amplitudes on a certain nonempty subset of $\complex$.
Finally, we always demand that $U^{(x)}_{\delta}$ is \emph{unitary} (i.e., $U^{(x)}_{\sigma}(U^{(x)}_{\sigma})^{\dagger}=I$) for every input string $x\in\Sigma^*$.

The special operators $\Pi_{acc}$, $\Pi_{rej}$, and $\Pi_{non}$ are \emph{projective measurements} onto the Hilbert spaces spanned by $\{\qubit{q,i}\mid q\in Q_{acc},i\in[0,n+1]_{\integer}\}$, by $\{\qubit{q,i}\mid q\in Q_{rej},i\in[0,n+1]_{\integer}\}$, and by $\{\qubit{q,i}\mid q\in Q-Q_{halt},i\in[0,n+1]_{\integer}\}$, respectively. We define $\qubit{\phi_0}=\qubit{q_0,0}$ and $\qubit{\phi_{i+1}} = \Pi_{non}U^{(n)}_{\delta}\qubit{\phi_i}$ for every index $i\in\nat$.
We say that $M$ \emph{accepts} (resp., \emph{rejects}) $x$ with probability $\gamma$ if $\sum_{i\in\nat}\| \Pi_{acc}\qubit{\phi_i} \|^2 =\gamma$ (resp., $\sum_{i\in\nat}\| \Pi_{rej}\qubit{\phi_i} \|^2 =\gamma$).

The computational power of 2qfa's can be enhanced if we equip them with \emph{(flexible) garbage tapes}, which are write-once\footnote{A tape is \emph{write once} if its tape head never moves to the left and it moves to the right whenever it writes down a non-blank symbol.} tapes \cite{Yam19b}. We call them \emph{2qfa's with garbage tapes} (or garbage-tape 2qfa's) and each of them has the form $(Q,\Sigma,\{\cent,\dollar\},\Xi,\delta,q_0,Q_{acc},Q_{rej})$ with a \emph{garbage alphabet} $\Xi$. The (quantum) transition function $\delta$ is now a map from $Q\times\check{\Sigma}\times Q\times D\times \Xi_{\lambda}$ to $\complex$,
where $\Xi_{\lambda} = \Xi\cup\{\lambda\}$.
A 2qfa writes down a garbage symbol $\xi$ in $\Xi_{\lambda}$ and, whenever $\xi\neq\lambda$, it moves the tape head to the right.
With the good use of garbage tapes, we can postpone any intermediate measurement until the end of a computation
if the computation terminates.

Let $L$ denote any language over alphabet $\Sigma$ and let $\varepsilon$ be any error bound in $[0,1/2]$. We say that $M$ \emph{recognizes} $L$ with probability at least $1-\varepsilon$ if (i) for any $x\in L$, $M$ accepts $x$ with probability at least $1-\varepsilon$ and (ii) for any $x\in \Sigma^*-L$, $M$ \emph{rejects} $x$ with probability at least $1-\varepsilon$, except that, whenever both the accepting probability and the rejecting probability are exactly $1/2$, $M$ is considered to \emph{reject} the input. In the case where $\varepsilon$ falls into the range  $[0,1/2)$, $M$ is particularly said to recognize $L$ with \emph{bounded error probability}.

A \emph{one-way quantum finite automaton} (abbreviated as a 1qfa)  \emph{with quantum operations}\footnote{This model is called \emph{general quantum finite automata} in a survey \cite{AY15}.}  is a septuple $(Q,\Sigma,\{\cent,\dollar\}, \{A_{\sigma}\}_{\sigma\in\check{\Sigma}}, q_0,Q_{acc},Q_{rej})$, where each $A_{\sigma}$ is a \emph{quantum operation}\footnote{This is a completely positive, trace preserving map and is also called a \emph{superoperator}.} acting on the Hilbert space of linear operators on the \emph{configuration space} spanned by the vectors in  $\{\qubit{q}\mid q\in Q\}$ \cite{ABG+06,FOM09,YS11}.
Such a quantum operation $A_{\sigma}$ has a \emph{Kraus representation}  $\KK_{\sigma} = \{K_{\sigma,j}\}_{j\in[k]}$, composed of $k$ \emph{Kraus operators} (or \emph{operation elements}) $K_{\sigma,j}$ for a certain constant $k\in\nat^{+}$, provided that all entries of each Kraus operator $K_{\sigma,j}$ are indexed by the elements of $Q\times Q$.
More precisely, $A_{\sigma}$ has the form $A_{\sigma}(H)= \sum_{j=1}^{k}K_{\sigma,j} H (K_{\sigma,j})^{\dagger}$ for any given linear operator $H$. In this exposition, we always demand that $\{K_{\sigma,j}\}_{j\in[k]}$ satisfies the \emph{completeness relation}
$\sum_{j=1}^{k}(K_{\sigma,j})^{\dagger} K_{\sigma,j}= I$. This implies that $A_{\sigma}$ is \emph{trace preserving}, that is, $\trace{A_{\sigma}(H)} = \trace{H}$.
We further expand the notation $A_{\sigma}$ to $A_{\cent x\dollar}$ for every string $x\in\Sigma^*$ by defining $A_{z\sigma}(H) = A_{\sigma}(A_{z}(H))$ inductively for any $\sigma\in\Sigma\cup\{\dollar\}$ and any $z\in\{\cent,\lambda\}\Sigma^*-\{\lambda\}$.
Note that, if $H$ is a Hermitian operator, then
so is $A^{(n)}_{\cent x \dollar}(H)$.
Given a language $L$ over $\Sigma$ and a constant $\varepsilon\in[0,1]$, we say that $M$ \emph{recognizes} $L$ with error probability at most $\varepsilon$ if (i) for any $x\in L$, $\trace(\Pi_{acc}A_{\cent x\dollar}(\rho_0)) \geq 1-\varepsilon$ and (ii) for any $x\in \Sigma^*-L$, $\trace(\Pi_{rej}A_{\cent x\dollar}(\rho_0)) \geq 1-\varepsilon$, where $\rho_0=\density{q_0}{q_0}$, and $\Pi_{acc}$ and $\Pi_{rej}$ are projections similar to those of 2qfa's.
This model is computationally equivalent to a \emph{garbage-tape model} of 1qfa's used in \cite{Yam19b} (implicitly in \cite{Yam14b}).
In the case of $k=1$, since $\KK_{\sigma}$ is a singleton $\{K_{\sigma,1}\}$, we briefly write $K_{\sigma}$ for $K_{\sigma,1}$, and thus we obtain $A_{\sigma}(H) = K_{\sigma}H (K_{\sigma})^{\dagger}$. This case precisely coincides with a \emph{one-way measure-once quantum finite automaton} (or a 1moqfa) of Moore and Crutchfield \cite{MC00}.
For later use, we write  $\mathrm{1MOQFA}$  to denote the collection of all languages recognized by bounded-error 1moqfa's.

\subsection{Hamiltonians and Quantum Quasi-Automata}\label{sec:quasi-automata}

Since the construction of Hamiltonians is a key to producing adiabatic quantum computations, this exposition attempts to operate finite automata to ``generate'' (or ``produce'') such Hermitian matrices. For this purpose, we first modify the model of 1qfa's explained in Section \ref{sec:QFA} so that they can ``produce'' square matrices instead of ``recognizing'' languages. Such modified 1qfa's are succinctly called \emph{quantum quasi-automata} in this exposition.
In short, a family of quantum quasi-automata generates a series of complex square matrices, in particular, Hamiltonians. This exposition aims at introducing three different types of quantum quasi-automata.
Notice that a family of such quantum quasi-automata $M_n$ is inherently \emph{nonuniform}; that is, no  algorithm  is required to exist for  producing (an encoding of) $M_n$ from each unary input $1^n$.

A \emph{one-way quantum quasi-automata family} (abbreviated as 1qqaf) is a family of 1qfa's using quantum operations with \emph{no use} of initial state and halting state.
Formally, a 1qqaf $\MM=\{M_n\}_{n\in\nat}$ is a family of machines $M_n$ of the form $(Q^{(n)},\Sigma, \{\cent,\dollar\}, \{A^{(n)}_{\sigma}\}_{\sigma\in\check{\Sigma}}, \Lambda^{(n)}_0, Q^{(n)}_{0})$,
where $Q^{(n)}$ is a finite set of inner states, $A^{(n)}_{\sigma}$ is a quantum operation on the Hilbert space of linear operators acting on the configuration space spanned by the vectors in $\{\qubit{q}\mid q\in Q^{(n)}\}$, $\Lambda^{(n)}_0$ is an \emph{initial mixture} of the form $\sum_{u\in Q^{(n)}}\gamma_u\density{u}{u}$ with $\gamma_u\geq0$ (which is a positive-semidefinite Hermitian operator) acting on the same space (not necessarily limited to $\density{q_0}{q_0}$ as for 1qfa's), and $Q_0^{(n)}$ is a subset of $Q^{(n)}$.
Associated with this last set $Q^{(n)}_0$, we define the \emph{projective measurement} $\Pi^{(n)}_0$ as the projection onto the Hilbert space spanned by the vectors in $\{\qubit{u}\mid u\in Q^{(n)}-Q^{(n)}_0\}$.
Whenever $Q^{(n)}_0=\setempty$, we often omit $Q^{(n)}_0$ from the above definition of $M_n$ since $\Pi^{(n)}_0$ coincides with $I$.
The number $|Q^{(n)}|$ of inner states may vary according to $n$ and this provides the ``size'' of $\MM$.
Although 1qfa's and 1qqaf's are similar in their forms, 1qqafa's have
no acceptance/rejection criteria.
Each 1qqaf acts as a means to produce a series of \emph{Kraus operators},   according to each input symbol, which represent quantum operations.
Notice that the quantum operation $A^{(n)}_{\cent x\dollar}$ takes linear operators acting on the $|Q^{(n)}|$-dimensional Hilbert space.
The machine $M_n$ is said to \emph{generate} (or \emph{produce}) a quantum  operator $E^{(n)}_{\cent{x}\dollar} = \Pi^{(n)}_0A^{(n)}_{\cent x\dollar}(\Lambda^{(n)}_0)\Pi^{(n)}_0$
for any instance $x$. We remark that $E^{(n)}_{\cent{x}\dollar}$ is  positive semidefinite.

Our goal is to generate a family $\{H^{(x)}\}_{x\in\Sigma^*}$ of Hamiltonians. However, unlike Boolean circuits, every machine $M_n$ of $\MM$ takes all strings in $\Sigma^*$ as its inputs.
This fact forces us to specify which machine $M_n$ to run in order to generate the target Hamiltonian operator $H^{(x)}$ for each input $x$.
The necessity of such specification of $n$ for $x$ has been naturally observed in discussing the computability of a nonuniform family of quantum finite automata \cite{Yam18,Yam19b}.
For this purpose, we introduce a notion of \emph{selector} $\mu:\Sigma^*\to\nat$ that bridges between $x$ and $n$ as $\mu(x)=n$. Such a selector $\mu$ induces the \emph{input domain} $\Delta_n = \{x\in \Sigma^*\mid \mu(x)=n\}$ of $M_n$ for each index $n\in\nat$.
The family $\{\Delta_n\}_{n\in\nat}$ of such input domains satisfies that (i)  $\Delta_n\cap \Delta_m=\setempty$ for any two distinct pair $m,n\in\nat$ and (ii) $\bigcup_{n\in\nat}\Delta_n = \Sigma^*$.

The above observation makes it possible to generate  $H^{(x)}$ simply by running an appropriately chosen machine $M_{\mu(x)}$ on input $x$.
We formally say that the family $\{H^{(x)}\}_{x\in\Sigma^*}$ of Hamiltonians is \emph{generated by 1qqaf's} if there exist a polynomially-bounded selector $\mu:\Sigma^*\to\nat$ and a 1qqaf $\MM$ such that,  for every string $x\in\Sigma^*$, $E^{(\mu(x))}_{\cent x \dollar}$
coincides with $H^{(x)}$. Whenever $\mu$ is clear from the context, nonetheless, we often omit any reference to $\mu$. Here, we do not require the computability of $\mu$; however, later in Section \ref{sec:complexity-AEQS}, $\mu$ will be further assumed to be \emph{logarithmic-space computable}.

A more important family of machines in this exposition is, in fact, a \emph{one-way measure-once quantum quasi-automata family} (or a 1moqqaf) $\MM$, which is a special case of a 1qqaf satisfying that its Kraus representation $\KK_{n,\sigma} = \{K^{(n)}_{\sigma,i}\}_{i\in[k]}$ for each $A^{(n)}_{\sigma}$ as in the above definition must have $k=1$. In this case, as noted in Section \ref{sec:QFA}, we write $K^{(n)}_{\sigma}$ in place of $K^{(n)}_{\sigma,1}$.

Next, we wish to expand 1qqaf's to their corresponding two-way machine families. For a practical reason, we wish to curtail the runtime of such two-way machines by setting up an appropriate time-bounding function $t:\Sigma^*\to\nat$. A \emph{$t$-time two-way quantum quasi-automata family} (or a $t$-time 2qqaf) $\MM$ consists of machines $M_n=(Q^{(n)},\Sigma, \{\cent,\dollar\}, \Xi, \delta^{(n)}, \{K^{(n)}_{\cent,i}\}_{i\in\Xi}, \Lambda^{(n)}_0, Q^{(n)}_0)$ for each index $n\in\nat$, where $\Lambda^{(n)}_0$ is an initial mixture and $\delta^{(n)}$ is a (quantum) transition function from $Q^{(n)}\times \check{\Sigma}\times Q^{(n)}\times D\times \Xi$ to $\complex$.
We identify $\Xi$ with $[|\Xi|]$ and assume for simplicity that $\Xi =\{1,2,\ldots,k\}$.

In what follows, let us fix  $x\in\Sigma^*$ arbitrarily and set $n$ to be its associated value $\mu(x)$. The corresponding \emph{surface configuration space} of $M_n$ on input $x$ is spanned by the vectors in $\{\qubit{q,i}\mid q\in Q^{(n)},i\in[0,|x|+1]_{\integer}\}$, and thus it has dimension $|Q^{(n)}|(|x|+2)$.
The initial step of $M_n$ is carried out by a single application of the set of Kraus operators $K^{(n)}_{\cent,i}$. This first move produces a new Hermitian matrix $\tilde{\Lambda}^{(n)}_0$ defined by $\sum_{i\in[k]} K^{(n)}_{\cent,i} \Lambda^{(n)}_0 (K^{(n)}_{\cent,i})^{\dagger}$.
In the case of $k=1$, however, since $\tilde{\Lambda}^{(n)}_0$ can be expressed as $\sum_{u\in Q^{n}\times[0,n]_{\integer}} \tilde{\gamma}_u \density{u}{u}$ for certain values $\tilde{\gamma}_u\geq0$, by setting $\tilde{\Lambda}^{(n)}_0$ as a new initial mixture, we can omit $\{K^{(n)}_{\cent,i}\}_{i\in[k]}$ from the definition of $M_n$.
Furthermore, we introduce a quantum operation $A^{(n,x)}$, which takes linear operators acting on this surface configuration space by defining a Kraus representation $\KK_{n,x} = \{K^{(n,x)}_{j}\}_{j\in\Xi}$ as follows:   each Kraus operator $K^{(n,x)}_j$ for $j\in\Xi$ expresses $\delta^{(n)}$ as $\bra{p,i+d} K^{(n,x)}_{j} \ket{q,i} = \delta^{(n)}(q,x_{(i)},p,d,j)$ for any $p,q\in Q^{(n)}$, $i\in[0,|x|+1]_{\integer}$, and $d\in D$, where $x_{(0)}=\cent$ and $x_{(|x|+1)}=\dollar$.
More precisely, $A^{(n,x)}(H)$ is set to be $\sum_{j\in\Xi} K^{(n,x)}_{j}H (K^{(n,x)}_j)^{\dagger}$ for any linear operator $H$.
To express the repetitive use of an operation $A^{(n,x)}$, we write $(A^{(n,x)})^k(H)$ to mean the result obtained by the $k$ consecutive applications of $A^{(n,x)}(\cdot)$ to $H$.
In this case, the $t$-time 2qqaf $\MM$ is said to \emph{generate} (or \emph{produce}) the matrix  $E^{(n,x)} = \Pi^{(n)}_0(A^{(n,x)})^{t(n,|x|)}(\tilde{\Lambda}^{(n)}_0)\Pi^{(n)}_0$.
The dimension of this matrix $E^{(n,x)}$ is $|Q^{(n)}\times [0,|x|+1]_{\integer}| = |Q^{(n)}|(|x|+2)$.
For each index $j\in\Xi$, we define $K^{(n)}_j = \sum_{x\in\Sigma^*}(\density{x}{x}\otimes K^{(n,x)}_j)$ and write $\KK_n$ for $\{K^{(n)}_j\}_{j\in\Xi}$. Since $\delta^{(n)}$ precisely induces $\KK_n$, we occasionally express $M_n$ as $(Q^{(n)}, \Sigma, \{\cent,\dollar\}, \Xi, \KK_n, \{K^{(n)}_{\cent,i}\}_{i\in[k]}, \Lambda^{(n)}_0, Q^{(n)}_0)$ by including $\KK_n$.
Formally, a family $\{H^{(x)}\}_{x\in\Sigma^*}$ of Hamiltonians is said to be \emph{generated by $t$-time 2qqaf's} if, for an appropriate  polynomially-bounded selector $\mu:\Sigma^*\to\nat$, a certain  $t$-time 2qqaf produces, for every input $x\in\Sigma^*$, a matrix $E^{(\mu(x),x)}$ that coincides with $H^{(x)}$.
Furthermore, when $t(x)= O(|x|)$ and $t(x)=|x|^{O(1)}$, we conveniently call $t$-time 2qqaf's by \emph{linear-time 2qqaf's} and \emph{polynomial-time 2qqaf's}, respectively. As a special case of $t$-time 2qqaf's, we also introduce \emph{linear-time 1.5qqaf's}  by setting $t(x)$ to be $O(|x|)$ as well as replacing the direction set $D$ by $\{0,+1\}$.

\section{Adiabatic Evolutionary Quantum Systems}\label{sec:AEQS-def}

We formally introduce an adiabatic model of AEQS and present how to place practical restrictions on this model in order to use it as a technical tool in  classifying various formal languages according to their  complexities.

\subsection{Adiabatic Evolution of a Quantum System}\label{sec:system-evolution}

Loosely following \cite{FGGS00}, we briefly discuss how a quantum system evolves according to the Schr\"{o}dinger equation of the following general form:  $\imath\hbar \frac{d}{dt}\qubit{\psi(t)} = H(t)\qubit{\psi(t)}$ for a time-dependent Hamiltonian $H(t)$ and a time-dependent quantum state $\qubit{\psi(t)}$.
To carry out adiabatic quantum computation on this quantum system, we prepare two Hamiltonians $H_{ini}$ and $H_{fin}$ acting on the same Hilbert space and, for a sufficiently large constant $T>0$, we define $H(t) = \left(1-\frac{t}{T}\right)H_{ini}+\frac{t}{T}H_{fin}$ using
a time parameter $t\in[0,T]$, provided that $H_{ini}$ and $H_{fin}$ do not commute; that is, $[H_{ini},H_{fin}]\neq O$. Notice that the condition $[H_{ini},H_{fin}]= O$ implies the existence of simultaneous eigenstates, causing the below-mentioned minimal evolution time to approach infinity.
To ensure $[H_{ini},H_{fin}]\neq O$, nonetheless, we often use the \emph{Hadamard basis} for $H_{ini}$ and the \emph{computational basis} for $H_{fin}$ \cite{DMV01,FGG+01}. Furthermore, we require $H_{ini}$ as well as $H_{fin}$ to have a \emph{unique} ground state.

At time $t=0$, we assume that the quantum system is initialized to be the ground state $\qubit{\psi_g(0)}$ of $H_{ini}$, namely,
$\qubit{\psi(0)} = \qubit{\psi_g(0)}$.
We allow the system to gradually evolve by applying $H(t)$ \emph{discretely} from time $t=0$ to $t=T$. Let $\qubit{\psi(t)}$ denote the quantum state at time $t\in[0,T]$.
This quantum state $\qubit{\psi(t)}$ is known to approach slowly to the ground state of $H_{fin}$.
This evolutionary process is referred to as an \emph{adiabatic evolution according to $H(t)$ for $T$ steps}.
We take the smallest value $T$ for which $\qubit{\psi(T)}$ is $\varepsilon$-close to the ground state of $H_{fin}$ and
this particular value $T$ is called the \emph{minimum evolution time}
of the system.
The \emph{runtime} of the system, however, is defined to be $T\cdot \max_{t\in[0,T]}\|H(t)\|$ and the \emph{outcome} of the system is the quantum state $\qubit{\psi(T)}$.
The \emph{adiabatic theorem} \cite{Kat51,Mes58} adequately provides a lower bound on $T$. The following form of the theorem is taken from \cite{ADK+07}: for any two constants $\varepsilon,\delta>0$, if $T\geq \Omega\left( \frac{\|H_{fin}-H_{ini}\|^{1+\delta}} {\varepsilon^{\delta}\min_{t\in[0,T]}  \{\Delta(H(t))^{2+\delta}\}} \right)$, then $\qubit{\psi(T)}$ (with an appropriately chosen \emph{global phase}) is $\varepsilon$-close to the ground state $\qubit{\psi_g(T)}$ of $H_{fin}$, provided that $H(t)$ has a unique ground state for each value $t\in[0,T]$.

The adiabatic evolution can be described by an appropriate unitary matrix $U_T$ satisfying $\qubit{\psi(T)} = U_T\qubit{\psi(0)}$.
We want to approximate  $U_T$ as follows.
Firstly, we make a good refinement of the time intervals. Let $R$ denote a fixed integer satisfying $T\ll R$ and consider refined time intervals  $[\frac{jT}{R},\frac{(j+1)T}{R}]$ for all indices  $j\in[0,R-1]_{\integer}$.
For convenience, let $\real^{\geq0}$ denote the set $\{r\in\real\mid r\geq0\}$.

\begin{lemma}\label{time-evolution}
Consider a quantum system of adiabatic evolution with $H_{ini}$ and  $H_{fin}$. Assume that $H_{ini}$ and $H_{fin}$ are of dimension $2^n$ and that $\max\{\|H_{ini}\|,\|H_{fin}\|\}\leq \nu(n)2^{n}$ for a certain function  $\nu:\nat\to\real^{\geq0}$. Let $T$ be the minimum evolution time and let $U_T$ denote a unitary matrix satisfying $\qubit{\psi(T)} = U_{T}\qubit{\psi(0)}$.
Let $R$ denote an integer with  $T\ll R$ and, for each index $j\in[0,R-1]_{\integer}$, let $\alpha_j= \frac{1}{\hbar}\frac{T}{R}\left(1-\frac{2j+1}{2R}\right)$, $\beta_j= \frac{1}{\hbar}\frac{T}{R}\frac{2j+1}{2R}$, and  $V(j)=  e^{-\imath \alpha_j H^{(x)}_{ini}}\cdot e^{-\imath \beta_j H^{(x)}_{fin}}$.
Denote by $V_R$ the sequential multiplication $V(R) V(R-1)\cdots V(2) V(1)$.
It follows that $U_{T}$  can be approximated by the matrix  $V_R$
to within $O(\frac{2^{2n}T^{2}\nu^2(n)}{R})$.
\end{lemma}

If we take $R$ to satisfy $R\geq 2^{2n}T^3\nu^2(n)$, then we obtain $\|U_T-V_R\|=O(\frac{1}{T})$. In such a case, since $V_R$ is ``close'' enough to $U_T$, we can use it in place of $U_T$ for our later argument in Section \ref{sec:complexity-AEQS}.

\begin{proofof}{Lemma \ref{time-evolution}}
Our argument that follows below refines the proof given in \cite[Section 4]{DMV01}.
Assume that $H_{ini}$, $H_{fin}$, $T$, $R$, and $U_T$ are given as in the premise of the lemma.
We fix a starting time $t_0\in[0,T)$ and consider the time interval $[t_0,t]$ for an arbitrary time $t>t_0$. Recall that $H(t)$ equals $\left(1-\frac{t}{T}\right)H_{ini}+\frac{t}{T}H_{fin}$.
Since $(t-t_0)H(\frac{t+t_0}{2}) = (t-t_0)H_{ini}+\frac{t^2-t_0^2}{2T}(H_{fin}-H_{ini})$, it follows that
$\frac{d}{dt}(t-t_0)H(\frac{t+t_0}{2}) = H_{ini}+\frac{t}{T}(H_{fin}-H_{ini}) = H(t)$.
Here, we claim that the solution of the Schr\"{o}dinger equation   $\imath\hbar \frac{d}{dt}\qubit{\psi(t)} = H(t)\qubit{\psi(t)}$ is given by $\qubit{\psi(t)} = U(t,t_0) \qubit{\psi(t_0)}$, where $U(t,t_0) = e^{-\frac{\imath}{\hbar} (t-t_0) H(\frac{t+t_0}{2})}$. To see this fact, by differentiating $\qubit{\psi(t)}$, we obtain
\[
\frac{d}{dt}\qubit{\psi(t)} = \frac{d}{dt}U(t,t_0)\qubit{\psi(t_0)} =
-\frac{\imath}{\hbar}H(t) e^{-\frac{\imath}{\hbar}(t-t_0) H(\frac{t+t_0}{2})} \qubit{\psi(t_0)} = -\frac{\imath}{\hbar}H(t)\qubit{\psi(t)},
\]
which is obviously equal to the aforementioned Schr\"{o}dinger equation.

Fix an index $j\in[0,R-1]_{\integer}$ arbitrarily and consider the refined interval $[\frac{jT}{R},\frac{(j+1)T}{R}]$.
We conveniently write $\qubit{\phi(j)}$ for the quantum state $\qubit{\psi(t)}$ at time $t=\frac{jT}{R}$.
We also take a unitary matrix $U'(j+1,j)$ satisfying  $\qubit{\phi(j+1)} = U'(j+1,j) \qubit{\phi(j)}$; in other words, $\qubit{\phi(j)}$ evolves to $\qubit{\phi(j+1)}$ by applying $U'(j+1,j)$.
Let $\alpha_j= \frac{1}{\hbar}\frac{T}{R}\left(1-\frac{2j+1}{2R}\right)$ and $\beta_j= \frac{1}{\hbar}\frac{T}{R}\frac{2j+1}{2R}$.
Similar to $U(t,t_0)$, the matrix $U'(j+1,j)$ can be written as
\[
U'(j+1,j) = e^{-\frac{\imath}{\hbar}\frac{T}{R} H(\frac{(2j+1)T}{2R})} = e^{-\frac{\imath}{\hbar}\frac{T}{R}(1-\frac{2j+1}{2R})H_{ini} -  \frac{\imath}{\hbar}\frac{T}{R}\frac{2j+1}{2R}H_{fin}} = e^{-\imath\alpha_j H_{ini} -\imath \beta_j H_{fin}}.
\]

It follows by the Baker-Campbell-Hausdorff Theorem that $e^{-\imath \alpha_j H_{ini} - \imath \beta_j H_{fin}}$ is approximated
by $e^{-\imath \alpha_j H_{ini}} \cdot e^{-\imath \beta_j H_{fin}}$  to within $|\alpha_j||\beta_j| \cdot O(\|H_{ini}\|\|H_{fin}\|) = O(\frac{T^2}{R^2}\|H_{ini}\|\|H_{fin}\|)$ (cited in \cite{DMV01}).
Since $\|H_{ini}\|\|H_{fin}\|=O(\nu^2(n)2^{2n})$, letting $V(j) = e^{-\imath \alpha_j H_{ini}}\cdot e^{-\imath \beta_j H_{fin}}$, the matrix $U'(j+1,j)$ is approximated by $V(j)$ to within $O(\frac{T^2\nu^2(n)2^{2n}}{R^2})$.
Since $\qubit{\psi(0)} = \qubit{\phi(0)}$ and $\qubit{\psi(T)} = \qubit{\phi(R)}$, $U_T$ coincides with the matrix $U'(R,R-1)\cdots U'(2,1) U'(1,0)$. The approximability of $U'(j+1,j)$ by $V(j)$ concludes that $U_T$ is approximated by $V_{R} = V(R)V(R-1) \cdots V(2) V(1)$ to within $R \cdot O(\frac{T^2\nu^2(n)2^{2n}}{R^2}) = O(\frac{T^2\nu^2(n)2^{2n}}{R})$; that is, $\|U_T - V_R\| =  O(\frac{T^2\nu^2(n)2^{2n}}{R})$.
This completes the proof of the lemma.
\end{proofof}

\subsection{Adiabatic Evolutionary Quantum Systems or AEQSs}\label{sec:AEQS}

Adiabatic quantum computing was initially sought to solve \emph{optimization problems}; on the contrary, the major target of this exposition is \emph{decision problems} (or equivalently, \emph{languages}). Instead of searching solutions of a computational problem as in \cite{FGG+01}, we wish to determine ``acceptance'' (yes) or ``rejection'' (no) of each instance given to the problem.

To lay out a suitable platform to carry out adiabatic quantum computation, we loosely adapt the key definition of Aharonov \etalc~\cite{ADK+07} but modify it significantly to fulfil our purpose of implementing the adiabatic quantum computation on a new, generic model, which we call an \emph{adiabatic evolutionary quantum system} (or an AEQS, pronounced as ``eeh-ks'').

\begin{definition}
An AEQS $\SSS$ is a septuple  $(m,\Sigma,\varepsilon, \{H^{(x)}_{ini}\}_{x\in\Sigma^*}, \{H^{(x)}_{fin}\}_{x\in\Sigma^*}, \{S^{(n)}_{acc}\}_{n\in\nat}, \{S^{(n)}_{rej}\}_{n\in\nat})$, where $m:\Sigma^*\to\nat$ is a size function, $\Sigma$ is an (input) alphabet, $\varepsilon$ is an accuracy bound in $[0,1]$,  both $H^{(x)}_{ini}$ and $H^{(x)}_{fin}$ are Hamiltonians acting on the same Hilbert space of $2^{m(x)}$ dimension (where this space is referred to as the system's \emph{evolution space}), and both $S^{(n)}_{acc}$ and $S^{(n)}_{rej}$ are subsets of $\{0,1\}^{n}$ whose elements respectively represent ``acceptance'' and ``rejection'' (where a pair $(S^{(n)}_{acc},S^{(n)}_{rej})$ is called an \emph{acceptance/rejection criteria pair}). The function $m$ indicates  the \emph{(system) size} of $\SSS$. We further demand that $H^{(x)}_{ini}$ and $H^{(x)}_{fin}$ should have \emph{unique} ground states.
\end{definition}

The system size $m$ of $\SSS$ expresses how large the evolution space of $\SSS$ is.
An adiabatic evolution process of an AEQS is similar to the one in Section \ref{sec:system-evolution}. Given an input string $x\in\Sigma^*$, letting $T_{x}$ denote the minimum evolution time of this system, we define $H^{(x)}(t)$ to be  $\left(1-\frac{t}{T_x}\right)H^{(x)}_{ini} + \frac{t}{T_x}H^{(x)}_{fin}$ for any real number $t\in[0,T_x]$.
We express the ground state of $H^{(x)}(t)$ as  $\qubit{\psi^{(x)}_g(t)}$.
At time $t=0$, the AEQS is initialized to be the ground state $\qubit{\psi^{(x)}_g(0)}$ of $H^{(x)}_{ini}$.
The system slowly evolves by applying $H^{(x)}(t)$ discretely from time $t=0$ to $t=T_x$.
This AEQS $\SSS$ is thought to run in time $T_x\cdot \max_{t\in[0,T_x]}\|H^{(x)}(t)\|$.

Let $S^{(x)}_0$ denote a set of basis vectors. We often assume that $H^{(x)}_{ini}$ is of the form $\sum_{\qubit{u}\in S^{(x)}_0} \nu(u) \density{\hat{u}}{\hat{u}}$, where  each $\nu(u)$ is a real eigenvalue associated with an eigenstate $\qubit{u}$ of $H^{(x)}_{ini}$.
The ground state of $H^{(x)}_{ini}$ is therefore of the form $\qubit{u_0}$ for a certain vector $\qubit{u_0}\in S^{(x)}_0$ satisfying   $\nu(u_0)=\min\{\nu(u)\mid \qubit{u}\in S^{(x)}_0\}$.
The adiabatic evolution eventually makes the system approach close enough to the ground state $\qubit{\psi^{(x)}_g(T_x)}$ of $H^{(x)}_{fin}$.
To solve a computational problem using the adiabatic evolution of a quantum system, as noted in \cite{FGG+01}, it suffices to encode a correct solution of the problem into $\qubit{\psi^{(x)}_g(T_x)}$.

To work on decision problems, on the contrary, we need to specify ``accepting'' and ``rejecting'' quantum states in the evolution space on which $H^{(x)}(t)$ acts during the time interval $[0,T_x]$. This task can be done by incorporating the two index sets $S^{(n)}_{acc}$ and $S^{(n)}_{rej}$ and by defining $QS^{(n)}_{acc}$ and $QS^{(n)}_{rej}$ to be the Hilbert spaces spanned respectively by the vectors in $\{\qubit{u}\mid u\in S^{(n)}_{acc}\}$ and $\{\qubit{u}\mid u\in S^{(n)}_{rej}\}$, where $n=m(x)$. 
These spaces $QS^{(n)}_{acc}$ and $QS^{(n)}_{rej}$ are respectively called the \emph{accepting space} and the \emph{rejecting space} and their elements are respectively called \emph{accepting quantum state} and \emph{rejecting quantum states} of $\SSS$.

The uniqueness of the ground state of $H^{(x)}_{fin}$ for every input string $x\in\Sigma^*$ ensures a unique outcome of each adiabatic quantum computation. Recall from Section \ref{sec:numbers} that all eigenstates in this exposition are \emph{normalized}.
When the ground state of $H^{(x)}_{fin}$ is sufficiently close to a certain normalized accepting (resp., rejecting) quantum state in  $QS^{(n)}_{acc}$ (resp., $QS^{(n)}_{rej}$), the AEQS is considered to \emph{accept} (resp., \emph{reject}) $x$.
As customary in computational complexity theory, we also say that the AEQS $\SSS$ \emph{outputs} $1$ (resp., $0$) if it accepts (resp., rejects).
The closeness of the ground state of $H^{(x)}_{fin}$ to either an accepting quantum state or a rejecting quantum state relates to the \emph{accuracy} of the AEQS's answer to the correct solution of the target decision problem.
Two AEQSs $\SSS_1$ and $\SSS_2$ over the same alphabet $\Sigma$ are \emph{(computationally) equivalent} if, for any input $x\in\Sigma^*$, the outcome of $\SSS_1$ on $x$ matches the outcome of $\SSS_2$ on $x$.

\begin{definition}
Given a decision problem $L$ and any constant $\varepsilon\in[0,1]$, we say that an AEQS $\SSS = (m,\Sigma,\varepsilon, \{H^{(x)}_{ini}\}_{x\in\Sigma^*}, \{H^{(x)}_{fin}\}_{x\in\Sigma^*}, \{S^{(n)}_{acc}\}_{n\in\nat}, \{S^{(n)}_{rej}\}_{n\in\nat})$ \emph{solves} (or \emph{recognizes}) $L$ \emph{with accuracy at least} $\varepsilon$ if (i) for each input $x\in\Sigma^*$, there exist two unique ground states $\qubit{\psi_g^{(x)}(0)}$ of $H^{(x)}_{ini}$ and $\qubit{\psi^{(x)}_{g}(T_x)}$ of $H^{(x)}_{fin}$, (ii) for any string $x\in L$, the ground state $\qubit{\psi^{(x)}_{g}(T_x)}$ is $\sqrt{2}(1-\varepsilon)$-close\footnote{It may be possible to use the notion of ``fidelity'' in place of the $\ell_2$-norm.} to a certain normalized accepting quantum state $\qubit{\phi_x}$ in $QS^{(m(x))}_{acc}$, and (iii) for any string $x\in \Sigma^*-L$, the ground state $\qubit{\psi^{(x)}_{g}(T_x)}$ is $\sqrt{2}(1-\varepsilon)$-close to a certain normalized rejecting quantum state $\qubit{\phi_x}$ in $QS^{(m(x))}_{rej}$.  The \emph{adiabatic quantum size complexity} of $L$ is $m(x)$, where ``$x$'' expresses a ``symbolic'' input.
Even for a promise decision problem $\LL=(L,nonL)$, we also say that $\SSS$ \emph{solves $\LL$ with accuracy at least $\varepsilon$} if Condition (i) for $x\in\Sigma^*$, Condition (ii) for $x\in L$, and Condition (iii) for $x\in\Sigma^*-L$ are met only for promised strings $x$. No condition is required for any non-promised inputs $x$.
\end{definition}

The closeness factor $\sqrt{2}(1-\varepsilon)$ of the above definition comes from the following reasoning. Since $\qubit{\phi_x}$ and $\qubit{\psi^{(x)}_{g}(T_x)}$ are normalized, the $\ell_2$-norm of the difference between them,
$\|\qubit{\phi_x}-\qubit{\psi^{(x)}_{g}(T_x)}\|_2$, equals $\sqrt{2(1-\cos\theta)}$ for a certain angle $\theta$. We reformulate  this last formula by setting $\varepsilon = 1- \sqrt{1-\cos\theta}$, which ranges over $[0,1]$, and we then obtain   $\|\qubit{\phi_x}-\qubit{\psi^{(x)}_{g}(T_x)}\|_2
=\sqrt{2}(1-\varepsilon)$, which is a \emph{linear function} in  $\varepsilon$.

Our formalism of AEQSs is inherently ``nonuniform'' in the sense that the construction (or designing) of each AEQS is allowed to vary drastically according to the choice of inputs.
Nonetheless, the usefulness of our AEQSs comes from the fact that they are powerful enough to recognize all possible languages. This will be the basis of our further study in Sections \ref{sec:behaviors}--\ref{sec:simulation} on how various restrictions of AEQSs affect the recognition of formal languages of different complexities.

\begin{lemma}\label{finite-size-AEQS}
For any language $L$ over alphabet $\Sigma$, there is a series $\SSS$ of AEQSs of system size $1$ such that $\SSS$ solves $L$ with accuracy $1$.
\end{lemma}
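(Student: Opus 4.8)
The plan is to construct, for each input $x$, a one-dimensional evolution space (system size $1$, so the Hilbert space is $2^1 = 2$-dimensional) together with a pair of single-qubit Hamiltonians $H^{(x)}_{ini}$ and $H^{(x)}_{fin}$ whose ground states encode the bit $L(x)$. Since $\{0,1\}^1 = \{0,1\}$, I would take $S^{(1)}_{acc} = \{1\}$ and $S^{(1)}_{rej} = \{0\}$, so that the accepting space is $\mathrm{span}\{\qubit{1}\}$ and the rejecting space is $\mathrm{span}\{\qubit{0}\}$, with $m(x) = 1$ for every $x$. The key point is that accuracy $1$ with closeness factor $\sqrt{2}(1-\varepsilon) = 0$ simply demands that the ground state of $H^{(x)}_{fin}$ be \emph{exactly} $\qubit{1}$ when $x\in L$ and exactly $\qubit{0}$ when $x\notin L$ (up to global phase).

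The concrete choice I would make: let $H^{(x)}_{ini} = \density{\hat 0}{\hat 0} \cdot 0 + \density{\hat 1}{\hat 1} \cdot 1$, i.e.\ the projection onto $\qubit{\hat 1}$ (so its unique ground state is $\qubit{\hat 0}$, the Hadamard-basis state, with ground energy $0$ and spectral gap $1$). For $H^{(x)}_{fin}$, when $x\in L$ take $H^{(x)}_{fin} = \density{0}{0}$ (unique ground state $\qubit{1}$); when $x\notin L$ take $H^{(x)}_{fin} = \density{1}{1}$ (unique ground state $\qubit{0}$). In either case $[H^{(x)}_{ini},H^{(x)}_{fin}]\neq O$ because the Hadamard basis and the computational basis are mutually unbiased, so the noncommutativity requirement of Section~\ref{sec:system-evolution} is met, and both Hamiltonians have unique ground states as the definition of AEQS requires. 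By the adiabatic theorem (the form cited from \cite{ADK+07} in Section~\ref{sec:system-evolution}), since $H^{(x)}(t)$ has a unique ground state for every $t\in[0,T_x]$ — which one checks directly, as a convex combination of two rank-one projections onto non-orthogonal lines has a unique bottom eigenvalue for all $t\in(0,1)$ — the minimum evolution time $T_x$ is finite, and at $t=T_x$ the outcome $\qubit{\psi^{(x)}_g(T_x)}$ equals the ground state of $H^{(x)}_{fin}$, which is $\qubit{L(x)}$ exactly. Hence conditions (i)--(iii) of the definition of ``solves $L$ with accuracy at least $1$'' hold with $\varepsilon = 1$.

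The only delicate point — and the one I would spend the most care on — is verifying that $H^{(x)}(t) = (1-\tfrac{t}{T_x})H^{(x)}_{ini} + \tfrac{t}{T_x}H^{(x)}_{fin}$ genuinely has a \emph{unique} ground state throughout the open interval (and at the endpoints, where it reduces to $H^{(x)}_{ini}$ or $H^{(x)}_{fin}$, both already assumed to have unique ground states). For a $2\times 2$ Hermitian matrix the two eigenvalues coincide only when the matrix is a scalar multiple of $I$; writing $H^{(x)}(t) = aI + \vec r(t)\cdot\vec\sigma$ in the Pauli basis, degeneracy means $\vec r(t) = \vec 0$, and since $\vec r(0)$ and $\vec r(1)$ point in distinct (non-antipodal) directions on the Bloch sphere — one along the $x$-axis, one along the $z$-axis after recentering — the segment $\vec r(t)$ never passes through the origin. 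This is a short linear-algebra check but it is exactly what licenses the invocation of the adiabatic theorem; everything else is routine bookkeeping about the trivial accept/reject criteria pair. I would also remark that this shows the construction is genuinely nonuniform (the choice of $H^{(x)}_{fin}$ depends on the undecidable predicate $L(x)$), which is consistent with the discussion preceding the lemma.
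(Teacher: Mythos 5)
Your construction is exactly the paper's: $H^{(x)}_{ini}=\density{\hat 1}{\hat 1}$, $H^{(x)}_{fin}=\density{\overline{L}(x)}{\overline{L}(x)}$ (i.e.\ $\density{0}{0}$ for $x\in L$ and $\density{1}{1}$ otherwise), with $S^{(1)}_{acc}=\{1\}$, $S^{(1)}_{rej}=\{0\}$, so the ground state of $H^{(x)}_{fin}$ is $\qubit{L(x)}$ and the system solves $L$ with accuracy $1$. Your additional Bloch-sphere check that the interpolated Hamiltonian keeps a unique ground state is correct and is a welcome piece of extra rigor, but the approach is the same as the paper's.
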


\begin{proof}
The key of the following proof rests on the appropriate choice of Hamiltonians, which heavily relies on individual input strings. Let $\Sigma$ be any alphabet and let $L$ be any language over $\Sigma$. Here, we use the same notation $L$ to denote its characteristic function.

Let us define the desired AEQS $\SSS = (m,\Sigma,\varepsilon, \{H^{(x)}_{ini}\}_{x\in\Sigma^*}, \{H^{(x)}_{fin}\}_{x\in\Sigma^*}, \{S^{(n)}_{acc}\}_{n\in\nat}, \{S^{(n)}_{rej}\}_{n\in\nat})$ for $L$ in the following way. Fix an arbitrary string $x\in\Sigma^*$. We set $m(x)=1$ and $\varepsilon=1$, and we define   $H^{(x)}_{ini} = \density{\hat{1}}{\hat{1}}$ and $H^{(x)}_{fin} = \density{\overline{L}(x)}{\overline{L}(x)}$, where  $\overline{L}(x)=1-L(x)$.
Moreover, we set $S^{(n)}_{acc}=\{1\}$ and $S^{(n)}_{rej}=\{0\}$ for any index $n\in\nat$. 

Since $H^{(x)}_{ini}\qubit{\hat{0}} = 0$, the ground state of $H^{(x)}_{ini}$ is $\qubit{\hat{0}}$. Similarly, for each $x\in\Sigma^*$, the ground state of $H^{(x)}_{fin}$ is $\qubit{L(x)}$ because $H^{(x)}_{fin}\qubit{L(x)} =0$. It thus follows that $x\in L$ iff $\SSS$ outputs $L(x)$.
Therefore, the accuracy of $\SSS$ must be exactly $1$. Since $x$ is arbitrary, we conclude that $\SSS$ solves $L$ with accuracy $1$.
\end{proof}

\subsection{Conditional AEQSs or AEQS($\FF$)}\label{sec:conditional-AEQS}

Lemma \ref{finite-size-AEQS} guarantees that it is always possible to  construct an appropriate AEQS for any given language. This suggests that we can discuss the computational complexity of languages simply by placing a ``maximal'' amount of conditions (or restrictions) on the behaviors of AEQSs so that  the resulted AEQSs remain sufficiently powerful to recognize the languages.
To describe such conditions and study their direct influence to AEQSs, we consider AEQSs restricted to a set $\FF$ of ``natural conditions''  on two Hamiltonians of the AEQSs. Of those ``conditional'' AEQSs, we are interested only in the ones whose accuracy is relatively high, in particular, at least $\frac{1}{2}+\eta$ for a fixed constant $\eta>0$.
To denote a family of decision problems solved by such highly-accurate  conditional AEQSs under a given conditional set $\FF$, we use the abbreviation  of ``highly-accurate $\mathrm{AEQS}(\FF)$''.

\begin{definition}
Let $\FF$ indicate a set of conditions imposed on Hamiltonians of AEQSs. The complexity class, \emph{highly-accurate} $\mathrm{AEQS}(\FF)$, is the collection of all languages $L$ for which there exist an AEQS $\SSS$ and an accuracy bound $\varepsilon\in(1/2,1]$ satisfying that
$\SSS$ recognizes $L$ with accuracy at least $\varepsilon$ and the  Hamiltonians of the AEQS meet all the conditions specified by $\FF$. Since we discuss only highly-accurate AEQS's in the subsequent sections, we often drop the prefix ``highly-accurate'' and simply call them $\aeqs{\FF}$ unless stated otherwise. The definition of $\aeqs{\FF}$ can be naturally extended into promise decision problems.
\end{definition}

The above notion gives us freedom to discuss various types of conditions, which will play essential roles in determining the computational complexity of languages in later sections.

Of all possible types of conditions, we are primarily interested in the following four types of conditions.

\s
(1)
Firstly, we are interested in how efficiently we can generate two Hamiltonians of AEQSs \emph{in an algorithmic way} since these AEQSs are dictated by such Hamiltonians. In particular, we study the case where these   Hamiltonians are generated by certain forms of quantum quasi-automata.
An AEQS $\SSS$ with $\{H^{(x)}_{ini}\}_{x\in\Sigma^*}$ and $\{H^{(x)}_{fin}\}_{x\in\Sigma^*}$ is said to be \emph{generated by 1moqqaf's} if there exist two 1moqqaf's $\MM_0$ and $\MM_1$ working over $\Sigma$ that respectively generate $\{H^{(x)}_{ini}\}_{x\in\Sigma^*}$ and $\{H^{(x)}_{fin}\}_{x\in\Sigma^*}$. We use the notation $\FF=$``1moqqaf''
to mean the use of 1moqqaf's in order to generate Hamiltonians of AEQSs. To express the use of 1qqaf's, in contrast, we use the notation $\FF=$``1qqaf''.
We further expand these definitions to \emph{time-bounded 2qqaf's} as well as \emph{time-bounded 1.5qqaf's}. For 2qqaf's,
$\FF=$``ltime-2qqaf'' and $\FF=$``ptime-2qqaf'' refer to linear-time 2qqaf's and polynomial-time 2qqaf's, respectively, and for 1.5qqaf's, the notation $\FF=$``ltime-1.5qqaf'' indicates the use of linear-time 1.5qqaf's.

\s
(2) We are mostly concerned with the \emph{(system) size} $m$ of AEQSs. Since  the size of an AEQS equals the logarithm of the dimension of its Hamiltonians, it relates to the size of the evolution space of the AEQS.
We write $\FF=$``constsize'' (constant size) to indicate the case where the size of an AEQS is  $O(1)$. In a similar way, we write  $\FF=$``logsize'' (logarithmic size), $\FF=$``linsize'' (linear size), and $\FF=$``polysize'' (polynomial size) to express that the size of an AEQS is $O(\log{n})$, $O(n)$, and $n^{O(1)}$, respectively.

\s
(3) We further need to pay extra attention to the value of the \emph{spectral gap} of each final Hamiltonian $H^{(x)}_{fin}$ of an AEQS on each input $x$  because, by the adiabatic theorem in Section \ref{sec:system-evolution}, this value provides an upper bound of the runtime of the AEQS.
For instance, if the final Hamiltonians of an AEQS have  an  inverse-polynomially large spectral gap, then the adiabatic evolution of the AEQS takes only polynomially many steps. We remark that, even if Hamiltonians are generated by 1qqaf's, their spectral gaps may not be  guaranteed to be inverse-polynomially large. To express a required lower bound of the spectral gap, we first introduce the notation $\FF=$``polygap'' to mean that the spectral gap is lower-bounded by $1/n^{O(1)}$ (i.e., inverse-polynomially large). A similar notation  $\FF=$``constgap'' indicates that the spectral gap is at least $1/O(1)$ (i.e., inverse-constantly large).

\s
(4) Finally, we look into the \emph{ground energy levels} of final Hamiltonians of an AEQS. In certain cases \cite{FGG+01,FGGS00},  it is possible to set the ground energy of every final Hamiltonian to be $0$. This motivates us to introduce the notation $\FF=$``0-energy'' for the situation where the ground energy of the final Hamiltonian $H^{(x)}_{fin}$ is $0$ for every input $x$.

\s
In the subsequent section, we will demonstrate how to design (or program) AEQSs with various conditions for six simple languages.

\section{Behaviors of AEQSs and Their Designing}\label{sec:behaviors}

We have introduced  in Section \ref{sec:AEQS-def} the basic \emph{adiabatic evolutionary quantum systems}  (AEQSs) and their conditional variants $\aeqs{\FF}$. This section further demonstrates  how to design (or program) such conditional AEQSs for six simple example languages.

\subsection{How to Design (or Program) AEQSs}\label{sec:example}

In this exposition, AEQSs are the basic platform to discuss the computational complexity of any given language in such a way that
the difficulty in constructing Hamiltonians of AEQSs can be viewed as a reasonable complexity measure of the languages.
To help understand this viewpoint, it is beneficial to see how to design (or program) AEQSs for specific languages.
In particular, we intend to present various methods of designing AEQSs for six  simple languages (actually the last two examples are \emph{promise problems}).  Even though we do not attempt to seek for the best possible AEQSs, these examples will serve as bases to more general claims made in the subsequent sections. In what follows,  $\SSS$ denotes an AEQS to be constructed and it is assumed to have the form $(m,\Sigma,\varepsilon, \{H^{(x)}_{ini}\}_{x\in\Sigma^*}, \{H^{(x)}_{fin}\}_{x\in\Sigma^*}, \{S^{(n)}_{acc}\}_{n\in\nat}, \{S^{(n)}_{rej}\}_{n\in\nat})$, and we will describe the desired AEQS simply by specifying each  element of $\SSS$.


\begin{example}\label{example:L_a}
Given a fixed string $a\in\{0,1\}^+$, the language $L_a= \{ax\mid x\in\{0,1\}^*\}$ belongs to $\aeqs{1moqqaf,logsize,constgap,0\mbox{-}energy}$.
\end{example}

This language $L_a=\{ax\mid x\in\{0,1\}^*\}$ is regular for each fixed string $a\in\{0,1\}^+$. Hereafter, we intend to construct the desired AEQS $\SSS$ for $L_a$.
For readability, we consider only the simplest case where $a=0$ since the other cases can be  treated in essentially the same way.

For simplicity, we write $\Sigma$ for $\{0,1\}$ and set $Q= \{q_0,q_1,q_2,q_3\}$. In connection to 1moqqaf's, it is more convenient to identify $00$, $01$, $10$, and $11$ respectively with  $q_0$, $q_1$, $q_2$, and $q_3$ and, as noted in Section \ref{sec:QFA}, we can express each number in $[0,n+1]_{\integer}$ using ${\ilog(n+2)}$ bits.
We then define $IND_n$ to be the index set $Q\times [0,n+1]_{\integer}$. Our selector $\mu$ is defined as $\mu(x)=|x|$ for any string $x\in\Sigma^*$.
Given an input string $x\in\Sigma^*$, we define
$m(x)= {\ilog|IND_{\mu(x)}|}$, which is at most $2+{\ilog(|x|+1)}$. Informally, we treat $IND_{\mu(x)}$ as $\{0,1\}^{m(x)}$. 
To complete the construction of $\SSS$,
we further set $S^{(m(x))}_{acc} = \{(q_1,0)\}$ and $S^{(m(x))}_{rej} = \{(q_2,0)\}$. Notice that $S^{(m(x))}_{acc},S^{(m(x))}_{rej}\subseteq IND_{\mu(x)}$. 
We define $\{H^{(x)}_{ini}\}_{x\in\Sigma^*}$  simply by setting
$H^{(x)}_{ini} = W^{\otimes{m(x)}} \diag(0,1,1,\cdots,1) (W^{\otimes{m(x)}})^{\dagger}$ for each $x\in\Sigma^*$.  
Since it is relatively easy to generate $H^{(x)}_{ini}$ by 1moqqaf's, we hereafter intend to concentrate on  $H^{(x)}_{fin}$ and construct its associated 1moqqaf $\MM=\{M_{n}\}_{n\in\nat}$.
We assume that $M_{n}$ has the form $(Q^{(n)},\Sigma,\{\cent,\dollar\}, \{A^{(n)}_{\sigma}\}_{\sigma\in\check{\Sigma}}, \Lambda^{(n)}_{0})$ and we wish to define its components one by one. Firstly, we define   $Q^{(n)}= IND_n$ and set $\Lambda^{(n)}_{0} = \sum_{u\in IND^{(-)}_n} \density{u}{u}$, where $IND^{(-)}_n = IND_n-\{(q_0,0)\}$. Each $A^{(n)}_{\sigma}$ will be defined below.

Given a parameter $n\in\nat$, the desired machine $M_n$ starts in cell $0$ with the configuration $(q_0,0)$ and moves its tape head to the right. Whenever it scans $0$ (resp., $1$) on cell $1$, it enters $(q_1,2)$ (resp., $(q_2,2)$) from $(q_0,1)$. After leaving cell $1$, $M_n$ increases the second component of $(q,h)$ with preserving
the first component intact.
More formally, we define unitary operators $U^{(n)}_{\sigma}$ as follows.
Let $U^{(n)}_{\cent}\qubit{q,h}=\qubit{q,h+1\;\mathrm{mod}\;N'}$ and  $U^{(n)}_{\dollar} = U^{(n)}_{\cent}$,  where $N'=n+2$.
In addition, let
$U^{(n)}_{0}\qubit{q_0,1} =\qubit{q_1,2}$,
$U^{(n)}_{0}\qubit{q_1,1} =\qubit{q_3,2}$,
$U^{(n)}_{0}\qubit{q_2,1} =\qubit{q_2,2}$,
$U^{(n)}_{1}\qubit{q_0,1} =\qubit{q_2,2}$,
$U^{(n)}_{1}\qubit{q_1,1} =\qubit{q_1,2}$, and
$U^{(n)}_{1}\qubit{q_2,1} =\qubit{q_3,2}$.
Given a symbol $\sigma\in\{0,1\}$, we set   $U^{(n)}_{\sigma}\qubit{q_3,1}=\qubit{q_0,2}$ and $U^{(n)}_{\sigma}\qubit{q,h} = \qubit{q,h+1\;\mathrm{mod}\;N'}$ for any $q\in Q$ and any $h\in[0,N]_{\integer}-\{1\}$.
These definitions lead to the conclusion that $U^{(n)}_{\cent{0y}\dollar}\qubit{q_0,0} = \qubit{q_1,0}$ and $U^{(n)}_{\cent{1y}\dollar}\qubit{q_0,0} = \qubit{q_2,0}$ for any string $y$.
The desired quantum operation $A^{(n)}_{\sigma}$ is finally set to be     $A^{(n)}_{\sigma}(H) = U^{(n)}_{\sigma} H (U^{(n)}_{\sigma})^{\dagger}$ for any linear operator $H$.
The final Hamiltonian $H^{(x)}_{fin}$ is then defined as  $A^{(n)}_{\cent{x}\dollar}(\Lambda^{(n)}_0)$ so that  $\{H^{(x)}_{fin}\}_{x\in\Sigma^*}$ is generated by $\MM$.
Note that the condition $[H^{(x)}_{ini},H^{(x)}_{fin}]\neq O$ is satisfied.

Next, we wish to verify that $\SSS$ correctly solves $L_a$ with accuracy $1$. Assume that $x=0y$ for a certain string $y$ and set $n=\mu(x)$.
Consider the quantum state $\qubit{\phi_1} = \qubit{q_1,0}$.
By the definition of $A^{(n)}_{\sigma}$'s, it follows that  $A^{(n)}_{\cent{x}\dollar}(\Lambda^{(n)}_0)\qubit{\phi_1} = U^{(n)}_{\cent{x}\dollar} \Lambda^{(n)}_0 (U^{(n)}_{\cent{x}\dollar})^{\dagger} \qubit{\phi_1} = U^{(n)}_{\cent{x}\dollar} \Lambda^{(n)}_0 (U^{(n)}_{\cent{x}\dollar})^{\dagger} U^{(n)}_{\cent{x}\dollar}\qubit{q_0,0} = U^{(n)}_{\cent{x}\dollar}\Lambda^{(n)}_0\qubit{q_0,0} = 0$ since $\qubit{\phi_1} = U^{(n)}_{\cent{x}\dollar}\qubit{q_0,0}$ and $\Lambda^{(n)}_0\qubit{q_0,0}=0$.
Thus, $\qubit{\phi_1}$ is the ground state of $H^{(x)}_{fin}$ with a  ground energy of $0$.
Since $U^{(n)}_{\cent{x}\dollar}$ is unitary and $\Lambda^{(n)}_0=diag(0,1,\ldots,1)$ of rank $2^{m(x)}-1$, all other eigenvalues are $1$, and therefore the spectral gap must be $1$. Obviously, $\qubit{\phi_1}$ belongs to $QS^{(m(x))}_{acc}$.
In a similar way, when $x=1y$, $\qubit{q_2,0}$ is the ground state of $H^{(x)}_{fin}$ with a ground energy of $0$ since  $A^{(n)}_{\cent{x}\dollar}(\Lambda^{(n)}_0)\qubit{q_2,0} =0$.
Moreover, $\qubit{q_2,0}$ falls in $QS^{(m(x))}_{rej}$.
As a result, we conclude that $\SSS$ solves $L_a$ with accuracy $1$, as requested.


\begin{example}
The language
$Equal =\{w \mid \#_a(w)=\#_b(w)\}$ over the binary alphabet $\Sigma=\{a,b\}$ is in $\aeqs{1moqqaf,logsize,0\mbox{-}energy}$.
\end{example}

The language $Equal$ is {reversible context-free},  where  a \emph{reversible context-free language} is recognized by an appropriately chosen \emph{reversible pushdown automaton} \cite{KM12}.
A similar language $L_{eq} =\{a^nb^n\mid n\in\nat\}$, however, is not  reversible context-free \cite{KM12} but it is proven to be recognized by a certain 1.5qfa \cite{KW97}.
Let us design an AEQS for $Equal$ under the desired conditions stated in this example.

With the use of $Q=\{q_1,q_2\}$ and for a parameter $n\in\nat^{+}$, we define $IND_{n} = Q\times [0,N-1]_{\integer}$ and $IND^{(-)}_n = IND_n-\{(q_1,0)\}$, where $N$ indicates $2^{n-1}$. Notice that $|IND_n|=2N$.
Each number in $[0,N-1]_{\integer}$ can be expressed using ${\ilog{N}}$ bits, as noted in Section \ref{sec:QFA}.
Letting $\mu(x)=|x|$ for any $x\in\Sigma^*$, the size $m(x)$ of $\SSS$ is defined to be ${\ilog|IND_{\mu(x)}|}$, which is at most ${\ilog{2(|x|+1)}} +1$,
and thus $m(x)$ is $O(\log{|x|})$.
Moreover, we set $S^{(m(x))}_{acc} = \{(q_1,3l+1)\mid l\in\nat, \mu(x) = 2l+1 \}$ and $S^{(m(x))}_{rej}= IND_{n} - S^{(m(x))}_{acc}$.

Fix an arbitrary input $x\in\Sigma^*$ and set $n=\mu(x)$.
Hereafter, we intend to construct the desired $2N$-dimensional Hamiltonians $H^{(x)}_{ini}$ and $H^{(x)}_{fin}$ of $\SSS$.
We first define $H^{(x)}_{ini}$ as $W^{\otimes m(x)}\Lambda^{(n)}_{0} (W^{\otimes m(x)})^{\dagger}$, where $\Lambda^{(n)}_{0} = \sum_{u\in IND^{(-)}_n} \density{u}{u}$. To construct $H^{(x)}_{fin}$, in contrast, we need to  define
an appropriate 1qqaf $\MM=\{M_n\}_{n\in\nat}$ with $M_n$ having the form $(Q^{(n)},\Sigma, \{\cent,\dollar\}, \{A^{(n)}_{\sigma}\}_{\sigma\in\check{\Sigma}}, \Lambda^{(n)}_{0})$ for  any index $n\in\nat$. Let $Q^{(n)}=IND_n$.
Each inner state of $M_n$ encodes values of both $Q$ and a special \emph{internal clock}. Along one computation path of $M_n$, while reading $a$, this clock moves twice as fast as reading $b$. In another computation path of $M_n$, the clock moves in the other way round. In the end, we quantumly check whether or not the both clocks in two different paths show the same time. We begin with the definition of unitary matrices  $U^{(n)}_{\sigma}$.
A basic idea of constructing $U^{(n)}_{\sigma}$ is to increase the value $i$ in a pair $(q,i)\in Q^{(n)}$ by $2$ and by $1$ when scanning $a$ in inner states $q_1$ and $q_2$, respectively. When scanning $b$, we do the same after exchanging between $2$ and $1$.

Let $U^{(n)}_{\cent}\qubit{q_1,0} = \frac{1}{\sqrt{2}}\qubit{q_1,1}+ \frac{1}{\sqrt{2}}\qubit{q_2,1}$, $U^{(n)}_{\cent}\qubit{q_2,0} = \frac{1}{\sqrt{2}}\qubit{q_1,1} - \frac{1}{\sqrt{2}}\qubit{q_2,1}$,
$U^{(n)}_{a}\qubit{q_1,i} = \qubit{q_1,i+2\;\mbox{mod}\;N}$, $U^{(n)}_{a}\qubit{q_2,i} = \qubit{q_2,i+1\;\mbox{mod}\;N}$,
$U^{(n)}_{b}\qubit{q_1,i} = \qubit{q_1,i+1\;\mbox{mod}\;N}$, and
$U^{(n)}_{b}\qubit{q_2,i} = \qubit{q_2,i+2\;\mbox{mod}\;N}$. Concerning $\dollar$, let $U^{(n)}_{\dollar}\qubit{q_1,i} = \frac{1}{\sqrt{2}}\qubit{q_1,i+1\;\mbox{mod}\;N} + \frac{1}{\sqrt{2}}\qubit{q_2,i+1\;\mbox{mod}\;N}$ and $U^{(n)}_{\dollar}\qubit{q_2,i} = \frac{1}{\sqrt{2}}\qubit{q_1,i+1\;\mbox{mod}\;N} - \frac{1}{\sqrt{2}}\qubit{q_2,i+1\;\mbox{mod}\;N}$.
It then follows that, for any $x$ with $k=\#_a(x)$ and $l=\#_b(x)$, $U^{(n)}_{\cent x\dollar} \qubit{q_1,0}$ equals $\frac{1}{2} (\qubit{q_1,2k+l+1} + \qubit{q_1,k+2l+1}) +  \frac{1}{2} (\qubit{q_2,2k+l+1} - \qubit{q_2,k+2l+1})$ and $U^{(n)}_{\cent x\dollar} \qubit{q_2,0}$ equals $\frac{1}{2} (\qubit{q_1,2k+l+1} - \qubit{q_1,k+2l+1}) +  \frac{1}{2} (\qubit{q_2,2k+l+1} + \qubit{q_2,k+2l+1})$.
For every symbol  $\sigma\in\check{\Sigma}$, the desired quantum operation $A^{(n)}_{\sigma}(H)$ is set to be $U^{(n)}_{\sigma} H (U^{(n)}_{\sigma})^{\dagger}$.
In the end, we define $H^{(x)}_{fin}$ as $A^{(n)}_{\cent x\dollar}(\Lambda^{(n)}_0)$.

We still need to show that our AEQS $\SSS$ correctly recognizes $Equal$ with accuracy $1$.
let us take the quantum state $\qubit{\phi_x} = U^{(n)}_{\cent{x}\dollar}\qubit{q_1,0}$.
In the case of $x\in Equal$ with $\#_a(x)=\#_b(x)=l$, since $\qubit{\phi_x}=\qubit{q_1,3l+1}$, we obtain $\qubit{\phi_x}\in QS^{(m(x))}_{acc}$.
Moreover, since $\Lambda^{(n)}_0\qubit{q_1,0}=0$, we conclude that
$H^{(x)}_{fin}\qubit{\phi_x} = U^{(n)}_{\cent x\dollar} \Lambda^{(n)}_0 (U^{(n)}_{\cent x\dollar})^{\dagger} U^{(n)}_{\cent x \dollar} \qubit{q_1,0} = U^{(n)}_{\cent x\dollar}
\Lambda^{(n)}_0 \qubit{q_1,0} = 0$.
On the contrary, let us consider the case of $x\notin Equal$ with $k=\#_a(x)$ and $l=\#_b(x)$. In this case, $\qubit{\phi_x}$ belongs to $QS^{(m(x))}_{rej}$ because of $\measure{q_1,3l+1}{\phi_x}=0$.
Furthermore, it follows that $U^{(n)}_{\cent x \dollar} \Lambda^{(n)}_0 (U^{(n)}_{\cent x\dollar})^{\dagger} \qubit{\phi_x} = U^{(n)}_{\cent x \dollar} \Lambda^{(n)}_0 \qubit{q_1,0} =0$. This shows that the ground energy is $0$.


\begin{example}\label{ex:Pal}
Consider the set $Pal_{\#}$ of \emph{marked (even-length) palindromes},  that is, $Pal_{\#} =\{w\# w^R\mid w\in\{a,b\}^*\}$ over the ternary alphabet $\Sigma=\{a,b,\#\}$. This language $Pal_{\#}$ is in $\aeqs{ltime\mbox{-}2qqaf,logsize}$.
\end{example}

Similar to $Equal$, the above language $Pal_{\#}$ is also {reversible context-free} \cite{KM12}.
Similar to the language $\{w\in\Sigma^*\mid w=w^R\}$ used in  \cite{AW02}, $Pal_{\#}$ can be recognized by an appropriately chosen  \emph{2-way quantum finite automaton with a classical head}.

Hereafter, we construct the desired AEQS $\SSS$ for $Pal_{\#}$.
We first prepare the index set $IND_n$, which is defined to be $(Q \times [0,2]_{\integer}\times [0,n+1]_{\integer})^2$ with $Q = \{q_1,q_2,q_3,q_4,q_5\}$ for any parameter $n\in\nat$.
For convenience, let $\xi_0=(q_1,1,0)$ and let $IND^{(-)}_n = IND_n -\{(q_0,1,0,\xi_0)\}$.
We then set $S^{(n)}_{acc} = \{(q_1,0,0,\xi_0)\}$ and $S^{(n)}_{rej} =\{(q_i,0,0,\xi_0)\mid i\in\{4,5\}\}$.
The selector $\mu$ is simply defined by $\mu(x)=|x|$ for all $x\in\Sigma^*$.
The size $m(x)$ of $\SSS$ is then defined as ${\ilog|IND_{\mu(x)}|}$, which is at most $2({\ilog|Q|} + 2 + {\ilog(|x|+1)}) = O(\log|x|)$.

Given an arbitrary input $x$, the desired initial Hamiltonian $H^{(x)}_{ini}$ is set to be $\sum_{u\in IND^{(-)}_{\mu(x)}} \density{\hat{u}}{\hat{u}}$ of rank $2^{m(x)}-1$. 
To define the final Hamiltonian $H^{(x)}_{fin}$, however, we need to introduce a  linear-time 2qqaf $\MM=\{M_n\}_{n\in\nat}$, in which each machine $M_n$ works in the configuration space spanned by the elements in $IND_n$.
Let us fix an arbitrary input $x$ and set $n=\mu(x)$. For brevity, we write $N$ for $n+1$.
Recall that $x_{(i)}$ denotes the $i$th symbol of $x$ for every index  $i\in[n]$. We further set $x_{(0)}=\cent$ and $x_{(n+1)}=\dollar$.
For the ease of our description, we assume that $x$ has the form $w_1\# w_2$ for certain strings $w_1,w_2\in\{a,b\}^*$.
The machine $M_n$ operates using six registers of the form $(q,k,h,q',k',h')$, where $q$ is an inner state, $k$ refers to either $0$ or the current ``phase'' number, $h$ refers
to a tape head location, and $(q',k',h')$ is a starting value of $(q,k,h)$.
There are two phases to execute separately.
In the first phase, we try to produce accepting quantum states $\qubit{q,0,0,\xi_0}$ and, in the second phase, our goal is to produce  rejecting quantum states $\qubit{q_i,0,0,\xi_0}$ for any index $i\in\{4,5\}$. For convenience, we write $Q'$ for $\{q_1,q_2,q_3\}$.

We take the first step by applying the set $\{K^{(n)}_{\cent,1},K^{(n)}_{\cent,2}\}$ of Kraus operators defined by $K^{(n)}_{\cent,1}\qubit{q,k,h,q',k',h'} = \measure{\xi_0}{q,k,h}\cdot \qubit{q,k,h,q',k',h'}$ and $K^{(n)}_{\cent,2}\qubit{q,k,h,q',k',h'} = \sum_{z\neq \xi_0}\measure{z}{q,k,h}\cdot \qubit{q',k',h',q,k,h}$. This step leads us to concentrate only on quantum states of the form $\qubit{q,k,h,\xi_0}$ by applying the identity operator whenever these
registers do not contain $\qubit{\xi_0}$ in the following two phases.
Hereafter, the last three registers are assumed to be $\qubit{\xi_0}$.

In the first phase, we start with the quantum state $\qubit{q_1,1,0,\xi_0}$ stored in the six registers, change $\qubit{q_1,1,0,\xi_0}$ to $\qubit{q_1,1,1,\xi_0}$ at scanning $\cent$, and move $M_n$'s tape head to the right. In scanning each input symbol, we apply two unitary operators $\{U_a,U_b\}$ to the first register together with increasing the value in the third register by one. Those two operators are defined as   $U_a\qubit{q_1} =\frac{4}{5}\qubit{q_1}-\frac{3}{5}\qubit{q_2}$, $U_a\qubit{q_2} = \frac{3}{5}\qubit{q_1}+\frac{4}{5}\qubit{q_2}$,  $U_a\qubit{q_3} =\qubit{q_3}$,
$U_b\qubit{q_1} =\frac{4}{5}\qubit{q_1}-\frac{3}{5}\qubit{q_3}$, $U_b\qubit{q_2} =\qubit{q_2}$, and
$U_b\qubit{q_3} = \frac{3}{5}\qubit{q_1}+\frac{4}{5}\qubit{q_3}$.
In addition, let $U_{\#}=U'_{\#}=I$.
After scanning $\#$, we apply $U'_a=U^{-1}_a$ and $U'_b=U^{-1}_b$ instead of $U_a$ and $U_b$, respectively.
Formally, for any $q\in Q'$, if $h$ is in the range $[1,|w_1\#|]_{\integer}$, then we apply  $K_1^{(n,x)}\qubit{q,1,h,\xi_0} = U_{x_{(h)}}\qubit{q}\otimes \qubit{1,h+1,\xi_0}$, and if $h$ is in $[|w_1\#|+1,n]_{\integer}$, then we apply $K_1^{(n,x)}\qubit{q,1,h,\xi_0} = U'_{x_{(h)}}\qubit{q}\otimes \qubit{1,h+1,\xi_0}$.
In scanning $\dollar$,
we change $\qubit{q_1,1,n+1,\xi_0}$  to $\qubit{q_1,0,0,\xi_0}$ and $\qubit{q_i,1,n+1,\xi_0}$ to $\qubit{q_i,2,0}$ for every index  $i\in\{2,3\}$.
Note that, since the input tape of $M_n$ is \emph{circular}, the tape head automatically moves to $\cent$ after processing $\dollar$.
Once we enter an accepting quantum state, we stay in the same state except for an increment of the third register.
Formally, we demand that $\{K^{(n,x)}_1,K^{(n,x)}_2\}$ should satisfy
each of  the following conditions: $K_1^{(n,x)}\qubit{q_1,1,n+1,\xi_0} = \qubit{q_1,0,0,\xi_0}$, $K_1^{(n,x)}\qubit{q_i,1,n+1,\xi_0} = \qubit{q_i,2,0,\xi_0}$,  $K_2^{(n,x)}\qubit{q_1,0,h,\xi_0} = \qubit{q_1,0,h+1,\xi_0}$, and $K_2^{(n,x)}\qubit{q_1,0,n+1,\xi_0}=\qubit{q_1,0,0,\xi_0}$ for any  $i\in\{2,3\}$ and any $h\in[0,n]_{\integer}$.

In the second phase starting with $\qubit{q_i,2,0,\xi_0}$ at $\cent$ for each index $i\in\{2,3\}$, as we scan the input symbols one by one, we randomly choose $q_i$ and $q_{i+2}$ respectively with probabilities  $(\frac{1}{25})^2$ and $(\frac{4\sqrt{39}}{25})^2$ and then move the tape head to the right.
In reading $\dollar$, we change $\qubit{q_i,2,n+1,\xi_0}$ to $\qubit{q_{i+2},0,0,\xi_0}$ and $\qubit{q_{i+2},2,n+1,\xi_0}$ to $\qubit{q_{i+2},2,0,\xi_0}$, and then we move the tape head to $\cent$. For any index  $i\in\{2,3\}$ and any location $h\in[0,n]_{\integer}$,
we formally set
$K_1^{(n,x)}\qubit{q_i,2,h,\xi_0} = \frac{1}{25}\qubit{q_i,2,h+1,\xi_0}$,
$K_1^{(n,x)}\qubit{q_i,2,n+1,\xi_0} =\qubit{q_{i+2},0,0,\xi_0}$,  $K_1^{(n,x)}\qubit{q_{i+2},2,h,\xi_0} = \qubit{q_{i+2},2,h+1,\xi_0}$, and $K_1^{(n,x)}\qubit{q_{i+2},2,n+1,\xi_0} = \qubit{q_{i+2},2,0,\xi_0}$.
As for $K^{(n,x)}_2$, we further set
$K_2^{(n,x)}\qubit{q_i,2,h,\xi_0} = \frac{4\sqrt{39}}{25}\qubit{q_{i+2},2,h+1,\xi_0}$, $K_2^{(n,x)}\qubit{q_{i+2},0,h,\xi_0} = \qubit{q_{i+2},0,h+1,\xi_0}$, and
$K_2^{(n,x)}\qubit{q_{i+2},0,n+1,\xi_0} = \qubit{q_{i+2},0,0,\xi_0}$.

A single step of $M_n$ is made by an application of the quantum operation $A^{(n,x)}$ defined by $A^{(n,x)}(H) = \sum_{i\in\{1,2\}} K_i^{(n,x)} H (K_i^{(n,x)})^{\dagger}$ for any linear operator $H$. We write $(A^{(n,x)})^{k}(H)$ for the $k$ applications of $A^{(n,x)}$ to $H$.
The number of steps taken by $M_n$ except for the first one is exactly   $\ell(n)=2n+3$.
The final Hamiltonian $H^{(x)}_{fin}$ is then defined as $(A^{(n,x)})^{\ell(n)}(\tilde{\Lambda}^{(n)}_0)$, where
$\Lambda^{(n)}_0 = I - \frac{24}{25}\density{\xi_0,\xi_0}{\xi_0,\xi_0}$ and $\tilde{\Lambda}^{(n)}_0 = \sum_{i\in\{1,2\}} K^{(n)}_{\cent,i} \Lambda^{(n)}_0 (K^{(n)}_{\cent,i})^{\dagger}$.

Next, we argue that $\SSS$ correctly solves $Pal_{\#}$
with accuracy $1$.
Let us consider the case of $x\in Pal_{\#}$ and assume that $x$ has the form $w\# w^R$ for a certain nonempty string $w$.
Consider the quantum state $\qubit{\phi_x} = \qubit{\xi_0,\xi_0}$.
After the first phase, the first register of $M_n$ returns to $\qubit{q_1}$; in other words, we obtain $\bra{\phi_x} (K^{(n,x)}_1)^{n+2} \ket{\phi_x} =1$. Since $\bra{\phi_x}\Lambda^{(n)}_0\ket{\phi_x} = \frac{1}{25}$ and the other entries of $\Lambda^{(n)}_0$ are all $1$, $\qubit{\phi_x}$ must be  the ground state of $H^{(x)}_{fin}$. Since $\qubit{\phi_x}$ belongs to $QS^{(m(x))}_{acc}$, $\SSS$ accepts $x$ with accuracy $1$.
In contrast, let us consider the case where $x\notin Pal_{\#}$.
By the choice of amplitudes in $U_{\sigma}$, an analysis similar to \cite{AW02} (also \cite{AY15}) shows that,
after the first phase,  $(\frac{1}{25})^{n+1}\leq \bra{q_1,0,0,\xi_0}(K_1^{(n,x)})^{n+2}\ket{\phi_x} < 1$ and $(\frac{1}{25})^{n+1}\leq \sum_{i\in\{2,3\}} \bra{q_i,2,0,\xi_0}(K_1^{(n,x)})^{n+2}\ket{\phi_x} < 1$.
For convenience, for each index $i\in\{2,3\}$, let $\alpha_i=  \bra{q_i,2,0,\xi_0}(K_1^{(n,x)})^{n+2}\ket{\phi_x}$.
After the second phase, we can observe $(q_{i+2},0,0,\xi_0)$ with probability exactly $|\alpha_i|^2(\frac{1}{25})^{2(n+1)}$, which is much smaller than $(\frac{1}{25})^{n+1}$. Let $\qubit{\psi_x}$ denote the ground state of $H^{(x)}_{fin}$. We then conclude that $\qubit{\psi_x}$ does not include $\qubit{q_1,0,0,\xi_0}$, and thus $\qubit{\psi_x}$ falls in $QS^{(m(x))}_{rej}$. This indicates that $\SSS$ rejects $x$ with accuracy $1$.


\begin{example}
The language $SymCoin = \{x \mid \exists i,j\in[|x|] (x_{(i)}=x_{(j)} \wedge i<j \wedge i+j=|x|+1)\}$ (symmetric coincidence) over the alphabet $\Sigma=\{a,b\}$ is in $\aeqs{1qqaf,logsize,polygap}$.
\end{example}

This language $SymCoin$ is clearly \emph{context-sensitive} and the  well-known pumping lemma for $\cfl$ \cite{BPS61} can prove  that $SymCoin$ is not context-free. It is also known to be {non-stochastic language} \cite{FYS10}, where a \emph{stochastic language} is recognized by a certain one-way probabilistic finite automaton with unbounded-error probability.

Fix an input $x\in\Sigma^*$ arbitrarily and set $n=|x|$.
Let us assume that $n$ is even. The case for an odd $n$ is in essence similarly handled. For this even number $n$, let $C_n$ denote the set $\{(i,j)\mid i,j\in [n],i<j,i+j=n+1\}$ and write $\tilde{C}_n$ for the set $C_n\cup\{(0,0)\}$.
We assume a natural, efficient enumeration of all elements in $C_n$.
We define $IND_n$ to be $\tilde{C}_n \times ( \tilde{\Sigma}\times [0,n+1]_{\integer} )^2$, where $\tilde{\Sigma}$ denotes $\Sigma \cup \{B,acc,rej\}$ and $\{B,acc,rej\}$ is composed of new symbols not in $\Sigma$. Moreover, we set $\mu(x)$ to be $|x|$ for all $x\in\Sigma^*$.
The desired AEQS $\SSS$ has size $m(x) = {\ilog|IND_{\mu(x)}|}$, which is at most ${\ilog|\tilde{C}_{|x|}|} +2({\ilog|\tilde{\Sigma}|} + 2{\ilog(|x|+2)})^2 = O(\log|x|)$.
The desired AEQS $\SSS$ will be designed to encode three pieces of information: the nondeterministic choice of a pair $(i,j)$, the $i$th symbol $\sigma$ of an input $x$, and the current tape head location $h$.

Letting $\xi_0 =(B,0)$, we define
$H^{(x)}_{ini}$ as $\sum_{u\in IND^{(-)}_{\mu(x)}}\density{\hat{u}}{\hat{u}}$ using $IND^{(-)}_{n} =
IND_n -\{(0,0,\xi_0,\xi_0)\}$.
Concerning the final Hamiltonians $\{H^{(x)}_{fin}\}_{x\in\Sigma^*}$, by contrast, we want to design a 1moqqaf $\MM=\{M_n\}_{n\in\nat}$ to generate it. Let $M_n$ have  the form $(Q^{(n)},\Sigma,\{\cent,\dollar\}, \{A^{(n)}_{\sigma}\}_{\sigma\in\check{\Sigma}}, \Lambda^{(n)}_0)$ and assume that $\{A^{(n)}_{\sigma}\}_{\sigma\in\check{\Sigma}}$ is characterized by a set $\KK_n$ of Kraus operators, including  $K^{(n)}_{\sigma,1}$ and $K^{(n)}_{\sigma,2}$ for each endmarker $\sigma\in \{\cent,\dollar\}$ and $U^{(n)}_{\sigma}$ for each symbol $\sigma\in\Sigma$.
The initial mixture $\Lambda^{(n)}_0$ is set to be  $I-\frac{1}{3}\density{0,0,\xi_0,\xi_0}{0,0,\xi_0,\xi_0}$.
During the construction of $H^{(x)}_{fin}$ that follows shortly, we wish to meet the following requirement: if $x_{(i)}=x_{(j)}$ for a certain index pair $(i,j)\in C_n$, then the ground state must have the form $\qubit{i,j,acc,0,\xi_0}$; otherwise, it has the form $\qubit{0,0,B,0,\xi_0}$.
Since there may be multiple witnesses $(i,j)$ in $C_n$ satisfying $x_{(i)}=x_{(j)}$, we need to differentiate all such witnesses by assigning different energy levels to them.

Starting with an arbitrary quantum state, $M_n$ uses the Kraus operators in $\KK_n$ to check whether the $i$th symbol and the $j$th symbol of $x$ are indeed equal.
Let $u=(\sigma,h)$ and $w=(\tau,l)$. In scanning $\cent$, we apply $K^{(n)}_{\cent,1}\qubit{i,j,u,w} = \measure{\xi_0}{u}\cdot \qubit{i,j,u,w}$ and $K_{\cent,2}^{(n)}\qubit{i,j,u,w} = \sum_{s:s\neq\xi_0} \measure{s}{u} \cdot \qubit{i,j,w,u}$. This first step helps us fixate the content of the last five registers to $\qubit{\xi_0}$ for the subsequent argument.
Hereafter, we assume that the last five registers contain only  $\qubit{\xi_0}$.
In the case of $(i,j)\in C_n$, for any two symbols $\sigma\in\Sigma$ and  $\tau\in\Sigma\cup\{B\}$ and for any tape head location $h\in[0,n]_{\integer}$, we set   $U^{(n)}_{\sigma}\qubit{i,j,B,i}\qubit{\xi_0} = \qubit{i,j,\sigma,i+1}\qubit{\xi_0}$,   $U^{(n)}_{\sigma}\qubit{i,j,\sigma,j}\qubit{\xi_0} = \qubit{i,j,acc,j+1}\qubit{\xi_0}$, and  $U^{(n)}_{\sigma}\qubit{i,j,\tau,h}\qubit{\xi_0} = \qubit{i,j,\tau,h+1}\qubit{\xi_0}$ if either $h\notin\{i,j\}$ or $\sigma\neq \tau$.
In scanning $\dollar$, we make transitions   $K^{(n)}_{\dollar,1}\qubit{i,j,acc,n+1}\qubit{\xi_0} = \sqrt{\frac{i}{n+1}} \qubit{i,j,acc,0}\qubit{\xi_0}$ and $K^{(n)}_{\dollar,2}\qubit{i,j,acc,n+1}\qubit{\xi_0} = \sqrt{\frac{n-i+1}{n+1}}\qubit{i,j,rej,0}\qubit{\xi_0}$.
In the case of $(i,j)=(0,0)$, on the contrary, we start with $\qubit{0,0,B,0}\qubit{\xi_0}$ and apply $K^{(n)}_{\cent,1}$,  $U^{(n)}_{\sigma}\qubit{0,0,B,h}\qubit{\xi_0} = \qubit{0,0,B,h+1}\qubit{\xi_0}$, and $K^{(n)}_{\dollar,1}\qubit{0,0,B,n+1}\qubit{\xi_0} = \qubit{0,0,B,0}\qubit{\xi_0}$.

With the use of the Klaus operators in $\KK_n$,
$\{A^{(n)}_{\sigma}\}_{\sigma\in\check{\Sigma}}$ is defined by  $A^{(n)}_{\tau}(H) = \sum_{e\in\{1,2\}} K^{(n)}_{\tau,e} H (K^{(n)}_{\tau,e})^{\dagger}$ for any $\tau\in\{\cent,\dollar\}$ and $A^{(n)}_{\sigma}(H) = U^{(n)}_{\sigma} H (U^{(n)}_{\sigma})^{\dagger}$ for any $\sigma\in\Sigma$.
Given two indices $e_1,e_2\in\{1,2\}$, we succinctly write $V^{(n)}_{\cent{x}\dollar,e_1,e_2}$ for $K^{(n)}_{\dollar,e_2}U^{(n)}_{x}K^{(n)}_{\cent,e_1}$.
It then follows that $A^{(n)}_{\cent{x}\dollar}(H) = \sum_{e_1,e_2\in\{1,2\}} V^{(n)}_{\cent{x}\dollar,e_1,e_2} H (V^{(n)}_{\cent{x}\dollar,e_1,e_2})^{\dagger}$.
The desired $H^{(x)}_{fin}$ is finally defined to be  $A^{(n,x)}_{\cent x\dollar}(\Lambda^{(n)}_0)$.
For the acceptance/rejection criteria pair,  we set  $S^{(n)}_{acc} = \{(i,j,acc,0,\xi_0)\mid (i,j)\in C_n\}$ and
$S^{(n)}_{rej} = \{(0,0,B,0,\xi_0)\}$.

Finally, we intend to prove that $\SSS$ correctly solves $SymCoin$. Let $x$ be any input and set $n=\mu(x)$. If $x$ is in $SymCoin$, then there exists a pair $(i,j)\in C_n$ for which   $x_{(i)}=x_{(j)}$ holds and $i$ is the smallest number. We take the quantum state
$\qubit{\phi_x} = \qubit{i,j,acc,0,\xi_0}$. Since $V^{(n)}_{\cent x\dollar,1,1}\qubit{i,j,\xi_0,\xi_0} = \sqrt{\frac{i}{n+1}}\qubit{\phi_x}$,
it follows that $A^{(n)}_{\cent x\dollar}(\Lambda^{(n)}_0) \qubit{\phi_x} = V_{\cent x\dollar,1,1} \Lambda^{(n)}_0 (V^{(n)}_{\cent x\dollar,1,1})^{\dagger} \qubit{\phi_x} = \sqrt{\frac{i}{n+1}} V^{(n)}_{\cent x\dollar,1,1}\Lambda^{(n)}_0\qubit{i,j,\xi_0,\xi_0} = \frac{i}{n+1}\qubit{\phi_x}$; thus, we obtain $H^{(x)}_{fin}\qubit{\phi_x}= \frac{i}{n+1}\qubit{\phi_x}$.
The minimality of $i$ implies that $\qubit{\phi_x}$ is the ground state and its ground energy is $\frac{i}{n+1}$, which is smaller than $\frac{1}{2}$ since $i\leq \frac{n}{2}$. The spectral gap is therefore at least $\frac{1}{n+1}$.
Obviously, $\qubit{\phi_x}$ is an accepting quantum state in  $QS^{(m(x))}_{acc}$.
In the case of $x\notin SymCoin$, by contrast, let us consider $\qubit{\psi_x} = V^{(n)}_{\cent{x}\dollar,1,1}\qubit{0,0,\xi_0,\xi_0}$. From $\Lambda^{(n)}_0\qubit{0,0,\xi_0,\xi_0} = \frac{2}{3}\qubit{0,0,\xi_0,\xi_0}$, we conclude that
$A^{(n)}_{\cent x\dollar}(\Lambda^{(n)}_0) \qubit{\psi_x} = V^{(n)}_{\cent x\dollar,1} \Lambda^{(n)}_0 (V^{(n)}_{\cent x\dollar,1})^{\dagger} V^{(n)}_{\cent x\dollar,1} \qubit{0,0,\xi_0,\xi_0} =  V^{(n)}_{\cent x\dollar,1}\Lambda^{(n)}_0\qubit{0,0,\xi_0,\xi_0} = \frac{2}{3}\qubit{\psi_x}$; therefore,  $H^{(x)}_{fin}\qubit{\psi_x}=\frac{2}{3}\qubit{\psi_x}$ follows.
Since no computation produces $\qubit{i,j,acc,0,\xi_0}$,
all other eigenstates have eigenvalues of $1$. Therefore,  $\qubit{\psi_x}$ is the ground state with a ground energy of $\frac{2}{3}$.
Notice that $\qubit{\psi_x}$ falls into $QS^{(m(x))}_{rej}$.


\begin{example}
Consider all strings $x$ over the ternary alphabet $\Sigma=\{0,1,\#\}$ that satisfy the following \emph{promise}: $x$ is of the form $0^m \# 1^{n_1}\# 1^{n_2}\# \cdots \# 1^{n_k}$ with $k,m\in\nat^{+}$ and $n_1,\ldots,n_k\in\nat^{+}$ and there exits at most one subset $A$ of $[k]$ for which $m$ equals $\sum_{i\in A}n_i$.
Let $\mathcal{USUBSUM}$ (unary subset sum) denote the promise problem $(USubSum,nonUSubSum)$, where $USubSum$ consists of all promised strings $x$ satisfying $m=\sum_{i\in A}n_i$ for a certain subset $A\subseteq[k]$ and $nonUSubSum$ contains all promised strings not in $USubSum$. This promise problem $\mathcal{USUBSUM}$  belongs to  $\aeqs{1qqaf,linsize,constgap}$.
\end{example}

The set $USubSum$ of accepting instances \emph{with no promise} is a \emph{one-counter context-free language}, which is recognized by an appropriate one-way nondeterministic pushdown automaton using a \emph{unary} stack alphabet.
Here, we intend  to construct a conditional AEQS $\SSS$ for the promise problem $\mathcal{USUBSUM}$.
For any promised input $x$ of the form $0^t \# 1^{n_1}\# 1^{n_2}\# \cdots \# 1^{n_k}$, the segments $0^t$ and $1^{n_i}$ of $x$ are conveniently called \emph{blocks} of $x$. Notice that  $|x|=t+k+\sum_{i\in[k]}n_i$.

Given an arbitrary promised input $x\in\Sigma^*$, let us consider a nondeterministic choice of $d$ blocks, say, $(1^{n_{i_1}},1^{n_{i_2}},\ldots,1^{n_{i_d}})$ with $1\leq i_1<i_2<\cdots <i_d\leq k$.
We associate $x$ with $2^k$ binary strings $s=s_1s_2\cdots s_k$ of length $k$ and define $C_s$ to be $\{i\in[k]\mid s_i=1\}$ so that, if $C_s = \{i_1,i_2,\ldots,i_d\}$, then the tuple  $(1^{n_{i_1}},1^{n_{i_2}},\ldots,1^{n_{i_d}})$ represents a series of nondeterministic choices of blocks.
To blocks $0^t$ and $1^{n_i}$, we respectively assign values $+t$ (positive number) and $-n_i$ (negative number) and try to calculate the sum of the assigned values of the blocks $1^t$ and $1^{n_i}$ for any index $i\in C_s$ step by step. Notice  that this sum equals $0$ exactly when $x$ belongs to $USubSum$.

As for our selector $\mu$, $\mu(x)$ denotes the encoding $\pair{t,k,l}$ of three values $t$, $k$, and $l=\max_{i\in[k]}\{n_i\}$ for any promised string $x$. 
For readability, we hereafter express $\pair{t,k,l}$ as the three separate parameters $t$, $k$, and $l$.
We define the basis of our evolution space, $IND_{t,k,l}$, to be   $\{0,1\}^{k} \times ( [0,k]_{\integer} \times [-kl,t]_{\integer} )^2$ and set $IND_{t,k,l}^{(-)}$ to be $IND_{t,k,l} -\{(0^k,0,0,0,0)\}$.
The size $m(x)$ of $\SSS$ is thus ${\ilog|IND_{t,k,l}|}$, which is at most $k  + 2{\ilog(k+1)}+ 2{\ilog(kl+t+1)} = O(|x|)$.
For each element $(s,i,j,a,b)$ of $IND_{t,k,l}$, the parameter $s$ refers to a series of nondeterministic choices of blocks, $i$ refers to a block number, $j$ refers to the sum of the assigned values of blocks  that have been already calculated, and $(a,b)$ refers to a starting value of the pair $(i,j)$. For convenience,
we write $\xi_0$ for $(0,0)$.
An acceptance/rejection criteria pair $(S^{(m(x))}_{acc},S^{(m(x))}_{rej})$ associated with the input $x$ is defined as $S^{(m(x))}_{acc} = \{(s,k,0,\xi_0)\mid  s\in\{0,1\}^k, s\neq 0^k\}$ and $S^{(m(x))}_{rej} = \{(0^k,k,0,\xi_0)\}$.

The mixture $\sum_{u\in IND_{t,k,l}^{(-)}} \density{\hat{u}}{\hat{u}}$ defines the desired initial Hamiltonian $H^{(x)}_{ini}$ of $\SSS$.
For the desired final Hamiltonian, we need to construct
an appropriate 1moqqaf $\MM=\{M_n\}_{n\in\nat}$ whose elements $M_n$ are  of the form $(Q^{(n)},\Sigma,\{\cent,\dollar\}, \{A^{(n)}_{\sigma}\}_{\sigma\in\check{\Sigma}},  \Lambda^{(n)}_0,Q^{(n)}_0)$, where $\{A^{(n)}_{\sigma}\}_{\sigma\in\check{\Sigma}}$ is characterized by a certain set $\KK_n = \{K^{(n)}_{\cent,1},K^{(n)}_{\cent,2}\}\cup\{U_{\sigma}\mid \sigma\in\Sigma_{\dollar}\}$ of Kraus operators, where  $\Sigma_{\dollar}$ denotes $\Sigma\cup\{\dollar\}$. 
Letting $n=\pair{t,k,l}$, we set $Q^{(n)}=IND_{t,k,l}$ and define $Q^{(n)}_0$ as the set $\{(s,k,0,0,0)\mid s\in \{0,1\}^k\}$.
The initial mixture $\Lambda^{(n)}_0$ is  $I-\frac{1}{2}\density{0^k,0,0,\xi_0}{0^k,0,0,\xi_0}$.
In what follows, we describe how to define the quantum operators in $\KK_n$.

Let $x$ be any promised input of the form $0^t\# 1^{n_1}\# \cdots \# 1^{n_k}$ associated with the parameter triplet $(t,k,l)$. 
Fix an arbitrary string $s\in\{0,1\}^k$ and consider the associated set $C_s$.
We start with an arbitrary quantum state and apply two Kraus operators $K^{(n)}_{\cent,1}$ and $K^{(n,s)}_{\cent,2}$ defined by $K^{(n,s)}_{\cent,1} \qubit{i,j,a,b} = \measure{0,0}{i,j}\cdot \qubit{i,j,a,b}$ and $K^{(n,s)}_{\cent,2}\qubit{i,j,a,b} = \sum_{(f,g)\neq (0,0)} \measure{f,g}{i,j}\cdot \qubit{a,b,i,j}$.
These operators help us fix the starting quantum state to be  $\qubit{0,0,\xi_0}$ in the subsequent computation by applying $U^{(n,s)}_{\sigma}\qubit{i,j,a,b} =  \qubit{i,j,a,b}$ for any symbol $\sigma\in\Sigma_{\dollar}$ and
for any quantum state $\qubit{i,j,a,b}$ with $(a,b)\neq\xi_0$.
We then define the Kraus operator  $K^{(n)}_{\cent,e}$ to be $\sum_{s} (\density{s}{s}\otimes K^{(n,s)}_{\cent,e})$ for each index $e\in\{1,2\}$.
We further define the remaining unitary operators $U^{(n)}_{\sigma}$, which has the form $\sum_{s\in\{0,1\}^n}  (\density{s}{s}\otimes U^{(n,s)}_{\sigma})$ for appropriate unitary operators  $U^{(n,s)}_{\sigma}$.
While reading the first block $0^t$ of $x$, we increase the value of the second  register by making a transition given by
$U^{(n,s)}_{0}\qubit{0,j,\xi_0} = \qubit{0,j+1,\xi_0}$ for any $j\in[0,t-1]_{\integer}$.
To figure out the block number, whenever we read $\#$, we increase the value of the first register by  applying
$U^{(n,s)}_{\#}\qubit{i,j,\xi_0} =  \qubit{i+1,j,\xi_0}$  for any $i\in [0,k-1]_{\integer}$ and move the tape head to the right.
In the case of $i\notin C_s$, we make a transition  $U^{(n,s)}_{1}\qubit{i,j,0,0} = \qubit{i,j,\xi_0}$.
Otherwise, as we read each symbol $1$ of the $(i+1)$th block $1^{n_i}$, we decrease the value of the second register by one.
This process can be done by applying $U^{(n,s)}_{1}\qubit{i,j,\xi_0} = \qubit{i,j-1,\xi_0}$ for any $i\in C_s$ and $j\in[-kl,t]_{\integer}$.
In the end, when we scan the endmarker $\dollar$, we apply
$U^{(n,s)}_{\dollar}\qubit{i,j,a,b} = \qubit{i,j,a,b}$.
For each index $e\in\{1,2\}$, we briefly write  $V^{(n)}_{\cent{x}\dollar,e}$
for $U^{(n)}_{x\dollar}  K^{(n)}_{\cent,e}$.
If $s\neq0^k$, then we obtain $V^{(n)}_{\cent x\dollar,1}\qubit{s,0,0,\xi_0} = \qubit{s,k,0,\xi_0}$.
On the contrary, when $s=0^k$, $V^{(n)}_{\cent x\dollar,1}\qubit{s,0,0,\xi_0} = \qubit{s,k,t,\xi_0}$ follows.
We then observe a positive number in the third register by performing  the projective measurement $\Pi^{(n)}_0$,
which is induced from $Q^{(n)}_0$.

Finally, we define $A^{(n)}_{\cent}(H) = \sum_{e\in\{1,2\}} K^{(n)}_{\cent,e}H(K^{(n)}_{\cent,e})^{\dagger}$ and $A^{(n)}_{\sigma}(H) = U^{(n)}_{\sigma}H(U^{(n)}_{\sigma})^{\dagger}$ for  any symbol $\sigma\in \Sigma\cup\{\dollar\}$, where $H$ is an arbitrary linear operator.
The final Hamiltonian $H^{(x)}_{fin}$ is set to be $\Pi^{(n)}_0A^{(n)}_{\cent{x}\dollar}(\Lambda^{(n)}_0)\Pi^{(n)}_0$.

Let us argue that $\SSS$ correctly solves $\mathcal{USUBSUM}$ with  accuracy $1$.
Given a promised input $x$ of the aforementioned form, if $x\in USubSum$, then there exists a \emph{unique} witness $s\in\{0,1\}^k$ for which $t=\sum_{i\in C_s}n_i$. Let $n=\pair{t,k,l}$. 
Since $|C_s|\geq1$, $s\neq 0^k$ follows.
For this particular string $s$, we
consider the quantum state $\qubit{\phi_s} = \qubit{s,k,0,\xi_0}$, which
equals $V^{(n)}_{\cent{x}\dollar,1}\qubit{s,0,0,\xi_0}$.
Since $\Pi^{(n)}_0\qubit{\phi_s} =  0$, it then follows that $H^{(x)}_{fin}\qubit{\phi_s} = \Pi^{(n)}_0A^{(n)}_{\cent{x}\dollar}(\Lambda^{(n)}_0)\Pi^{(n)}_0 \qubit{\phi_s} = 0$.
Therefore, $\qubit{\phi_s}$ is the ground state and its ground energy is  $0$. The uniqueness of the ground state is guaranteed by the uniqueness of $s$.  Moreover,
$\qubit{\phi_s}$ belongs to $QS^{(m(x))}_{acc}$.
On the contrary, when $x\in nonUSubSum$, we take another quantum state  $\qubit{\psi_x} = \qubit{0^k,0,0,\xi_0}$. Notice that $\qubit{\psi_x} = V^{(n)}_{\cent{x}\dollar,1}\qubit{0^k,0,0,\xi_0}$.
Since  $\Pi^{(n)}_0\qubit{\psi_x} = \qubit{\psi_x}$ and
$\Lambda^{(n)}_0\qubit{0^k,0,0,\xi_0} = \frac{1}{2}\qubit{0^k,0,0,\xi_0}$, we conclude that $H^{(x)}_{fin}\qubit{\psi_x} = \Pi^{(n)}_0 V^{(n)}_{\cent x\dollar,1} \Lambda^{(n)}_0 (V^{(n)}_{\cent x\dollar,1})^{\dagger} \Pi^{(n)}_0 \qubit{\psi_x} = \frac{1}{2} V^{(n)}_{\cent{x}\dollar,1}\qubit{0^k,0,0,\xi_0} = \frac{1}{2}\qubit{\psi_x}$. Since there is no string $s\in\{0,1\}^k$  witnessing $t=\sum_{i\in C_s}n_i$, all other eigenvalues must be $1$; therefore, the ground energy is $\frac{1}{2}$.
Clearly, $\qubit{\psi_x}$ belongs to $QS^{(m(x))}_{rej}$.
By the above argument, we conclude that $\SSS$ correctly solves $\mathcal{USUBSUM}$.


\begin{example}\label{example:MULTIDUP}
Let us consider all strings over the alphabet $\Sigma=\{0,1,\#\}$ that satisfy the following \emph{promise}: (i) $x$ is of the form $w_0\# w_1\# w_2 \cdots \# w_k$ with $k\in\nat^{+}$ and $w_0,w_1,w_2,\ldots, w_k\in\{0,1\}^*$,  (ii) $|w_i|=|w_j|>0$ for any pair $i,j\in[0,k]_{\integer}$, and (iii) there exists at most one index $i\in[k]$ for which  $w_i$ is different from the rest. 
Consider the promise problem $\mathcal{MULTDUP} = (MultDup,nonMultDup)$, where $MultDup$ consists of all promised strings satisfying $w_i=w_j$ for any pair $i,j\in[0,k]_{\integer}$ and $nonMultDup$ contains all promised strings not in $MultDup$.
This promise problem $\mathcal{MULTDUP}$ (multiple duplication) falls into  $\aeqs{1qqaf,logsize,constgap}$.
\end{example}

Notice that the set $nonMultDup$ of rejecting instances consists of strings $w_0\#w_1\#\cdots \# w_k$ satisfying $(w_0)_{(j)} \neq (w_i)_{(j)}$
for two appropriate indices $i\in[k]$ and $j\in[l]$, where $l=|w_0|$.
If there is \emph{no promise}, nonetheless, the set $MultDup$ of accepting instances is context-sensitive but not context-free.
In what follows, we plan to explain how an appropriately chosen AEQS $\SSS$ can solve the promise problem $\mathcal{MULTDUP}$ with accuracy $1$.

For each promised string $x$, we take two parameters $k$ and $l$ explained in the example and set $\mu(x)=\pair{k,l}$. Given such a pair $(k,l)$,
$IND_{\pair{k,l}}$ denotes the index set $[0,k]_{\integer} \times [0,l]_{\integer}  \times ( \Sigma_{B} \times [0,k]_{\integer} \times [0,l]_{\integer} \times [0,N+1]_{\integer} )^2$, where $N=l(k+1)+k$ and  $\Sigma_{B} = \Sigma\cup \{B\}$ with a new symbol $B$ not in $\Sigma$.
For simplicity, we intend to write $IND_{k,l}$ in place of $IND_{\pair{k,l}}$.
An element $(i,j,\sigma,h,r,t,\sigma',h',r',t')$ of $IND_{k,l}$ roughly means that we are scanning the $r$th symbol of $w_i$, $\sigma$ is the $j$th symbol of $w_0$, $h$ is the number of \#s that we have already passed,  $t$ is a tape head location, and $(\sigma',h',r',t')$ refers to a starting value of $(\sigma,h,r,t)$. 
We further set $m(x)$ to be ${\ilog|IND_{\mu(x)}|}$, which is obviously $O(\log|x|)$. 
We succinctly write $C$ for $[k]\times[l]$ and $\xi_0$ for $(B,0,0,0)$.
Moreover, let $S^{(n)}_{acc} =\{(i,j,B,k,0,0,\xi_0)\mid (i,j)\in C\}$ and $S^{(n)}_{rej} = \{(0,0,B,k,0,0,\xi_0)\}$.

Concerning the desired set $\{H^{(x)}_{ini}\}_{x\in\Sigma^*}$  of $\SSS$'s initial Hamiltonians, for each string $x\in\Sigma^*$, we define $H^{(x)}_{ini}$ to be $\sum_{u\in IND_{\mu(x)}^{(-)}} \density{\hat{u}}{\hat{u}}$, where $IND_{k,l}^{(-)}= IND_{k,l} -\{(0,0,\xi_0,\xi_0)\}$. To define the set $\{H^{(x)}_{fin}\}_{x\in\Sigma^*}$  of $\SSS$'s final Hamiltonians, we  hereafter introduce an appropriate 1moqqaf, say, $\MM=\{M_n\}_{n\in\nat}$.

Let $x$ be an arbitrary promised input associated with $(k,l)$ and set $n=\pair{k,l}$ for brevity.
Let $M_n = (Q^{(n)},\Sigma,\{\cent,\dollar\}, \{A^{(n)}_{\sigma}\}_{\sigma\in\check{\Sigma}}, \Lambda^{(n)}_0,Q^{(n)}_0)$.
Under the promise, $x$ has the form $w_0\#w_1\# \cdots \# w_k$ with  $l=|w_i|>0$ for any index $i\in[0,k]_{\integer}$.
We split the computation of $M_n$ into two parts.
Our goal is to make the ground state of $H^{(x)}_{fin}$ have the form $\qubit{i,j,B,k,0,0}\qubit{\xi_0}$ for a certain pair $(i,j)\in C\cup\{(0,0)\}$.
Let us consider an arbitrary pair $(i,j)\in C$ and let
$u$ and $w$ express two tuples $(\sigma,a,b,c)$ and $(\sigma',a',b',c')$, respectively. In reading $\cent$, we apply Kraus operators $K^{(n)}_{\cent,1}$ and $K^{(n)}_{\cent,2}$, which are defined as $K^{(n)}_{\cent,1}\qubit{i,j,u,w} = \measure{\xi_0}{u}\cdot \qubit{i,j,u,w}$ and $K^{(n)}_{\cent,2}\qubit{i,j,u,w} = \sum_{z:z\neq \xi_0} \measure{z}{u}\cdot \qubit{i,j,w,u}$.
We set $A^{(n)}_{\cent}(H) = \sum_{e\in\{1,2\}} K^{(n)}_{\cent,e}H(K^{(n)}_{\cent,e})^{\dagger}$.
This first step helps us  concentrate only on quantum states of the form $\qubit{i,j,u,\xi_0}$. This is possible because, in the case where the last four registers do not contain $\qubit{\xi_0}$, we force the machine to apply the identity operator $I$.

We assume that the computation of $M_n$ begins with the quantum state $\qubit{i,j,\xi_0}\qubit{\xi_0}$. We first move $M$'s tape head to the right, locate the $j$th symbol of $w_0$ using the $5$th register as a counter, and remember it by creating the quantum state  $\qubit{i,j,(w_0)_{(j)},0,j,j+1}\qubit{\xi_0}$.
Using the $4$th register as another counter, we count the number of $\#$s to find the beginning of the $i$th block $w_i$. By counting the number of symbols in this particular block, we further locate the $j$th symbol $(w_i)_{(j)}$ by decreasing the first counter.
Assuming that the current tape head location is $h$, we check whether this symbol $(w_i)_{(j)}$ is different from $(w_0)_{(j)}$. If this is truly the case, then we create the quantum state $\qubit{i,j,B,k,0,h+1}\qubit{\xi_0}$; otherwise, we create  $\qubit{i,j,(w_0)_{(j)},k,0,h+1}\qubit{\xi_0}$ instead.
When we finally reach $\dollar$, we change $\qubit{i,j,\tau,k,0,n+1}\qubit{\xi_0}$ to $\qubit{i,j,\tau,k,0,0}\qubit{\xi_0}$ for any $\tau\in\Sigma_{B}$.
We then observe a non-blank symbol in the third register by performing   the projective measurement $\Pi^{(n)}_0$, which is induced from $Q^{(n)}_0  = \{(i,j,B,k,0,0,\xi_0)\mid (i,j)\in C\cup\{(0,0)\}\}$.

Concerning $A^{(n)}_{\sigma}$ for each symbol $\sigma\in \Sigma\cup\{\dollar\}$, we set $A^{(n)}_{\sigma}(H) = U^{(n)}_{\sigma}H(U^{(n)}_{\sigma})^{\dagger}$.
For readability, we abbreviate $U^{(n)}_{x\dollar} K^{(n)}_{\cent,e}$ as
$V^{(n)}_{\cent{x}\dollar,e}$ for each index $e\in\{1,2\}$. It then follows that $A^{(n)}_{\cent{x}\dollar}(H) = \sum_{e\in\{1,2\}} V^{(n)}_{\cent{x}\dollar,e} H (V^{(n)}_{\cent{x}\dollar,e})^{\dagger}$.
Finally, we set $\Lambda^{(n)}_{0}= I - \frac{1}{2}\density{0,0,\xi_0}{0,0,\xi_0}$ and define $H^{(x)}_{fin}$
to be $\Pi^{(n)}_0A^{(n)}_{\cent x\dollar} (\Lambda^{(n)}_0)\Pi^{(n)}_0$.

We still need to show that $\SSS$ correctly solves $\mathcal{MULTDUP}$ with accuracy $1$. Let us consider the first case where $x$ is in $nonMultDup$ with two parameters $(k,l)$ and choose a \emph{unique} pair $(i,j)\in C$ satisfying $(w_0)_{(j)} \neq (w_i)_{(j)}$. 
In this case, we take the quantum state $\qubit{\phi_x}= \qubit{i,j,B,k,0,0,\xi_0}$, which belongs to $QS^{(m(x))}_{acc}$. Let $n=\pair{k,l}$. Since $\Pi^{(n)}_0\qubit{\phi_x}=0$,
we obtain $\Pi^{(n)}_0 A^{(n)}_{\cent x\dollar}(\Lambda^{(n)}_0)\Pi^{(n)}_0 \qubit{\phi_x} = 0$. 
This fact implies that $H^{(x)}_{fin}\qubit{\phi_x} =0$, and thus $\qubit{\phi_x}$ is the ground state of energy $0$.
On the contrary, let us consider the case where $x$ belongs to $MultDup$.
We then pick $\qubit{\psi_x} =
\qubit{0,0,B,k,0,0,\xi_0}$ in $QS^{(m(x))}_{rej}$.
Note that $V^{(n)}_{\cent x\dollar,1}\qubit{0,0,\xi_0,\xi_0} = \qubit{\psi_x}$ and $\Pi^{(n)}_0\qubit{\psi_x}=\qubit{\psi_x}$.
A simple calculation concludes that
$\Pi^{(n)}_0A^{(n)}_{\cent x\dollar}(\Lambda^{(n)}_0)\Pi^{(n)}_0 \qubit{\psi_x}
= \Pi^{(n)}_0 V^{(n)}_{\cent{x}\dollar,1} \Lambda^{(n)}_0 (V^{(n)}_{\cent{x}\dollar,1})^{\dagger}\Pi^{(n)}_0 \qubit{\psi_x}
= \Pi^{(n)}_0 V^{(n)}_{\cent x\dollar,1}\Lambda^{(n)}_0 \qubit{0,0,\xi_0,\xi_0}
= \frac{1}{2} \Pi^{(n)}_0 V^{(n)}_{\cent{x}\dollar,1} \qubit{0,0,\xi_0,\xi_0}
= \frac{1}{2}\qubit{\psi_x}$; thus, $H^{(x)}_{fin}\qubit{\psi_x}= \frac{1}{2}\qubit{\psi_x}$ follows.
Since all basis states other than $\qubit{\psi_x}$ have value $1$ in $\Lambda^{(n)}_0$, $\qubit{\psi_x}$ must be a unique ground state and its ground energy is $\frac{1}{2}$.
We thus conclude that $\SSS$ solves
$\mathcal{MULTDUP}$ with accuracy $1$.

\subsection{Structural Properties of AEQSs}\label{sec:struct-properties}

We have already exemplified in Section \ref{sec:example} the computational power of conditional AEQSs for six languages.
We further explore the structural properties
of the conditional AEQSs.

The first consideration is the closure properties under unary and binary operations.
Given a binary operation $\circ$ acting on two languages, a language family $\LL$ is said to be \emph{closed under} $\circ$ if, for any two languages $L_1,L_2\in\LL$ over the same alphabet, $L_1\circ L_2$ also belongs to $\LL$. For a unary operator $\circ$, we similarly define the closure property of $\LL$ under this operator $\circ$.
In what follows, we wish to discuss such properties for conditional AEQSs particularly under \emph{XOR} and  \emph{complementation}.
For any given nonempty condition set $\FF$ dictating the behaviors of  AEQSs, we say that $\FF$ \emph{allows a swap of acceptance/rejection criteria} if the new AEQS $\SSS'$ obtained from each AEQS $\SSS$ satisfying $\FF$ by exchanging   $\{S^{(n)}_{acc}\}_{n\in\nat}$ and $\{S^{(n)}_{rej}\}_{n\in\nat}$
in the family $\{(S^{(n)}_{acc},S^{(n)}_{rej})\}_{n\in\nat}$ of acceptance/rejection criteria pairs of $\SSS$ also satisfies $\FF$.
Additionally, we say that $\FF$ \emph{allows accuracy amplification} if, for any AEQS satisfying $\FF$ with accuracy at least $\varepsilon\in(1/2,1)$ and any constant $c$ with $0<c\varepsilon<1$, there always exists another computationally-equivalent AEQS with $\FF$ whose accuracy is at least $c\varepsilon$.

\begin{proposition}\label{properties}
Let $\FF$ be any nonempty set of conditions. Each of the following statements holds.
\renewcommand{\labelitemi}{$\circ$}
\begin{enumerate}\vs{-2}
  \setlength{\topsep}{-2mm}%
  \setlength{\itemsep}{1mm}%
  \setlength{\parskip}{0cm}%

\item $\aeqs{\FF}$ is closed under complementation if $\FF$ allows a swap of acceptance/rejection criteria.

\item $\aeqs{\FF}$ is closed under XOR if $\FF$ allows accuracy amplification.
\end{enumerate}
\end{proposition}

\begin{proof}
(1) Assume that a condition set $\FF$ allows a swap of acceptance/rejection criteria. Given any language $L$ in $\aeqs{\FF}$, take an AEQS $\SSS$ satisfying $\FF$ that solves $L$ with high accuracy. Let $\{(S^{(n)}_{acc},S^{(n)}_{rej})\}_{n\in\nat}$ denote a family of acceptance/rejection criteria pairs of $\SSS$. We define a new AEQS $\SSS'$ from $\SSS$ by exchanging the roles of $S^{(n)}_{acc}$ and $S^{(n)}_{rej}$. By the proposition's assumption, $\SSS'$ also satisfies $\FF$.
Obviously, $\SSS'$ correctly solves the complement $\overline{L}$ with the same accuracy as $\SSS$.
Therefore, $\overline{L}$ belongs to $\aeqs{\FF}$.
This shows the closure of $\aeqs{\FF}$ under complementation.

(2) To express the XOR operation between two languages $L_1$ and $L_2$ over a common alphabet $\Sigma$, we use the notation $L_1\oplus L_2$, which equals $\{x\in\Sigma^*\mid \text{either $x\in \overline{L}_1\cap L_2$ or $x\in L_1\cap\overline{L}_2$} \}$.
Let us consider two AEQSs $\SSS_1$ and $\SSS_2$ satisfying $\FF$ for $L_1$ and $L_2$, respectively.
Assume that, for each index $j\in\{1,2\}$, $\SSS_j$ has the form $(m,\Sigma,\varepsilon_j,\{H_{j,ini}^{(x)}\}_{x\in\Sigma^*}, \{H_{j,fin}^{(x)}\}_{x\in\Sigma^*}, \{S_{j,acc}^{(n)}\}_{n\in\nat},\{S_{j,rej}^{(n)}\}_{n\in\nat})$.
Set $\varepsilon=\min\{\varepsilon_1,\varepsilon_2\}$. Since $\FF$ allows accuracy amplification, without loss of generality, we assume that $\frac{7}{8}<\varepsilon< 1$.

We want to define a new AEQS $\tilde{\SSS} =  (\tilde{m},\Sigma,\tilde{\varepsilon}, \{\tilde{H}^{(x)}_{ini}\}_{x\in\Sigma^*}, \{\tilde{H}^{(x)}_{fin}\}_{x\in\Sigma^*}, \{\tilde{S}^{(n)}_{acc}\}_{n\in\nat}, \{\tilde{S}^{(n)}_{rej}\}_{n\in\nat})$ for $L_1\oplus L_2$ in the following fashion.
Fix a string $x\in\Sigma^*$ arbitrarily. We define $\tilde{H}^{(x)}_{ini} = H^{(x)}_{1,ini}\otimes H^{(x)}_{2,ini}$ and $\tilde{H}^{(x)}_{fin} = H^{(x)}_{1,fin}\otimes H^{(x)}_{2,fin}$.
Furthermore, we define $\tilde{S}^{(n)}_{acc} = (S^{(n)}_{1,acc}\otimes S^{(n)}_{2,rej})\cup (S^{(n)}_{1,rej}\otimes S^{(n)}_{2,acc})$ and $\tilde{S}^{(n)}_{rej} = (S^{(n)}_{1,acc}\otimes S^{(n)}_{2,acc})\cup (S^{(n)}_{1,rej}\otimes S^{(n)}_{2,rej})$.
Associated with $\tilde{S}^{(n)}_{acc}$ and $\tilde{S}^{(n)}_{rej}$,  $\widetilde{QS}^{(n)}_{acc}$ and $\widetilde{QS}^{(n)}_{rej}$ respectively denote the Hilbert spaces spanned by all elements in $\tilde{S}^{(n)}_{acc}$ and those in $\tilde{S}^{(n)}_{acc}$.

For convenience, we write $\eta$ for $\sqrt{2}(1-\varepsilon)$.
Assume that $x\in L_1\oplus L_2$. Consider the case where $x\in L_1$ and $x\notin L_2$. The other case of both $x\notin L_1$ and $x\in L_2$ is symmetrically proven.
Let us assume that $\qubit{\phi_{j,x}}$ is the ground state of $H^{(x)}_{j,fin}$ for each  index $j\in\{1,2\}$. The ground state of $\tilde{H}^{(x)}_{fin}$ is clearly $\qubit{\phi_{1,x}}\otimes \qubit{\phi_{2,x}}$.
We choose two quantum states $\qubit{\psi_{1,x}}\in QS^{(m(x))}_{acc}$ and  $\qubit{\psi_{2,x}}\in QS^{(m(x))}_{rej}$ that are    $\eta$-close to $\qubit{\phi_{1,x}}$ and $\qubit{\phi_{2,x}}$, respectively.
Since $\|\qubit{\phi_{1,x}}\otimes \qubit{\phi_{2,x}} - \qubit{\psi_{1,x}}\otimes \qubit{\psi_{2,x}} \|_2  = \|\qubit{\phi_{1,x}}\otimes(\qubit{\psi_{1,x}}-\qubit{\psi_{2,x}}) + (\qubit{\phi_{1,x}}-\qubit{\phi_{2,x}})\otimes \qubit{\psi_{2,x}}\|_2
\leq \| \qubit{\phi_{1,x}} - \qubit{\phi_{2,x}} \|_2 + \| \qubit{\psi_{1,x}} - \qubit{\psi_{2,x}}\|_2 \leq 2\sqrt{\eta}$, it follows that $\qubit{\phi_{1,x}}\otimes \qubit{\phi_{2,x}}$ is $4\eta$-close to $\qubit{\psi_{1,x}}\otimes \qubit{\psi_{2,x}}$ in $\widetilde{QS}^{(m(x))}_{acc}$.
Similarly, we can deal with the case of $x\notin L_1\oplus L_2$.
The desired accuracy bound $\tilde{\varepsilon}$ is set to be $4\varepsilon-3$.
We then obtain $\frac{1}{2}<\tilde{\varepsilon}< 1$. Since $4\eta=\sqrt{2}(1-\tilde{\varepsilon})$, $\tilde{\SSS}$ solves $L_1\oplus L_2$ with accuracy at least $\tilde{\varepsilon}$.
\end{proof}


Let us consider \emph{inverse images} of functions invariant with system sizes.  Given two functions $f:\Sigma^*\to\Sigma^*$ and $m:\Sigma^*\to\nat$, we say that $f$ is \emph{$m$-preserving} if $m(f(x)) = m(x)$ holds for every string $x\in\Sigma^*$. Let $L_{f^{-1}}$ be the inverse image of $L$ by $f$; that is, $L_{f^{-1}}= \{x\in\Sigma^*\mid f(x)\in L\}$.

\begin{lemma}\label{function-replacement}
Let $L$ be any language over alphabet $\Sigma$, let $m:\Sigma^*\to\nat$ be a function,  and let $f$ be any $m$-preserving function on $\Sigma^*$.
Assume that an AEQS $\SSS = (m,\Sigma,\varepsilon,\{H^{(x)}_{ini}\}_{x\in\Sigma^*}, \{H^{(x)}_{fin}\}_{x\in\Sigma^*}, \{S^{(n)}_{acc}\}_{n\in\nat}, \{S^{(n)}_{rej}\}_{n\in\nat})$ recognizes $L$ with accuracy at least $\varepsilon\in (1/2,1]$. Define $\tilde{H}^{(x)}_{ini}= H^{(f(x))}_{ini}$, $\tilde{H}^{(x)}_{fin}= H^{(f(x))}_{fin}$, $\tilde{S}^{(n)}_{acc} = S^{(n)}_{acc}$, and $\tilde{S}^{(n)}_{rej} = S^{(n)}_{rej}$. Moreover, $\SSS_{f^{-1}}$ denotes the AEQS obtained directly from $\SSS$ by replacing $H^{(x)}_{ini}$,  $H^{(x)}_{fin}$, $S^{(n)}_{acc}$, and $S^{(n)}_{rej}$ respectively with  $\tilde{H}^{(x)}_{ini}$,  $\tilde{H}^{(x)}_{fin}$, $\tilde{S}^{(n)}_{acc}$, and $\tilde{S}^{(n)}_{rej}$. It then follows that $\SSS_{f^{-1}}$ recognizes $L_{f^{-1}}$ with accuracy at least $\varepsilon$.
\end{lemma}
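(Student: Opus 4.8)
The plan is to show that, input by input, the system $\SSS_{f^{-1}}$ running on $x$ performs \emph{literally the same} adiabatic evolution as $\SSS$ running on $f(x)$, and then to transfer the accuracy guarantee through the defining equivalence $x\in L_{f^{-1}}\iff f(x)\in L$. First I would fix an arbitrary string $x\in\Sigma^*$ and set $n=m(x)$; since $f$ is $m$-preserving, we also have $n=m(f(x))$, which is the crucial bookkeeping fact that makes the whole argument go through. Because $\tilde{H}^{(x)}_{ini}=H^{(f(x))}_{ini}$ and $\tilde{H}^{(x)}_{fin}=H^{(f(x))}_{fin}$ both have unique ground states (inherited from the hypothesis on $\SSS$), Condition~(i) in the definition of ``solves'' is automatically met for $\SSS_{f^{-1}}$ on $x$.

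Next I would observe that every ingredient of an adiabatic evolution --- the time-dependent Hamiltonian $H^{(x)}(t)=(1-\frac{t}{T_x})\tilde{H}^{(x)}_{ini}+\frac{t}{T_x}\tilde{H}^{(x)}_{fin}$, the minimum evolution time $T_x$, the discretized evolution, and hence the outcome $\qubit{\psi^{(x)}_g(T_x)}$ --- is determined solely by the ordered pair $(\tilde{H}^{(x)}_{ini},\tilde{H}^{(x)}_{fin})$ together with the fixed accuracy bound $\varepsilon$. Since this pair equals $(H^{(f(x))}_{ini},H^{(f(x))}_{fin})$, we get $T_x=T_{f(x)}$ and the outcome of $\SSS_{f^{-1}}$ on $x$ coincides with the outcome $\qubit{\psi^{(f(x))}_g(T_{f(x)})}$ of $\SSS$ on $f(x)$. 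Moreover, from $\tilde{S}^{(n)}_{acc}=S^{(n)}_{acc}$ and $\tilde{S}^{(n)}_{rej}=S^{(n)}_{rej}$ with $n=m(x)=m(f(x))$, we obtain the identifications $\widetilde{QS}^{(m(x))}_{acc}=QS^{(m(f(x)))}_{acc}$ and $\widetilde{QS}^{(m(x))}_{rej}=QS^{(m(f(x)))}_{rej}$ of the accepting and rejecting spaces.

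Finally I would conclude by a two-case check. If $x\in L_{f^{-1}}$, then $f(x)\in L$, so by the accuracy guarantee for $\SSS$ the outcome $\qubit{\psi^{(f(x))}_g(T_{f(x)})}$ is $\sqrt{2}(1-\varepsilon)$-close to some normalized accepting quantum state in $QS^{(m(f(x)))}_{acc}$; by the identifications above this is exactly the outcome of $\SSS_{f^{-1}}$ on $x$ and exactly the space $\widetilde{QS}^{(m(x))}_{acc}$, so $\SSS_{f^{-1}}$ accepts $x$ with accuracy at least $\varepsilon$. The case $x\in\Sigma^*-L_{f^{-1}}$, i.e.\ $f(x)\in\Sigma^*-L$, is handled symmetrically with the rejecting spaces. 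Hence $\SSS_{f^{-1}}$ recognizes $L_{f^{-1}}$ with accuracy at least $\varepsilon$.

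As for the main obstacle: there is essentially no analytic difficulty here, and the argument is a routine ``rename the input'' reduction. The only point that genuinely requires care is that the acceptance/rejection criteria are indexed by the \emph{system size} rather than by the input string itself, and verifying that the $m$-preservation hypothesis on $f$ is precisely what is needed to make $\widetilde{QS}^{(m(x))}_{acc}$ and $QS^{(m(f(x)))}_{acc}$ the same Hilbert space. I would make sure to spell that single point out explicitly, since it is the sole place the hypothesis on $f$ enters the proof.
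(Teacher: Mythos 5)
Your proposal is correct and follows essentially the same route as the paper's proof: identify the Hamiltonians (and hence ground states) of $\SSS_{f^{-1}}$ on $x$ with those of $\SSS$ on $f(x)$, use the $m$-preservation of $f$ together with $\tilde{S}^{(n)}_{acc}=S^{(n)}_{acc}$ and $\tilde{S}^{(n)}_{rej}=S^{(n)}_{rej}$ to match the accepting/rejecting spaces, and transfer the $\sqrt{2}(1-\varepsilon)$-closeness guarantee through $x\in L_{f^{-1}}\iff f(x)\in L$. Your explicit remark that $m$-preservation is exactly what aligns $\widetilde{QS}^{(m(x))}_{acc}$ with $QS^{(m(f(x)))}_{acc}$ is precisely the point the paper's proof also highlights.
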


\begin{proof}
As in the premise of the lemma, we take an AEQS $\SSS$ that recognizes  $L$ with accuracy at least $\varepsilon\in(1/2,1]$.
Let us consider $\SSS_{f^{-1}}$ defined from $\SSS$ and $f$.
Our goal is to verify that $S_{f^{-1}}$ indeed recognizes $L_{f^{-1}}$ with accuracy at least $\varepsilon$.
Let $x$ be any input over $\Sigma$. We remark that $\tilde{H}^{(x)}_{ini}$ has a unique ground state because it is also the ground state of $H^{(f(x))}_{ini}$.
when $x$ is in $L_{f^{-1}}$, since $f(x)\in L$, there exists a quantum state $\qubit{\phi_{f(x)}}$ in $QS^{(m(f(x)))}_{acc}$ that is $\sqrt{2}(1-\varepsilon)$-close to the ground state, say,  $\qubit{\psi_{f(x)}}$ of $H^{(f(x))}_{fin}$.
Notice that the Hilbert space $\widetilde{QS}^{(m(x))}_{acc}$ induced from $\tilde{S}^{(m(x))}_{acc}$ matches $QS^{(m(f(x)))}_{acc}$ since $f$ is $\mu$-preserving and $\tilde{S}^{(n)}_{acc} = S^{(n)}_{acc}$ for any $n$.
Therefore, $\qubit{\psi_{f(x)}}$ is also the ground state of $\tilde{H}^{(x)}_{fin}$ and is $\sqrt{2}(1-\varepsilon)$-close to $\qubit{\phi_{f(x)}}$, which also belongs to $\widetilde{QS}^{(m(x))}_{acc}$. As a consequence, $\SSS_{f^{-1}}$ accepts $x$ with accuracy at least $\varepsilon$. The case of $x\notin L_f$ is similarly treated. We thus conclude that $S_{f^{-1}}$ recognizes $L_{f^{-1}}$ with accuracy at least $\varepsilon$.
\end{proof}

Let us consider AEQSs whose Hamiltonians are generated by certain 1moqqaf's, which read extended inputs with the two endmarkers $\cent$ and $\dollar$.   It is, however, possible to remove the right-endmarker $\dollar$ from an input tape of the 1moqqaf's. We call a 1moqqaf with no right-endmarker a \emph{$\dollar$-less 1moqqaf}.

\begin{lemma}\label{remove-dollar}
For any AEQS over alphabet $\Sigma$, if its Hamiltonians are generated by certain 1moqqaf's with the two endmarkers $\cent$ and $\dollar$, then there exists $\dollar$-less 1moqqaf's that also generate the same  Hamiltonians.
\end{lemma}

\begin{proof}
Let $\SSS$ be any AEQS of the form $(m,\Sigma,\varepsilon,\{H^{(x)}_{ini}\}_{x\in\Sigma^*}, \{H^{(x)}_{fin}\}_{x\in\Sigma^*}, \{S^{(n)}_{acc}\}_{n\in\nat}, \{S^{(n)}_{rej}\}_{n\in\nat})$.
Concerning the family $\{H^{(x)}_{fin}\}_{x\in\Sigma^*}$ of final Hamiltonians, we take a polynomially-bounded selector $\mu$ and a 1moqqaf $\MM=\{M_{n}\}_{n\in\nat}$ that generate it.
Fix an arbitrary input $x\in\Sigma^*$ and let $n$ denote $\mu(x)$.
Assume that  $M_{n}$ is of the form $(Q^{(n)}, \Sigma,\{\cent,\dollar\}, \{A^{(n)}_{\sigma}\}_{\sigma\in\check{\Sigma}}, \Lambda^{(n)}_0, Q^{(n)}_0)$ satisfying $H^{(x)}_{fin} = \Pi^{(n)}_0 A^{(n)}_{\cent x\dollar}(\Lambda^{(n)}_0)\Pi^{(n)}_0$, where $\Pi^{(n)}_0$ is induced from $Q^{(n)}_0$. Since $\MM$ is a 1moqqaf, there is a family $\{U^{(n)}_{\sigma}\}_{\sigma\in\check{\Sigma}}$ of unitary operators such that  $A^{(n)}_{\sigma}(H) = U^{(n)}_{\sigma}H(U^{(n)}_{\sigma})^{\dagger}$ for any symbol $\sigma\in\check{\Sigma}$.

Here, we wish to define a new $\dollar$-less 1moqqaf $\tilde{\MM}$ that can generate  $\{H^{(x)}_{fin}\}_{x\in\Sigma^*}$ as well.
For this purpose, we define $\tilde{U}^{(n)}_{\cent} = U^{(n)}_{\dollar}U^{(n)}_{\cent}$ and $\tilde{U}^{(n)}_{\sigma} = U^{(n)}_{\dollar} U^{(n)}_{\sigma} (U^{(n)}_{\dollar})^{\dagger}$ for any symbol $\sigma\in\Sigma$.
Clearly, $\tilde{U}^{(n)}_{\sigma}$ is unitary as well.
We succinctly write $\tilde{M}_n$ to denote the machine $(Q^{(n)},\Sigma, \{\cent\}, \{\tilde{U}^{(n)}_{\sigma}\}_{\sigma\in\Sigma_{\cent}}, \Lambda^{(n)}_0, Q^{(n)}_0)$. Notice that $\tilde{\MM} =\{\tilde{M}_n\}_{n\in\nat}$ is a $\dollar$-less 1moqqaf.
Given a symbol $\sigma\in\Sigma\cup\{\cent\}$, we set  $\tilde{A}^{(n)}_{\sigma}(H) = \tilde{U}^{(n)}_{\sigma}H (\tilde{U}^{(n)}_{\sigma})^{\dagger}$ and define  $\tilde{H}^{(x)}_{fin}$ to be $\Pi^{(n)}_0 \tilde{A}^{(n)}_{\cent{x}\dollar}(\Lambda^{(n)}_0) \Pi^{(n)}_0$. It follows by induction that $A^{(n)}_{\cent{x}\dollar}(H) = U^{(n)}_{\cent{x}\dollar} H (U^{(n)}_{\cent{x}\dollar})^{\dagger} = \tilde{U}^{(n)}_{\cent{x}\dollar} H (\tilde{U}^{(n)}_{\cent{x}\dollar})^{\dagger} = \tilde{A}^{(n)}_{\cent{x}\dollar}(H)$. This guarantees that $\tilde{H}^{(x)}_{fin}$ coincides with $H^{(x)}_{fin}$.

A similar argument works for  $\{H^{(x)}_{ini}\}_{x\in\Sigma^*}$. Therefore, the lemma follows.
\end{proof}

\section{Basic Simulations by AEQSs}\label{sec:simulation}

In Section \ref{sec:example}, we have used various condition sets $\FF$ to construct six examples of conditional AEQSs, $\aeqs{\FF}$. It has become clear that such conditions can be used as a \emph{complexity measure} to classify numerous languages according to how complex to construct AEQSs for the languages.
This section will pursue an idea of making such classification and present sufficient condition sets to characterize four known language families.

\subsection{Simulation of 1moqfa's}

Bounded-error 1moqfa's, which were studied earlier by Moore and Crutchfield \cite{MC00} and by Brodsky and Pippenger \cite{BP02}, may be  considered as the simplest form of quantum finite automata.
Recall from Section \ref{sec:QFA} that $1\mathrm{MOQFA}$ is the language family characterized by bounded-error 1moqfa's.
We give an upper bound on the complexity of $1\mbox{MOQFA}$ in terms of conditional AEQSs. This result may be  compared with Example \ref{example:L_a}, in which
the language $L_a$ not in $1\mathrm{MOQFA}$ falls into  $\aeqs{1moqqaf,logsize,constgap,0\mbox{-}energy}$.

\begin{theorem}\label{upper-lower-bound}
$1\mathrm{MOQFA} \subseteq \aeqs{1moqqaf,constsize,constgap,0\mbox{-}energy}$.
\end{theorem}

\begin{proof}
Let $L$ be an arbitrary language in $1\mathrm{MOQFA}$ over alphabet $\Sigma$ and choose a 1moqfa $M = (Q,\Sigma,\{\cent,\dollar\},\{A_{\sigma}\}_{\sigma\in\check{\Sigma}}, q_0, Q_{acc},Q_{rej})$ recognizing $L$ with error probability at most a certain constant $\varepsilon\in[0,1/2)$.
For simplicity, we assume that $Q$ is conveniently expressed as $\{q_0,q_1,\ldots,q_{2^{k_0}-1}\}$, including the initial inner state $q_0$ of $M$, for a certain constant $k_0\in\nat^{+}$.
Additionally, we take unitary matrices $U_{\sigma}$ satisfying $A_{\sigma}(H) = U_{\sigma}H (U_{\sigma})^{\dagger}$ for any linear operator $H$.

Given an input $x\in\Sigma^*$, we express the extended input $\cent x\dollar$ as $x_0x_1x_2\cdots x_{n+1}$ with $n=|x|$, where $x_0=\cent$, $x_{n+1}=\dollar$, and $x_i\in \Sigma$ for any index $i\in[n]$.
We inductively define $\rho_0=\density{q_0}{q_0}$ and $\rho_{i+1} = U_{x_i}\rho_i U_{x_i}^{\dagger}$ for each index $i\in[0,n+1]_{\integer}$.
Let $\Pi_{acc}$ and $\Pi_{rej}$ be two projections onto the Hilbert spaces $QS_{acc}$ and $QS_{rej}$ spanned by $\{\qubit{q}\mid q\in Q_{acc}\}$ and $\{\qubit{q}\mid q\in Q_{rej}\}$, respectively.
By the choice of $M$ for $L$, it follows that, for any string $x\in L$, $\trace(P_{acc} \rho_{n+2}) \geq 1-\varepsilon$ and, for any $x\in\overline{L}$, $\trace(P_{rej} \rho_{n+2}) \geq 1-\varepsilon$.

To prove that $L$ belongs to $\aeqs{1moqqaf,constsize,constgap,0\mbox{-}energy}$, it suffices to
show how to simulate $M$ by a suitable AEQS, say, $\SSS = (m,\Sigma,\varepsilon,\{H^{(x)}_{ini}\}_{x\in\Sigma^*}, \{H^{(x)}_{fin}\}_{x\in\Sigma^*}, \{S^{(n)}_{acc}\}_{n\in\nat}, \{S^{(n)}_{rej}\}_{n\in\nat})$. For this AEQS $\SSS$,
$H^{(x)}_{ini}$ and $H^{(x)}_{fin}$ are respectively defined to be $W^{\otimes k_0} \Lambda_0 W^{\otimes k_0}$ and $U_{\cent x\dollar} \Lambda_0 U_{\cent x\dollar}^{\dagger}$, where   $\Lambda_0 = \sum_{q\in Q^{(-)}} \density{q}{q}$ with $Q^{(-)}= Q-\{q_0\}$. Notice that $H^{(x)}_{ini}$ and $H^{(x)}_{fin}$ have dimension $2^{k_0}$.

We briefly write $\qubit{\phi_x}$ for $U_{\cent x\dollar} \qubit{q_0}$. Notice that $\density{\phi_x}{\phi_x}$ coincides with $\rho_{n+2}$. By the definition of $H^{(x)}_{fin}$, $\qubit{\phi_x}$ is the ground state of $H^{(x)}_{fin}$ because $H^{(x)}_{fin}\qubit{\phi_x} = U_{\cent x\dollar}\Lambda_0 U^{\dagger}_{\cent x\dollar} U_{\cent x\dollar}\qubit{q_0} = U_{\cent x\dollar}\Lambda_0 \qubit{q_0} =0$.
When $x\in L$, we obtain  $\|\Pi_{acc}\qubit{\phi_x}\|_2^2 = \trace(\Pi_{acc}\rho_{n+2}) \geq 1-\varepsilon$.
This implies that there is a quantum state $\qubit{\phi_{acc}}$ in $QS^{(m(x))}_{acc}$ such that $|\measure{\phi_{acc}}{\phi_x}|^2 \geq 1-\varepsilon$. We then obtain $\|\qubit{\phi_{x}} - \qubit{\phi_{acc}}\|_2^2 \leq
1-|\measure{\phi_{acc}}{\phi_x}|^2\leq \varepsilon$, concluding that  $\qubit{\phi_x}$ is $\sqrt{\varepsilon}$-close to $\qubit{\phi_{acc}}$. Let $\hat{\varepsilon}=1-\sqrt{\frac{\varepsilon}{2}}$. Notice that $1/2<\hat{\varepsilon}\leq 1$ follows from $\varepsilon\in[0,1/2)$.
Since $\sqrt{\varepsilon} =\sqrt{2}(1-\hat{\varepsilon})$, the accuracy of $\SSS$ is at least $\hat{\varepsilon}$.
A similar argument handles the case of $x\in\overline{L}$.
This concludes that $\SSS$ recognizes $L$ with accuracy at least $\hat{\varepsilon}$.

Next, we consider nonzero eigenvalues of $H^{(x)}_{fin}$.
For any inner state $q\in Q$, let $\qubit{\psi_q} = U_{\cent x\dollar}\qubit{q}$. It then follows that $U_{\cent x\dollar}\Lambda_0 U^{\dagger}_{\cent x\dollar} = \sum_{q\in Q^{(-)}} (U_{\cent x\dollar}\density{q}{q}U^{\dagger}_{\cent x\dollar}) = \sum_{q\in Q^{(-)}} \density{\psi_q}{\psi_q}$. Since $U_{\cent x\dollar}$ is unitary, all elements  in $\{\qubit{\psi_q}\}_{q\in Q^{(-)}}$ are nonzero eigenstates of $H_{fin}^{(x)}$ and their eigenvalues are exactly $1$. Since the ground energy is $0$, the spectral gap of $H^{(x)}_{fin}$ must be $1$, as requested.
\end{proof}

\subsection{Simulation of Garbage-Tape 1qfa's}

\emph{Regular languages} are in fact one of the most studied languages in formal language theory. In what follows, we target the class $\reg$ of all regular languages. Since regular languages are known to be recognized by 1qfa's with mixed states and superoperators (e.g., \cite{AY15}), it is easy to show that constant-size 1qqaf's generate Hamiltonians of AEQSs for regular languages. Here, we intend to use 1moqqaf's, in particular, \emph{linear-size 1moqqaf's}, whose machines have $O(n)$ inner states, to generate those Hamiltonians.

\begin{theorem}\label{linsize-regular}
$\reg \subseteq \aeqs{1moqqaf,linsize,constgap,0\mbox{-}energy}$.
\end{theorem}

It is easy to see that regular languages can be recognized by garbage-tape 1qfa's whose garbage-tape head always writes a non-blank symbol and moves to the right \cite[Lemma 3.2(2)]{Yam19b}.
Such a garbage tape is referred to as a \emph{rigid garbage tape} \cite{Yam19b}. The proof of Theorem \ref{linsize-regular} hinges at a critical simulation of garbage-tape 1qfa's by appropriate AEQSs under the desired conditions. For the intended proof, we need the following supportive lemma, by which we can derive the theorem directly. We briefly say that an AEQS $\SSS$ \emph{simulates} a machine $M$ if, for every input, the outcome of $\SSS$ matches that of $M$.

\begin{lemma}\label{oneqfa-to-AEQS}
Any 1qfa $M$ equipped with a rigid garbage tape can be exactly simulated by a certain AEQS whose Hamiltonians are generated by appropriately chosen linear-size 1moqqaf's with spectral gap of $1$ and a ground energy of $0$.
\end{lemma}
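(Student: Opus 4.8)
\noindent
\emph{Overall approach.} I would turn the rigid-garbage-tape computation of $M$ into a genuinely \emph{unitary} evolution on an enlarged space and then reuse the Hamiltonian-construction device from the proof of Theorem~\ref{upper-lower-bound}. Write $M=(Q,\Sigma,\{\cent,\dollar\},\Xi,\delta,q_0,Q_{acc},Q_{rej})$ and suppose it recognizes a language $L$ with error probability at most a constant $\varepsilon\in[0,1/2)$. Fix an input $x$ with $n=|x|$ and extended form $\cent x\dollar=x_0x_1\cdots x_{n+1}$. Since the garbage tape is write-once and its head always advances while writing, after scanning the $n+2$ symbols of $\cent x\dollar$ exactly $n+2$ garbage cells are filled, so it is harmless to reserve, once and for all, $n+2$ ancillary ``garbage registers'' $\mathcal{G}_0,\dots,\mathcal{G}_{n+1}$, each a Hilbert space spanned by $\Xi\cup\{\textrm{blank}\}$, plus a ``position counter'' $\mathcal{C}$ spanned by $\{0,1,\dots,n+2\}$. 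The AEQS would then work in $\mathcal{Q}\otimes\mathcal{G}_0\otimes\cdots\otimes\mathcal{G}_{n+1}\otimes\mathcal{C}$ (with $\mathcal{Q}$ spanned by $Q$), which after padding each factor to a power of $2$ has system size $m(x)=O(n)$, the linear-size regime, and its generating 1moqqaf's will have $2^{O(n)}$ inner states.

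\noindent
\emph{Building $H^{(x)}_{fin}$.} The generating 1moqqaf $M_n$ would use as inner-state set the index set of the joint space, the initial mixture $\Lambda^{(n)}_0=I-\density{\xi_0}{\xi_0}$ with $\qubit{\xi_0}=\qubit{q_0}\otimes\qubit{\textrm{blank}}^{\otimes(n+2)}\otimes\qubit{0}$, and $Q^{(n)}_0=\setempty$ so that $\Pi^{(n)}_0=I$. For $\sigma\in\check{\Sigma}$, the unitary $U^{(n)}_{\sigma}$ is the controlled operation that, when the counter holds $i$, applies to $\mathcal{Q}\otimes\mathcal{G}_i$ a fixed unitary $\tilde U_{\sigma}$ determined by $\tilde U_{\sigma}(\qubit{q}\otimes\qubit{\textrm{blank}})=\sum_{p,\xi}\delta(q,\sigma,p,\xi)\,\qubit{p}\otimes\qubit{\xi}$ (an isometry on $\mathcal{Q}\otimes\mathrm{span}\{\qubit{\textrm{blank}}\}$ by the completeness relation $\sum_{\xi}K_{\sigma,\xi}^{\dagger}K_{\sigma,\xi}=I$ of $M$, then extended to a unitary), leaves the remaining garbage registers untouched, and advances the counter $i\mapsto i+1\pmod{n+3}$. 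Each $U^{(n)}_{\sigma}$ is unitary, being a controlled unitary composed with a cyclic permutation, and an induction over the symbols of $\cent x\dollar$ shows $U^{(n)}_{\cent x\dollar}\qubit{\xi_0}$ is exactly the final joint state of $M$ on $x$ (counter reading $n+2$). Putting $H^{(x)}_{fin}=U^{(n)}_{\cent x\dollar}\,\Lambda^{(n)}_0\,(U^{(n)}_{\cent x\dollar})^{\dagger}$, the vector $\qubit{\psi_x}:=U^{(n)}_{\cent x\dollar}\qubit{\xi_0}$ has $H^{(x)}_{fin}\qubit{\psi_x}=0$ because $\Lambda^{(n)}_0\qubit{\xi_0}=0$; since $\Lambda^{(n)}_0$ has a one-dimensional kernel and all its other eigenvalues equal $1$, $\qubit{\psi_x}$ is the \emph{unique} ground state, the ground energy is $0$, and the spectral gap is exactly $1$. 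I would take $H^{(x)}_{ini}=W^{\otimes m(x)}\Lambda^{(n)}_0(W^{\otimes m(x)})^{\dagger}$ (also generated by a linear-size 1moqqaf), which has a unique ground state, and then verify the easy non-commutativity condition $[H^{(x)}_{ini},H^{(x)}_{fin}]\neq O$.

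\noindent
\emph{Criteria, accuracy, and conclusion.} Next I would set $S^{(m(x))}_{acc}$ to be the joint basis vectors whose $\mathcal{Q}$-component lies in $Q_{acc}$ and whose counter component is $n+2$, with $S^{(m(x))}_{rej}$ defined analogously from $Q_{rej}$, and let $\Pi_{acc}$ project onto $QS^{(m(x))}_{acc}$. Because $\qubit{\psi_x}$ is exactly $M$'s final joint state, $\|\Pi_{acc}\qubit{\psi_x}\|_2^2$ equals the acceptance probability of $M$ on $x$; when $x\in L$ this is $\geq 1-\varepsilon$, which --- by the same fidelity argument as in the proof of Theorem~\ref{upper-lower-bound} --- makes $\qubit{\psi_x}$ $\sqrt{\varepsilon}$-close to a normalized accepting state, so with accuracy $\hat\varepsilon=1-\sqrt{\varepsilon/2}\in(1/2,1]$ (using $\sqrt{\varepsilon}=\sqrt{2}(1-\hat\varepsilon)$) the AEQS accepts $x$; the case $x\notin L$ is symmetric. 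Thus the AEQS's outcome matches $M$'s on every input, so it exactly simulates $M$ with linear-size generating 1moqqaf's, spectral gap $1$, and ground energy $0$; Theorem~\ref{linsize-regular} then follows by applying this to the rigid-garbage-tape 1qfa's recognizing the regular languages.

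\noindent
\emph{Expected main difficulty.} The hard part will be forcing a \emph{measure-once}, single-Kraus, one-way machine to emit, one input symbol at a time, a single unitary that still faithfully mimics a write-once garbage tape: one has to fix the maximal garbage length $n+2$ in advance, model an unwritten cell by a dedicated blank state, dilate $M$'s contractive one-step transition to a full unitary acting \emph{locally} on just the garbage register the pointer selects, and drive the pointer by a \emph{cyclic} counter so that its advance is reversible. A secondary point to get right is the size bookkeeping --- the blow-up is only in the \emph{number} of garbage cells, each over the fixed alphabet $\Xi$, so $\log$ of the inner-state count stays $O(n)$ and the whole construction remains within linear system size.
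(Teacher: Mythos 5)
Your proposal is correct and follows essentially the same route as the paper's proof: dilate the rigid-garbage-tape 1qfa into symbol-wise unitaries acting on a linear-size joint (state, garbage) space, set $H^{(x)}_{fin}=U_{\cent x\dollar}\Lambda^{(n)}_0(U_{\cent x\dollar})^{\dagger}$ with $\Lambda^{(n)}_0=I-\density{\xi_0}{\xi_0}$ and $H^{(x)}_{ini}$ by Hadamard conjugation, so that the unique ground state has energy $0$, the gap is $1$, and the accuracy conversion $\hat{\varepsilon}=1-\sqrt{\varepsilon/2}$ goes through exactly as in Theorem \ref{upper-lower-bound}. The only deviation is cosmetic: you track the garbage with fixed per-cell registers plus a cyclic position counter, whereas the paper encodes the garbage content directly as a blank-padded string in $G_n$ (making the position implicit), and both encodings stay within linear system size.
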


\begin{proof}
Let $L$ be any language over alphabet $\Sigma$ and consider a 1qfa $M$  with a garbage alphabet $\Xi$ that recognizes $L$ with error probability at most an appropriate constant $\varepsilon\in[0,1/2)$. For our convenience, similar to 2qfa's, $M$ is assumed to be of the form $(Q,\Sigma,\{\cent,\dollar\}, \Xi, \delta, q_0,Q_{acc},Q_{rej})$, using a (quantum) transition function $\delta$ instead of a family $\{A_{\sigma}\}_{\sigma\in\check{\Sigma}}$ of quantum operations.
Let $x =x_1x_2\cdots x_n$ be an arbitrary input of length $n$. By setting $x_0=\cent$ and $x_{n+1}=\dollar$, we succinctly write
$\tilde{x}$ for its extended input $x_0x_1\cdots x_nx_{n+1}$.

For convenience, let $B$ denote the blank symbol of the garbage tape and let $\Xi_B=\Xi\cup\{B\}$. Since an input-tape head of $M$ moves in one direction until $\dollar$, it suffices to consider the first $n+1$ cells of the garbage tape. Initially, the garbage tape consists of $n+1$ blank cells and, as the input-tape head reads all input symbols one by one, $M$ modifies one new blank cell of the garbage tape by writing a suitable non-blank  symbol from left to right.
The content of the garbage tape at time $t$ is thus of the form $sB^{n+1-t}$ with $|s|=t$.
We define $G_n$ to be $\{w \mid \exists s\in\Xi^* [ |s|\leq n+1 \wedge  w= sB^{n+1-|s|}] \}$. Notice that $|G_n|=O(n2^n)$.

For each symbol $\sigma\in\check{\Sigma}$, we further introduce a unitary operator $U_{\sigma}$ by setting
$U_{\sigma}\qubit{q,s} = \sum_{(p,\xi)\in Q\times \Xi_B} \delta(q,\sigma,p,\xi) \qubit{p, s\xi}$.
A computation of $M$ on the input $x$ results in a final quantum state $U_{\cent x\dollar}\qubit{q_0,B^{n+1}}$.
In the end of the computation of $M$, we apply two projections of the form $\Pi_{acc}\otimes I$ and $\Pi_{rej}\otimes I$, where $I$ acts on the Hilbert space spanned by the elements in $G_n$.
We write $\qubit{r_q}$ to denote $\qubit{q}\qubit{B^{n+1}}$ for each inner state $q\in Q$.
If $x\in L$ (resp., $x\notin L$), then $M$ satisfies $\trace ((\Pi_{acc}\otimes I) U_{\cent x\dollar}\density{r_{q_0}}{r_{q_0}} (U_{\cent x\dollar})^{\dagger}) \geq 1-\varepsilon$ (resp., $\trace( (\Pi_{rej}\otimes I) U_{\cent x\dollar}\density{r_{q_0}}{r_{q_0}} (U_{\cent x\dollar})^{\dagger}) \geq 1-\varepsilon$).

The desired selector $\mu$ is defined by $\mu(x)=|x|$ for any $x\in\Sigma^*$.
We define the desired AEQS $\SSS$ of the form $(m,\Sigma,\varepsilon,\{H^{(x)}_{ini}\}_{x\in\Sigma^*}, \{H^{(x)}_{fin}\}_{x\in\Sigma^*}, \{S^{(n)}_{acc}\}_{n\in\nat}, \{S^{(n)}_{rej}\}_{n\in\nat})$ as follows.
For any $n\in\nat$ and any input $x$, let $IND_n = Q\times G_n$ and let $m(x) = {\ilog|IND_{\mu(x)}|}$, which is at most ${\ilog|Q|} + {\ilog|G_{|x|}|} = O(|x|)$. Let $n=\mu(x)$. 
We further define $S^{(m(x))}_{acc} = Q_{acc}\times G_{n}$ and $S^{(m(x))}_{rej} = Q_{rej}\times G_{n}$.
We set $H^{(x)}_{ini}$ to be $\sum_{u\in IND_{n}^{(-)}} \density{\hat{u}}{\hat{u}}$, where $IND^{(-)}_n= IND_n -\{(q_0,B^{n+1})\}$.
For any symbol $\sigma\in\check{\Sigma}$, the quantum operation $A^{(n)}_{\sigma}$  is defined as $A^{(n)}_{\sigma}(H) = U_{\sigma}H (U_{\sigma})^{\dagger}$ for any linear operator $H$.
Moreover, we set $\Lambda^{(n)}_0 = \sum_{u\in IND^{(-)}_n} \density{u}{u}$.
It then follows that $A^{(n)}_{\cent{x}\dollar}(\Lambda^{(n)}_0) = U_{\cent x\dollar}\Lambda^{(n)}_0 (U_{\cent x \dollar})^{\dagger} = \sum_{(q,w)\in IND_n^{(-)}} \density{\psi_{q,w}}{\psi_{q,w}}$, where  $\qubit{\psi_{q,w}}$ is shorthand for $U_{\cent x\dollar}\qubit{q}\qubit{w}$.
In the end, $H^{(x)}_{fin}$ is set to be $A^{(n)}_{\cent x\dollar}( \Lambda^{(n)}_0)$.

The unitarity of $U_{\sigma}$ concludes that each vector  $\qubit{\psi_{q,w}}$ is an eigenstate of $H^{(x)}_{fin}$, whose eigenvalue is $1$. In particular, since $A^{(n)}_{\cent{x}\dollar}(\Lambda^{(n)}_0)\qubit{\psi_{q_0,B^{n+1}}}=0$, $\qubit{\psi_{q_0,B^{n+1}}}$ must be a unique ground state of $H^{(x)}_{fin}$, and thus the spectral gap is $1$.

Let us show that $\SSS$ correctly recognizes $L$.
Assume that $x\in L$. Note that $\qubit{\psi_{q_0,B^{n+1}}}$ equals $U_{\cent{x}\dollar}\qubit{r_{q_0}}$ for the value $n=\mu(x)$.
Since $\trace ((\Pi_{acc}\otimes I) U_{\cent x\dollar}\density{r_{q_0}}{r_{q_0}} (U_{\cent x\dollar})^{\dagger}) \geq 1-\varepsilon$, we obtain $\|(\Pi_{acc}\otimes I) \qubit{\psi_{q_0,B^{n+1}}}\|_2^2\geq 1-\varepsilon$.
Since $\Pi_{acc}$ equals $\sum_{(q,w)\in Q_{acc}\times G_n} \density{q,w}{q,w}$, there exists a quantum state $\qubit{\xi_{acc}}$ for which  $|\measure{\xi_{acc}}{\psi_{q_0,B^{n+1}}}|^2 \geq 1-\varepsilon$.
The ground state $\qubit{\psi_{q_0,B^{n+1}}}$ of $H^{(x)}_{fin}$ satisfies $\|\qubit{\psi_{q_0,B^{n+1}}} - \qubit{\xi_{acc}}\|_2^2 \leq 1-|\measure{\xi_{acc}}{\psi_{q_0,B^{n+1}}}|^2 \leq \varepsilon$.
From this fact, if we take $\hat{\varepsilon} = 1-\sqrt{\frac{\varepsilon}{2}}$, then $\qubit{\psi_{q_0,B^{n+1}}}$ is $\sqrt{2}(1-\hat{\varepsilon})$-close to $\qubit{\xi_{acc}}$. Notice that $\frac{1}{2}<\hat{\varepsilon}\leq1$. The case of $x\notin L$ is similar.
Therefore, $\SSS$ recognizes $L$ with accuracy at least $\hat{\varepsilon}$.
\end{proof}

Finally, we return to the proof of Theorem \ref{linsize-regular}.

\begin{proofof}{Theorem \ref{linsize-regular}}
Our goal is to show that $\reg\subseteq \aeqs{1moqqaf,linsize,constgap,0\mbox{-}energy}$.
Let $L$ be any regular language and take a 1qfa $M$ with a rigid garbage tape that recognizes $L$ with bounded-error probability. By Lemma \ref{oneqfa-to-AEQS}, there exists an AEQS $\SSS$ that exactly simulates $M$. Therefore, $\SSS$ recognizes $L$ with high accuracy. Lemma \ref{oneqfa-to-AEQS} ensures that $\SSS$'s Hamiltonians are generated by 1moqqaf's with spectral gap $1$ and a ground energy of $0$. This implies that $L$ belongs to $\aeqs{1moqqaf,linsize,constgap,0\mbox{-}energy}$.
\end{proofof}

\subsection{Simulation of Unambiguous Pushdown Automata}

Let us consider the family $\ucfl$ of all \emph{unambiguous context-free languages}. These languages are recognized by appropriate 1npda's, each of which has at most one accepting computation path on each input and runs in linear time. Such 1npa's are called \emph{one-way unambiguous pushdown automata} (or 1upda's). We wish to show that all 1upda's can be simulated on AEQSs under appropriate conditions, shown in Theorem \ref{CFL-vs-AEQS}.

\begin{theorem}\label{CFL-vs-AEQS}
$\ucfl \subseteq \aeqs{ltime\mbox{-}1.5qqaf,linsize,constgap}$.
\end{theorem}

To make the proof of the theorem readable, we assume that any 1upda is already \emph{in an ideal shape}\footnote{This \emph{ideal-shape property} also holds for various types of pushdown automata \cite{Yam19a}.} \cite{Yam19a}, in which  the pop operations  always take place by first reading an input symbol $\sigma$ and then making  a series (one or more) of the pop operations without reading any further input symbol. More precisely, a 1upda in an ideal shape with an input alphabet $\Sigma$ and a stack alphabet $\Gamma$ takes only the following five actions.
Let $\Gamma^{(-)}$ denote $\Gamma$ except for the bottom marker $\bot$.
(1) Scanning an input symbol $\sigma\in\Sigma$, preserve the topmost stack symbol (called a \emph{stationary operation}).  (2) Scanning  $\sigma\in\Sigma$,  push a new symbol $u$ ($\in\Gamma^{(-)}$) without changing any other symbol in the stack. (3) Scanning  $\sigma\in\Sigma$, pop the topmost stack symbol. (4) Without scanning an input symbol (i.e., $\lambda$-move), pop the topmost stack symbol. (5) The stack operation (4) comes only after either (3) or (4). The content of the stack is expressed in order as $a_1a_2\cdots a_k$ in such a way that $a_1$ is the bottom marker $\bot$ and $a_k$ is the topmost stack symbol. The bottom marker $\bot$ is assumed to be neither rewritable nor popped.

\begin{proofof}{Theorem \ref{CFL-vs-AEQS}}
Given any unambiguous context-free language $L$ over alphabet $\Sigma$, we pick a 1upda $N$ in an ideal shape that correctly recognizes $L$.
Additionally, $N$ is assumed to have the form $(Q,\Sigma,\{\cent,\dollar\}, \Gamma, \delta,q_0,\bot,Q_{acc},Q_{rej})$, where $\Gamma$ is a stack alphabet with the bottom marker $\bot$.
The transition function $\delta$ maps $Q\times \check{\Sigma}_{\lambda}\times\Gamma$ to
$\PP(Q\times \Gamma^*)$, where $\check{\Sigma}_{\lambda} = \check{\Sigma}\cup\{\lambda\}$.
Let us simulate this 1upda $N$ on an appropriately chosen AEQS $\SSS=(m,\Sigma,\varepsilon,  \{H^{(x)}_{ini}\}_{x\in\Sigma^*}, \{H^{(x)}_{fin}\}_{x\in\Sigma^*}, \{S^{(n)}_{acc}\}_{n\in\nat}, \{S^{(n)}_{rej}\}_{n\in\nat} )$, where Hamiltonians $H^{(x)}_{ini}$ and $H^{(x)}_{fin}$ are generated by certain linear-time 1.5qqaf's.
For simplicity, we further assume that, at every step, $N$ makes \emph{exactly} $k$ nondeterministic choices for a certain constant $k\in\nat^{+}$ independent of inputs.
If $\delta$ satisfies  $\delta(q,\sigma,a)=\{(p_1,z_1),(p_2,z_2),\ldots,(p_k,z_k)\}$ for a certain tuple  $(q,\sigma,a)\in Q\times\check{\Sigma}_{\lambda}\times\Gamma$, then we define a ``determinization'' of $\delta$ by setting  $\delta_d(i,q,\sigma,a)=(p_i,z_i)$ for every choice $i\in[k]$.
Although there may be infinite computation paths of $N$ on certain inputs $x$, whenever $x\in L$, we can find an accepting computation path of $N$ whose length is $O(|x|)$, more precisely, at most $c|x|+c$ for a suitable constant $c\in\nat^{+}$ independent of $x$.
We succinctly write $\ell_{n}$ for $cn+c$.

We can force $N$ to stop its computation after $\ell_{|x|}$ steps without changing the outcome of $N$. Thus, length-$\ell_{n}$ series of nondeterministic choices made by $N$ essentially contribute to the outcome of $N$. We call such series \emph{decision series}.
By the definition of 1upda's, since $N$ may make a certain number of consecutive $\lambda$ transitions, each nondeterministic choice also needs to specify  $N$'s current selection between a $\lambda$-move and a non-$\lambda$-move.
Because of the presence of $\lambda$-transitions, nonetheless, we not only trace the tape head location of $N$ but also implement an \emph{internal clock} to keep track of time to describe the progress of $N$'s computation. 
A basic idea of our simulation is, therefore, to keep track of the current tape head location, the current  content of the stack, and a clock time as well as the current content of an extra write-once tape used as a garbage tape. This last tape will be used to make our simulation reversible.

In what follows, a sextuple $(s,q,i,t,r,g)$ expresses the current circumstance in which $N$ is in inner state $q$ at time $t$, scanning the $i$th tape cell together with stack content $r$ and garbage-tape content $g$, provided that $s=s_1s_2\cdots s_{\ell_n}$ is a decision series, where each $s_i$ takes a value from  $\{0,1\}\times [k]$ for any index $i\in[\ell_n]$.
From this circumstance, we determine the next move in the following fashion.
If $s_t$ has the form $(0,j)$ for a certain number $j\in[k]$, then we make a $\lambda$-move by applying $\delta_d(j,q,\lambda,a)$, where $a$ is the topmost symbol of the stack content $r$; on the contrary, if $s_t$ is of the form $(1,j)$, then we apply $\delta_d(j,q,x_{(i)},a)$ instead.

We use a track notation $\track{q}{r}$ (cf. \cite{TYL10}) and prepare the set  $\Xi = \{\track{q}{r}\mid q\in Q,r\in \Gamma\}$.
The selector $\mu$ is set to be $\mu(x)=|x|$ for all $x\in\Sigma^*$. Additionally, given a parameter $n$, we define
$IND_n$ to be $((\{0,1\}\times[k])^{\ell_{n}}\cup\{0^{\ell_n}\}) \times ( Q \times [0,n+1]_{\integer}\times [0,\ell_{n}]_{\integer}\times (\Gamma^{(-)})^{\leq \ell_{n}}\bot \times \Xi^{\leq \ell_{n}} )^2$.
Finally, $m(x)$ is set to be ${\ilog|IND_{\mu(x)}|}$, which is clearly bounded by $O(|x|)$. For later use, we abbreviate $(q_0,0,0,\bot,\lambda)$ as $\xi_0$.

Let us define the desired AEQS $\SSS$ as follows.
The initial Hamiltonian $H^{(x)}_{ini}$ is simply set to be $\sum_{u\in IND_{\mu(x)}^{(-)}}\density{\hat{u}}{\hat{u}}$, where $IND_n^{(-)} = IND_n -\{(0^{\ell_n},\xi_0,\xi_0)\}$. To construct $H^{(x)}_{fin}$, however, we wish to describe an appropriate linear-time 1.5qqaf $\MM= \{M_n\}_{n\in\nat}$.
Let us fix $x\in\Sigma^*$ arbitrarily and set $n=\mu(x)$.
Hereafter, we are focused on $M_n$, which is assumed to have the form
$(Q^{(n)}, \Sigma,\{\cent,\dollar\}, \{1,2\}, \KK_n,  \KK_{n,\cent},  \Lambda^{(n)}_0, Q^{(n)}_0)$, where  $Q^{(n)} = IND_n$, $\KK_n= \{K^{(n)}_e\}_{e\in\{1,2\}}$, and  $\KK_{n,\cent}= \{K^{(n)}_{\cent,e}\}_{e\in\{1,2\}}$.

Firstly, we intend to define a restriction of $K^{(n)}_e$ to $x$, denoted by $K^{(n,x)}_e$, which satisfies $K^{(n)}_e = \sum_{x\in\Sigma^*} (\density{x}{x}\otimes K^{(n,x)}_e)$.
Let $s=s_1s_2\cdots s_{\ell_n}$ denote a decision series in $(\{0,1\}\times[k])^{\ell_n}$ that uniquely specifies a corresponding computation path of $M_n$ on $x$.
The machine $M_n$ works with $11$ registers that store a tuple of the form $(s,q,i,t,r,g,q',i',t',r',g')$.
If the last five registers do not contain $\xi_0 = (q_0,0,0,\bot,\lambda)$, then we simply perform the identity operator except for an increment of both the current clock time and the current tape head location.
To fixate the last five registers to $\xi_0$, we use the following special Kraus operators $\{K^{(n)}_{\cent,1},K^{(n)}_{\cent,2}\}$ at the first step.
In scanning $\cent$, by letting $u=(q,i,t,r,g)$ and $w=(q',i',t',r',g')$, similarly to Examples \ref{ex:Pal}--\ref{example:MULTIDUP}, we make the following transitions: $K^{(n)}_{\cent,1} \qubit{s,u,w} = \measure{\xi_0}{u}\cdot \qubit{s,u,w}$ and $K^{(n)}_{\cent,2} \qubit{s,u,w} = \sum_{z:z\neq \xi_0} \measure{z}{u}\cdot \qubit{s,w,u}$.

In what follows, we assume that the last five registers of $M_n$ contain only $\qubit{\xi_0}$. Let us consider the case where $N$ pushes a stack symbol $a$. Assume that, at time $t$, and the nondeterministic choice $s_t$ is $(1,h)$ and $N$ changes its current surface configuration $(q,i,r)$ to another surface configuration $(p,j,ra)$ in a single step by applying a transition $\delta_d(h,q,x_{(i)},b_r)=(p,b_ra)$, where $x_{(0)}=\cent$, $x_{(|x|+1)}=\dollar$, and $b_r$ is the topmost stack symbol of $r$.
In this case, if $i\leq n$, then we apply   $K_1^{(n,x)}\qubit{s,q,i,t,r,g}\qubit{\xi_0} = \qubit{s,p,j,t+1,ra,g\track{q}{b_r}}\qubit{\xi_0}$; moreover,
if $i=n+1$, then we set $K_1^{(n,x)}\qubit{s,q,n+1,t,r,g}\qubit{\xi_0} = \qubit{s,p,0,t+1,ra,g\track{q}{b_r}}\qubit{\xi_0}$ since $M_n$'s input tape is circular. If $s_t=(0,h)$, then $\delta_d$ takes a similar transition except for $\lambda$ replacing $x_{(i)}$.
In the case where $s_t= (1,h)$ and $N$ pops a stack symbol, we assume that $N$ changes $(q,i,r)$ to $(p,j,r')$ with $r=r'b_r$ by making a transition  $\delta_d(h,q,x_{(i)},b_r) = (p,\lambda)$.
We then apply $K_1^{(n,x)}\qubit{s,q,i,t,r,g}\qubit{\xi_0} = \qubit{s,p,j,t+1,r',g\track{q}{b_r}}\qubit{\xi_0}$. The case of  $s_t=(0,h)$ can be handled similarly.
Next, let us consider the missing case of $s=0^{\ell_n}$.
Starting with the quantum state $\qubit{0^{\ell_n},\xi_0,\xi_0}$,
we simply make the transition $K^{(n,x)}_1 \qubit{0^{\ell_n},q_0,0,t,\bot,\lambda}\qubit{\xi_0} = \qubit{0^{\ell_n},q_0,0,t+1,\bot,\lambda}\qubit{\xi_0}$.

The final Hamiltonian $H^{(x)}_{fin}$ is then defined in the following way. Let $\Lambda^{(n)}_0 = I - \frac{1}{2} \density{0^{\ell_{n}},\xi_0,\xi_0}{0^{\ell_{n}},\xi_0,\xi_0}$, let $Q^{(n)}_0 = \{(s,q,i,\ell_n,r,g,\xi_0)\in IND_n \mid s\neq 0^{\ell_n}, q\in Q_{acc} \}$,
and let $\Pi^{(n)}_0$ be the projective measurement induced from $Q^{(n)}_0$.
For any linear operator $H$, we set $A^{(n,x)}(H)$ to be $\sum_{e\in\{1,2\}} K^{(n,x)}_e H (K^{(n,x)}_e)^{\dagger}$.
The desired Hamiltonian $H^{(x)}_{fin}$ is then defined as $\Pi^{(n)}_0 (A^{(n,x)})^{\ell_{|x|}} (\tilde{\Lambda}^{(n)}_0) \Pi^{(n)}_0$, provided that  $\tilde{\Lambda}^{(n)}_0 = \sum_{i\in\{1,2\}} K^{(n)}_{\cent,i} \Lambda^{(n)}_0 (K^{(n)}_{\cent,i})^{\dagger}$.
Moreover, let $S^{(m(x))}_{acc} = Q^{(n)}_0$ and $S^{(m(x))}_{rej} = \{(0^{\ell_n},q_0,0,\ell_n,\bot,\lambda,\xi_0)\}$.

To complete the proof, we need to prove that $\SSS$ recognizes $L$ correctly.  For any string $x$ in $L$, there exists an accepting computation path  of $N$ on $x$, and this derives the existence of an accepting decision series, say, $s$ in $(\{0,1\}\times[k])^{\ell_n}$, where $n=\mu(x)$. 
We take the quantum state $\qubit{\phi_x}= (K_1^{(n,x)})^{\ell_{|x|}}\qubit{s,q_0,0,0,\bot,\lambda} \qubit{\xi_0}$.
Since  $\Pi^{(n)}_0\qubit{\phi_x} =0$,
it instantly follows that $H^{(x)}_{fin}\qubit{\phi_x} = \Pi^{(n)}_0 (A^{(n,x)})^{\ell_n}(\tilde{\Lambda}^{(n)}_0)\Pi^{(n)}_0 \qubit{\phi_x} = 0$.
This implies that $\qubit{\phi_x}$ is the ground state, whose ground energy is $0$. Since there is exactly one accepting computation path of $N$ on $x$, the ground state must be unique. 
Moreover, $\qubit{\phi_x}$ falls in $QS^{(m(x))}_{acc}$.
On the contrary, we assume $x\notin L$ and take $\qubit{\psi_x} = (K_1^{(n,x)})^{\ell_n}\qubit{0^{\ell_{n}},\xi_0,\xi_0}$, which belongs to  $QS^{(m(x))}_{rej}$.
It then follows that $\Pi^{(n)}_0 (A^{(n,x)})^{\ell_n}(\tilde{\Lambda}^{(n,x)}_0) \Pi^{(n,x)}_0 \qubit{\psi_x} = \Pi^{(n)}_0 (K_1^{(n,x)})^{\ell_{n}}\tilde{\Lambda}^{(n)}_0 ((K_1^{(n,x)})^{\ell_n})^{\dagger} \qubit{\psi_x} = \frac{1}{2}\qubit{\psi_x}$
since $\Pi^{(n)}_0 \qubit{\psi_x} = \qubit{\psi_x}$.
We thus conclude that $H^{(x)}_{fin}\qubit{\psi_x} = \frac{1}{2}\qubit{\psi_x}$. The quantum state $\qubit{\psi_x}$ is therefore the  ground state with a ground energy of $\frac{1}{2}$.
\end{proofof}

\subsection{Simulation of Polynomial-Size 2qfa's}

A \emph{quantum Turing machine} (or a QTM)  \emph{with a (flexible) garbage tape} is similar to a (flexible) garbage-tape 2qfa but it can rewrite any tape cell using symbols from a \emph{tape alphabet} (which may be different from an input alphabet).
In particular, we are focused on logarithmic-space QTMs equipped with garbage tapes running in expected\footnote{This expectation is taken over all the lengths of the computation paths of a QTM on each fixed input.} polynomial time.
The use of a garbage tape helps us postpone any intermediate projective measurement until the end of a computation \cite{Yam19b}, and thus all transitions can be described unitarily.
Those machines form the complexity class  $\mathrm{ptime\mbox{-}BQL}$, composed of all languages recognized with bounded-error
probability by garbage-tape QTMs using $O(\log|x|)$ work-tape cells in expected $|x|^{O(1)}$ steps \cite{Yam19b,Yam19c}.

\begin{theorem}\label{BQL-vs-2qqaf}
$\mathrm{ptime\mbox{-}BQL} \subseteq \aeqs{ptime\mbox{-}2qqaf,polysize,0\mbox{-}energy}$.
\end{theorem}

By \cite[Lemma 5.2]{Yam19b}, a bounded-error QTM $M$ equipped with a garbage tape running in expected polynomial time using logarithmic space (even if the QTM has advice) can be simulated by an appropriate family $\{N_{n}\}_{n\in\nat}$ of garbage-tape 2qfa's with $n^{O(1)}$ inner states and expected polynomial runtime in such a way that, for any input $x$, $M$ accepts (resp., rejects) $x$ with bounded-error probability iff $N_{|x|}$ accepts (resp., rejects) $x$ with bounded-error probability.
In the following proof of Theorem \ref{BQL-vs-2qqaf}, we will utilize this 2qfa-characterization of resource-bounded QTMs with garbage tapes.

\begin{proofof}{Theorem \ref{BQL-vs-2qqaf}}
Given an arbitrary language $L$ in $\mathrm{ptime}\mbox{-}\bql$ over alphabet $\Sigma$, as noted above, we take an appropriate family $\NN= \{N_{n}\}_{n\in\nat}$ of garbage-tape 2qfa's that recognizes $L$ with bounded-error probability using $n^{O(1)}$ inner states.
Let us assume that $N_{n}$ has the form $(Q_{n},\Sigma,\{\cent,\dollar\}, \Xi_{n},\delta_{n}, q_{0,n}, Q_{acc,n},Q_{rej,n})$, where $\Xi_{n}$ is a garbage alphabet.
For each input $x$ of length $n$, the (quantum) transition function $\delta_{n}$ induces a time-evolution operator $U^{(x)}_{n}$ defined by   $U^{(x)}_{n}\qubit{q,i,z} =  \sum_{p,d,\xi}\delta_{n}(q,x_{(i)},p,d,\xi) \qubit{p,i+d\:(\mathrm{mod}\:n+2),z\xi}$ for any $(q,i,z)\in Q_{n}\times [0,n+1]_{\integer}\times (\Xi_{n})^*$, where $(p,d,\xi)$ in the summation ranges over $Q_{n}\times \{-1,0,+1\} \times\Xi_{n}$.

Take an appropriate polynomial $p$ such that, on every input $x$,  $N_{|x|}$ halts within expected $p(|x|)$ time. By the property of expectation, we can take two special constants $c\geq1$ and $\varepsilon\in(0,1/2]$ such that, even if we force $N_{|x|}$ to stop its operation after $cp(|x|)$ steps, the probability of acceptance/rejection of $N_{|x|}$ remains at least $\frac{1}{2}+\varepsilon$. For convenience, we set $\bar{p}(n)=cp(n)$ for any $n\in\nat$.
Similarly to the proof of Lemma \ref{oneqfa-to-AEQS} with a slight modification of
garbage-tape contents, we introduce the set $G_n$ of such contents as $\{w\mid \exists s\in\Xi_n^*[|s|\leq \bar{p}(n)\wedge w=sB^{\bar{p}(n)-|s|}]\}$, where $B$ is a special symbol indicating the blank cell.
It then follows that (i) $\|\Pi_{acc,n} (U^{(x)}_{|x|})^{\bar{p}(|x|)}\qubit{q_{0,n},0,B^{\bar{p}(n)}}\|_2^2 \geq \frac{1}{2}+\varepsilon$  for any $x\in L$ and
(ii) $\|\Pi_{rej,n} (U^{(x)}_{|x|})^{\bar{p}(|x|)}\qubit{q_{0,n},0,B^{\bar{p}(n)}}\|_2^2 \geq \frac{1}{2}+\varepsilon$ for any $x\in \overline{L}$, where $\Pi_{acc,n}$ and $\Pi_{rej,n}$ are projective measurements associated with $Q_{acc,n}$ and $Q_{rej,n}$, respectively.
It thus suffices to simulate $\NN$ on an appropriate AEQS in order to recognize $L$.

From $\NN$, we intend to define the target AEQS $\SSS$ of the form $(m,\Sigma,\varepsilon,  \{H^{(x)}_{ini}\}_{x\in\Sigma^*}, \{H^{(x)}_{fin}\}_{x\in\Sigma^*}, \{S^{(n)}_{acc}\}_{n\in\nat}, \{S^{(n)}_{rej}\}_{n\in\nat} )$.
For this purpose, we need to define a polynomial-time 2qqaf $\MM=\{M_n\}_{n\in\nat}$ that produces, in particular, the final Hamiltonians of $\SSS$.
Firstly, we define $IND_n$ to be $Q_{n}\times [0,n+1]_{\integer}\times G_n$ and we set $\Lambda^{(n)}_0$ to be $\sum_{u\in IND_n^{(-)}} \density{u}{u}$, where  $IND_n^{(-)} = IND_n - \{(q_{0,n},0,B^{\bar{p}(n)})\}$.
As for the desired selector $\mu$, we define $\mu(x)=|x|$ for any $x\in\Sigma^*$.
Since $H^{(x)}_{fin}$ has dimension $|IND_{\mu(x)}|$, the size $m(x)$ of $\SSS$ is ${\ilog|IND_{\mu(x)}|}$, which is clearly upper-bounded by a certain polynomial in $|x|$.
By setting $A^{(n,x)}(H)$ to be $U^{(x)}_{n} H (U^{(x)}_{n})^{\dagger}$,  the desired $H^{(x)}_{fin}$ is defined to be $(A^{(n,x)})^{\bar{p}(n)} (\Lambda^{(n)}_0)$.
The acceptance/rejection criteria pair $(S^{(m(x))}_{acc},S^{(m(x))}_{rej})$ is defined by $S^{(m(x))}_{acc} = \{(q,i,z)\in IND_{n} \mid q\in Q_{acc,n}\}$ and $S^{(m(x))}_{rej}
=\{(q,i,z)\in IND_{n} \mid q\in Q_{rej,n}\}$.

Finally, we wish to verify that $\SSS$ correctly solves $L$. Let $x$ be any string in $\Sigma^*$ and set $n=\mu(x)$.
We first assume that $x$ is in $L$. Let us consider the quantum state $\qubit{\phi_x} = (U^{(x)}_{n})^{\bar{p}(n)} \qubit{\xi_n}$, where $\qubit{\xi_n}= \qubit{q_{0,n},0,B^{\bar{p}(n)}}$. It then follows that $H^{(x)}_{fin}\qubit{\phi_x} = (A^{(n,x)})^{\bar{p}(n)}(\Lambda^{(n)}_0)\qubit{\phi_x} = (U^{(x)}_{n})^{\bar{p}(n)} \Lambda^{(n)}_0 ((U^{(x)}_{n})^{\dagger})^{\bar{p}(n)} (U^{(x)}_{n})^{\bar{p}(n)} \qubit{\xi_n} = U^{\bar{p}(|x|)}_{n,|x|}\Lambda^{(n)}_0 \qubit{\xi_n} =0$ because of  $\Lambda^{(n)}_0\qubit{\xi_n}=0$.
This implies that $\qubit{\phi_x}$ is the ground state and its corresponding ground energy is $0$.
Note that the value $\min_{\qubit{\psi}}\|\qubit{\phi_x}-\qubit{\psi}\|_2^2$ over all vectors $\qubit{\psi}$ in $QS^{(m(x))}_{acc}$ is upper-bounded by $1-\max_{\qubit{\psi}}|\measure{\psi}{\phi_x}|^2$.
Since $\max_{\qubit{\phi}} |\measure{\psi}{\phi_x}|^2\geq |\bra{\phi_x}\Pi_{acc,n}\ket{\phi_x}| = \|\Pi_{acc,n}\qubit{\phi_x}\|_2^2\geq \frac{1}{2}+\varepsilon$, it follows that $\min_{\qubit{\psi}}\|\qubit{\phi_x}-\qubit{\psi}\|_2^2 \leq 1- (\frac{1}{2}+\varepsilon) = \frac{1}{2}-\varepsilon$.
We conveniently set $\hat{\varepsilon}=1-\frac{\sqrt{1-2\varepsilon}}{2}$. Since $0<\varepsilon\leq \frac{1}{2}$, we obtain  $\frac{1}{2}<\hat{\varepsilon}\leq 1$. Because of $\min_{\qubit{\psi}}\|\qubit{\phi_x}-\qubit{\psi}\|_2\leq \sqrt{2}(1-\hat{\varepsilon})$, we conclude that the accuracy rate is at most $\hat{\varepsilon}$.

The case of $x\notin L$ is similarly treated. Therefore, $\SSS$ correctly solves $L$, as requested.
\end{proofof}

\section{Computational Complexity of AEQS}\label{sec:complexity-AEQS}

We have demonstrated the power of conditional AEQSs for various languages in Sections \ref{sec:behaviors}--\ref{sec:simulation}.
Nevertheless, we still wonder how powerful the AEQSs can be.
To answer this question, we want to present a simple upper bound on the computational complexity of conditional AEQSs.
In Lemma \ref{time-evolution}, we have already shown how to approximate each AEQS  and this approximation process can be carried out by certain forms of  resource-bounded QTMs with \emph{advice}\footnote{In general,  \emph{advice} is an external source that can provide necessary information to enhance the computational power of underlying machines \cite{NY04,Yam14b,Yam19b}.} of polynomial size.
Such advised QTMs naturally induce a nonuniform analogue of ptime-BQL, denoted  $\mathrm{ptime}\mbox{-}\bql/\poly$ \cite{Yam19b}, which is composed of  all languages recognized with bounded-error probability by garbage-tape QTMs with polynomial-size advice strings (one per every input size) using logarithmic space in expected polynomial runtime.

Unlike the previous sections, we need to restrict the behaviors of selectors. A selector $\mu:\Sigma^*\to\nat$ for an alphabet $\Sigma$ is said to be \emph{logarithmic-space computable} if there exists a deterministic Turing machine (DTM) equipped with a read-only input tape, a rewritable work tape, and a write-once output tape such that, for each input $x\in\Sigma^*$, $M$ outputs $1^{\mu(x)}$ using $O(\log|x|)$ work-tape cells in $|x|^{O(1)}$ steps.
We introduce the notation $\FF=$``Lsel'' to refer to the logarithmic-space computable selectors.
Similarly, we restrict the use of transitions of 1moqqaf's so that their transition amplitudes are limited to  \emph{logarithmic-space approximable}\footnote{A real number $\alpha$ is \emph{logarithmic-space approximable} if there are a polynomial $p$ and a DTM with a read-only input tape, a rewritable work tape, and a write-only output tape that, on each input $1^n$,  produces a number, which is $2^{-p(n)}$-close to $\alpha$ in $n^{O(1)}$  time using only $O(\log{n})$ space. A complex number is \emph{logarithmic-space  approximable} if its real and imaginary parts are both logarithmic-space approximable.} \emph{complex numbers}.
To express such a special restriction imposed on 1moqqaf's, we use a new notation   $\FF=$``1moqqaf($\tilde{\complex})$'' in place of $\FF=$``1moqqaf''.

\begin{proposition}
$\aeqs{1moqqaf\mbox{$(\tilde{\complex})$}, Lsel,constsize,polygap} \subseteq \mathrm{ptime}\mbox{-}\bql/\poly$.
\end{proposition}

\begin{proof}
The following argument is motivated by \cite{DMV01,FGG+01} and heavily depends on Lemma \ref{time-evolution}.
Let $L$ be any language in $\aeqs{1moqqaf\mbox{$(\tilde{\complex})$}, Lsel, constsize, polygap}$ and take an AEQS $\SSS$ that recognizes $L$ with accuracy at least $\varepsilon\in(1/2,1]$.
Assume that $\SSS$ is of the form $(m,\Sigma,\varepsilon,  \{H^{(x)}_{ini}\}_{x\in\Sigma^*}, \{H^{(x)}_{fin}\}_{x\in\Sigma^*}, \{S^{(n)}_{acc}\}_{n\in\nat}, \{S^{(n)}_{rej}\}_{n\in\nat} )$. For each input $x\in\Sigma^*$, let us consider the minimal evolution time $T_x$ of the system.
By the adiabatic theorem (stated in Section \ref{sec:system-evolution}), there exits an appropriate polynomial $p$ such that $T_x\geq p(|x|)$ holds for any $x\in\Sigma^*$.

Since $m(x)$ is upper-bounded by a certain constant, we set $k=2^{m(x)}$.
Let $\qubit{\psi_g(0)}$ and $\qubit{\psi_g(T_x)}$ respectively express the ground states of $H^{(x)}_{ini}$ and $H^{(x)}_{fin}$.
Moreover, let $\qubit{\psi(t)}$ denote the quantum state defined in Section  \ref{sec:system-evolution} together with its associated unitary matrix $U_{T_x}$.

Assume that, with the help of an appropriate logarithmic-space computable selector $\mu$, $H^{(x)}_{fin}$ is generated by a certain 1moqqaf $\MM=\{M_n\}_{n\in\nat}$, where each $M_n$ has the form $(Q^{(n)},\Sigma,\{\cent,\dollar\}, \{A^{(n)}_{\sigma}\}_{\sigma\in\check{\Sigma}}, \Lambda^{(n)}_0)$ with $Q^{(n)}=\{q_1,q_2,\ldots,q_k\}$. Since $\mu$ is polynomially-bounded, we take a polynomial $r$ satisfying $\mu(x)\leq r(|x|)$ for all $x\in\Sigma^*$.
It thus possible to restrict the range of the parameter $n$ appearing in $M_n$ within the integer interval $[0,r(|x|)]_{\integer}$.
Since $A^{(n)}_{\sigma} = U^{(n)}_{\sigma} H(U^{(n)}_{\sigma})^{\dagger}$ for an appropriately chosen unitary operator $U^{(n)}_{\sigma}$, $H^{(x)}_{fin}$ can be  written as  $U^{(n)}_{\cent x\dollar}\Lambda^{(n)}_0 (U^{(n)}_{\cent x\dollar})^{\dagger}$, where $n=\mu(x)$.

We choose an appropriate unitary matrix $P_{n,x}$ that helps us diagonalize $\Lambda^{(n)}_0$ as $\sum_{q\in Q^{(n)}} \xi_{n,q} P_{n,x} \density{q}{q} P_{n,x}^{\dagger}$ for certain real numbers $\xi_{n,q}$'s; namely, $\Lambda^{(n)}_0 = P_{n,x} diag(\xi_{n,q_1},\xi_{n,q_2},\ldots,\xi_{n,q_k}) P_{n,x}^{\dagger}$.
The set $\{\xi_{n,q} \mid q\in Q\}$ consists of
all eigenvalues of $H^{(x)}_{fin}$.
To make the following argument readable, we further assume that $H^{(x)}_{ini}$ has eigenstates $\{\qubit{\hat{q}}\mid q\in Q^{(n)}\}$ and their associated eigenvalues $\{\lambda_{n,q}\mid q\in Q^{(n)}\}$; in other words, $H^{(x)}_{ini}$ is of the form $\sum_{q\in Q^{(n)}} \lambda_{n,q}\density{\hat{q}}{\hat{q}}$.

Since $H^{(x)}_{ini}$ and $H^{(x)}_{fin}$ have dimension $k$, without loss of generality, we assume that $\max\{\|H^{(x)}_{ini}\|, \|H^{(x)}_{fin}\|\} \leq ck$ for a certain constant $c>0$. Take a large enough number $R>0$, which will be determined later.
As in the proof of Lemma \ref{time-evolution}, for any index $j\in[0,R-1]_{\integer}$, we set
$\alpha_j=\frac{1}{\hbar}\frac{T_x}{R}\left( 1- \frac{2j+1}{2R} \right)$ and $\beta_j = \frac{1}{\hbar}\frac{T_x}{R}\frac{2j+1}{2R}$. We then define $U^{(n)}(j+1,j) = e^{-\imath\alpha_j H^{(x)}_{ini} - \imath \beta_j H^{(x)}_{fin}}$ and $V^{(n)}(j+1,j) = e^{-\imath\alpha_j H^{(x)}_{ini}}\cdot e^{ - \imath \beta_j H^{(x)}_{fin}}$. By Lemma \ref{time-evolution}, $U^{(n)}(j+1,j)$ can be approximated by $V^{(n)}(j+1,j)$ to within $O(\frac{c^2k^2 T_x^2}{R^2})$.
It thus follows that the matrix $U^{(n)}_{R} = U^{(n)}(R,R-1)\cdots U^{(n)}(2,1) U^{(n)}(1,0)$ can be approximated by $V^{(n)}_{R} = V^{(n)}(R,R-1)\cdots V^{(n)}(2,1) V^{(n)}(1,0)$ to within $O(\frac{ck^2 T_x^2}{R})$. Notice that $U_{T_x}$ coincides with $U^{(n)}_R$.

By the nature of a matrix exponential, it follows that
$e^{-\imath \alpha_j H^{(x)}_{ini}} = \sum_{q\in Q^{(n)}} e^{-\imath \alpha_j}\density{\hat{q}}{\hat{q}}$ and $e^{-\imath \beta_j H^{(x)}_{fin}} = \sum_{q\in Q^{(n)}} e^{-\imath \beta_j\xi_{n,q}} P_{n,x}\density{q}{q} (P_{n,x})^{\dagger}$.
By setting $\gamma_n= \frac{1}{\hbar}\frac{T_x}{R}\frac{1}{2R}$,
we obtain $\alpha_j = (2R-2j-1)\gamma_n$ and $\beta_j = (2j+1)\gamma_n$.
With the use of $\gamma_n$, we prepare two special phase shift operators,
$PS^{(x)}_{ini}  = \sum_{q\in Q^{(n)}} e^{-\imath \lambda_{n,q}\gamma_n}\density{\hat{q}}{\hat{q}}$ and $PS^{(x)}_{fin}= \sum_{q\in Q^{(n)}} e^{-\imath \xi_{n,q} \gamma_n} P_{n,x}\density{q}{q}(P_{n,x})^{\dagger}$, and set $Z^{(x)}(j+1,j)$ to be $W^{\otimes k}(PS^{(x)}_{ini})^{2R-2j-1}W^{\otimes k} (PS^{(x)}_{fin})^{2j+1}$.
We then conclude that $V^{(n)}(j+1,j)$ coincides with $Z^{(x)}(j+1,j)$. Therefore, the matrix $U_{T_x}$ is approximable by $Z^{(x)}(R,R-1)\cdots Z^{(x)}(2,1)$ to within $O(\frac{c^2k^2 T_x^2}{R})$. Note that, since $T_x \geq p(|x|)$, if we set $R = c^2k^2 T_x^3$, then we obtain $\frac{c^2k^2 T_x^2}{R} = O(\frac{1}{p(n)})$.

Let us consider the following quantum algorithm $\AAA$. In a preprocessing stage, we encode the information on good approximations of the values $\xi_{n,q}\gamma_n$'s for all $q\in Q^{(n)}$ and all $n\in[0,r(|x|)]_{\integer}$ into a single advice string of polynomial length in $|x|$.
Let $x$ be any input in $\Sigma^*$. We first compute the number $n=\mu(x)$ using logarithmic space.
We then retrieve the encoded approximation of $\{\xi_{n,q}\gamma_n \mid q\in Q\}$ from the advice bit by bit, prepare $PS^{(x)}_{ini}$ and $PS^{(x)}_{fin}$, and construct  $Z^{(x)}(\cdot,\cdot)$. We sequentially apply $Z^{(x)}(j+1,j)$ by changing $j$ from $0$ to $R-1$ and obtain an approximation of $U_{T_x}\qubit{\psi_g(0)}$, which further approximates $\qubit{\psi_g(T_x)}$.
By measuring the obtained quantum state in the bases of $QS^{(m(x))}_{acc}$ and $QS^{(m(x))}_{rej}$, we can determine the acceptance and the rejection of the input $x$.
Note that $\qubit{\psi_g(T_x)}$ is $\sqrt{2}(1-\varepsilon)$-close to a certain quantum state, say, $\qubit{\phi}$  in either $QS^{(m(x))}_{acc}$ or  $QS^{(m(x))}_{rej}$.
Therefore, we correctly accept or reject $x$ with probability at least $|\measure{\psi_g(T_x)}{\phi}|^2$. Since $\|\qubit{\psi_g(T_x)}-\qubit{\phi}\|_2^2 = 2(1-|\measure{\psi_g(T_x)}{\phi}|) \leq 2(1-\varepsilon)^2$, the value  $|\measure{\psi_g(T_x)}{\phi}|^2$ is lower-bounded by $(1-(1-\varepsilon)^2)^2$, which is larger than $\frac{1}{2}$. Since $\SSS$ solves $L$ with high accuracy, our algorithm $\AAA$ correctly recognizes $L$ as well.

The space usage of the above calculation is upper-bounded by a suitable logarithmic function in $n$ since the multiplication can be carried out using logarithmic space.
\end{proof}

\section{A Brief Discussion on Future Research}

\emph{Quantum annealing} has been sought as a quantum analogue of \emph{simulated annealing} (or thermal annealing) and it has been implemented to solve computational problems by quantum-physical systems, which evolve according to an appropriately defined Schr\"{o}dinger equation.
This exposition has sought for discrete-time adiabatic evolutions of such quantum systems and it has proposed a practical model of adiabatic quantum computation under the name of \emph{adiabatic quantum evolutionary system} (AEQS). An actual evolution of an AEQS is dictated by two essential Hamiltonians of the AEQS, and therefore we need to develop a simple, practically easy method of constructing these Hamiltonians.

The past literature has discussed a close connection  between adiabatic quantum computing and quantum circuits as well as quantum Turing machines. In sharp contrast, we have taken a bold step toward establishing a new connection to \emph{quantum automata theory}.
Because of their quite simple features, quantum finite automata are more suited to describe the behaviors of Hamiltonians of AEQSs and thus to solve simple decision problems (or equivalently, languages).
Throughout this exposition, we have demonstrated that numerous variants of quantum finite automata proposed in the past literature help us measure the computational complexity of languages on AEQSs.

Nonetheless, our preliminary results of this exposition are not fully satisfactory and this fact motivates us to look for further investigations in a hope of promoting a far better understanding of the essence of adiabatic evolutions of underlying quantum systems. Toward full-fledged future research on AEQSs, we wish to present a short list of challenging open problems. We strongly hope that this work will pave a road to a fruitful research field of \emph{algorithmic adiabatic quantum computing}.

\begin{itemize}
  \setlength{\topsep}{-2mm}%
  \setlength{\itemsep}{1mm}%
  \setlength{\parskip}{0cm}%

\item {\bf [Searching for Better Condition Sets]}
In Section \ref{sec:behaviors}, we have presented various condition sets $\FF$, which are sufficient for solving given target languages, such as $Equal$ and $SymCoin$. However, we suspect that the choices of these condition sets $\FF$ might not be the best ones for most of the languages.
For example, we have obtained $Equal \in\aeqs{1moqqaf,logsize,0\mbox{-}energy}$. Is it possible to use ``constsize'' instead of ``logsize'' even if we need to remove the condition of ``$0$-energy''?
It appears to be challenging, in general, to determine the precise complexity of AEQSs for various sets $\FF$ of conditions.
For example, what is the exact computational complexity of $\aeqs{1qqaf,logsize,polygap,0\mbox{-}energy}$?

\item {\bf [Proving Closure Properties]}
In automata theory, various closure properties have been discussed for numerous language families. It is important to find useful closure properties for conditional AEQSs as well.
In Section \ref{sec:struct-properties}, we have briefly touched an initial investigation on such closure properties; however, it is still far from settling the question of whether numerous  closure properties hold for even simple conditional AEQSs.

\item {\bf [Expanding Machine Models]}
In Sections \ref{sec:behaviors}--\ref{sec:simulation}, we have used various quantum quasi-automata families, including 1moqqaf, 1qqaf, linear-time 1.5qqaf, and polynomial-time 2qqaf, in order to construct Hamiltonians of several conditional AEQSs. These machine families are introduced directly from well-known models of 1moqfa's, 1qfa's, 1.5qfa's, and 2qfa's with slight modifications. In the past literature,  many more machine models have been proposed to expand the early models of quantum finite automata. It also seems desirable to make  such machine models fit into  our AEQS platform to enhance the efficiency of the construction of the Hamiltonians of AEQSs.

\item {\bf [Separations among AEQSs]}
In Section \ref{sec:simulation}, we have demonstrated simple lower-bounds on the complexity of four language families in terms of conditional AEQSs.
In Section \ref{sec:complexity-AEQS}, on the contrary, we have briefly discussed an upper bound on the computational complexity of certain conditional AEQSs.
Since the computational complexity of conditional AEQSs hinges at the choice of their underlying conditional sets $\FF$, it may be beneficial to ``compare'' the power of two condition sets $\FF_1$ and $\FF_2$ by showing that $\aeqs{\FF_2}$ properly contains $\aeqs{\FF_1}$; namely, $\aeqs{\FF_1}\subsetneqq \aeqs{\FF_2}$.

\item {\bf [Allowing Low Accuracy]}
Another (theoretically important) consideration of AEQSs is to allow  \emph{low accuracy} in place of high accuracy when recognizing  various languages.
More precisely, a \emph{lowly-accurate AEQS} accepts an input with accuracy more than $1/2$ and rejects the input with accuracy at least $1/2$. It seems natural to ask what languages we can solve on AEQSs with low accuracy.  To emphasize the use of low accuracy, we wish to use a special notation of $\mathrm{la}\mbox{-}\aeqs{\FF}$ for the low accuracy version of $\aeqs{\FF}$. Following an idea of \cite{AW02}, for instance, we can show that the language $Equal$ belongs to  $\mathrm{la}\mbox{-}\aeqs{1qqaf,constsize}$. It is, however, not entirely clear if there is a close relationship between accuracy rate and system size.

\item {\bf [Application to Search Problems]}
Our main target of this exposition has been decision problems (or equivalently, languages). It is, however, possible to expand the scope of our research setting to \emph{search problems} (as well as \emph{optimization problems}, \emph{counting problems}, etc.) within the current framework of AEQSs based on quantum quasi-automata families.
For this purpose, we first need to replace an acceptance/rejection criteria pair $(S^{(n)}_{acc},S^{(n)}_{rej})$ by a solution set $Sol^{(x)}$ of a target search problem for an input $x$.
We then say that an AEQS $\SSS$ \emph{solves} a search problem $P$ if, for any ``accessible'' instance $x$ given to the problem $P$, the ground state of $H^{(x)}_{fin}$ is $\sqrt{2}(1-\varepsilon)$-close to a certain quantum state in $QSol^{(x)}$, which is the Hilbert space spanned by $\{\qubit{u}\mid u\in Sol^{(x)}\}$. How can we develop a coherent theory based on this new setting?

\item {\bf [Based on a Uniform Model]}
Our 1qqaf's and 2qqaf's are, by their definitions, constructed in a nonuniform fashion; therefore, Hamiltonians generated by these machines are also nonuniform in nature.
Let $\TT$ denote a machine ``type'', such as logarithmic-space Turing machines. We say that a family $\GG=\{M_n\}_{n\in\nat}$ of quantum quasi-automata is \emph{$\TT$-uniform} if there exists a machine $N$ of type $\TT$ with an extra write-once output tape such that, for any input $x\in\Sigma^*$, $N$ on $x$ produces a ``description'' of $M_n$ on its output tape. The $\TT$-uniformity of the family $\GG$ leads to the uniformity of Hamiltonians that are generated by $\GG$. Since the uniformity notion is essential in computational complexity theory, it is fruitful to investigate a unform setting of the results stated in this exposition.

\item {\bf [Use of the Quantum Ising Model]}
In this exposition, we have used quantum quasi-automata families to generate Hamiltonians of AEQSs. In the literature (e.g., \cite{SMTC02}), Hamiltonians have also been described by a \emph{quantum Ising model}. The construction of final Hamiltonians in \cite{FGGS00} is in fact founded on the quantum Ising model. In certain cases, the use of such a model makes it easier to construct the desired Hamiltonians. It is therefore interesting to see how we apply quantum finite automata to introduce a simple, practical quantum Ising model.
\end{itemize}


\let\oldbibliography\thebibliography
\renewcommand{\thebibliography}[1]{%
  \oldbibliography{#1}%
  \setlength{\itemsep}{-2pt}%
}
\bibliographystyle{alpha}

\end{document}